\date{}
\title{ON STRICTLY CHORDALITY-$k$ GRAPHS}
\author{ S.Dhanalakshmi and N.Sadagopan } 
\institute{Indian Institute of Information Technology, Design and Manufacturing, Kancheepuram, Chennai, India. \\
\email{$\{mat12d001,sadagopan\}@iiitdm.ac.in$}}
\begin{document}
\maketitle
\pagenumbering{arabic}
\pagestyle{plain}

\begin{abstract}
Strictly Chordality-$k$ graphs ($SC_k$ graphs) are graphs which are either cycle free or every induced cycle is exactly $k$, for some fixed $k, k \geq3$. Note that $k = 3$ and $k = 4$ are precisely the Chordal graphs and Chordal Bipartite graphs, respectively. In this paper, we initiate a structural and an algorithmic study of $SC_k, k \geq 5$ graphs. 
\\ \\
\textbf{Keywords:} Girth = Chordality = $k$, Minimal vertex separator, Treewidth.
\end{abstract}

\section{Introduction}
The study of graphs with forbidden graph structures has attracted researchers from the field of mathematics and theory of computing. The popular ones are chordal and chordal bipartite graphs. Interestingly, these graphs find applications in computer architecture to factorize sparse matrix \cite{app1}, solving indefinite linear equations \cite{app2} and the study of linear programming \cite{app3}. A graph is chordal if every cycle of length at least 4 has a chord. Chordal graphs were introduced by Hajnal and Suranyi in 1958 \cite{Hajnal}. Dirac \cite{dirac} presented a structural characterization of chordal graphs with respect to minimal vertex separators and showed that chordal graphs are precisely the graph class in which every minimal vertex separator is a clique. A vertex is a \emph{simplicial vertex} if its neighborhood induces a clique. Interestingly, Dirac observed that every chordal graph has a simplicial vertex. Further, Fulkerson and Gross \cite{fulkerson} showed that all chordal graphs have a simplicial ordering (Perfect Elimination Ordering). On the time complexity front, chordal graphs can be recognized in polynomial time \cite{fulkerson,tarjan}.

Like chordal graphs, a related graph class, namely chordal bipartite graph received a considerable attention in the literature. A bipartite graph is chordal bipartite if every cycle of length at least 6 has a chord in it. Similar to chordal graphs, Golumbic and Goss \cite{GolumbicGoss} showed that a graph is chordal bipartite if and only if every minimal edge separator is a complete bipartite graph. Further, they can be recognized in polynomial time due to the existence of perfect edge elimination ordering \cite{GolumbicGoss}.

Both chordal and chordal bipartite graphs have received a good attention in the last four decades due to their nice structural and algorithmic characterizations. We also highlight that many classical combinatorial problems such as Vertex cover \cite{tarjan,gavril}, Clique cover \cite{corneil,Hoang}, Independent set \cite{gavril}, Treewidth \cite{bodlaender,kloks} are polynomial-time solvable when the input is restricted to chordal and chordal bipartite graphs, which are NP-Complete on general graphs. In some sense, these two graphs help to identify the gap between polynomial-time solvable input instances and the input instances that cause NP-Hardness. Other notable combinatorial problems such as Dominating-set \cite{booth,mueller}, Hamiltonian path \cite{colbourn,muellerh} remain NP-Complete on chordal and chordal bipartite graphs. It is important to highlight that chordal and chordal bipartite graphs are well studied graphs in the literature as it is clearly evident from some of the recent results on Join colorings \cite{join}, Contractibility problems \cite{contractibility}, Strong Chromatic index \cite{strong}, Enumeration of minimal dominating sets \cite{enumeration}, Reconfiguration graphs for vertex colourings \cite{reconfiguration} restricted to  chordal and chordal bipartite graphs.


A relook on the definition reveals that chordal graphs (chordal bipartite graphs) are graphs which are either cycle free or every induced cycle is $C_3$ (induced cycle is $C_4$ for chordal bipartite graphs). It is natural to ask, what is the graph class which are either cycle free or every induced cycle is $C_5$ and we call them as Strictly Chordality-5 graphs ($SC_5$ graphs). Interestingly, these graphs have the additional property that the girth (the length of the shortest cycle) equals the chordality (the length of the longest induced cycle). We shall explore this question in a larger dimension and initiate the study of \emph{Strictly Chordality-$k$ graphs} ($SC_k$ graphs), girth = chordality = $k$, for some $k \geq 3$. Thus, in this paper, we shall investigate a structural and an algorithmic study of $SC_k, k \geq 5$ graphs and we believe that this investigation has not been done in the literature.\\
\textbf{Our Contributions:} In the context of strictly chordality-$k$ graphs, $k \geq 5$, we show the following results: 
\begin{itemize}
\item[1.] Every minimal vertex separator in $SC_{2k+3}$ graphs, $k\geq 1$, is of cardinality at most two.
\item[2.] Every minimal vertex separator in $SC_{2k+4}$ graphs, $k\geq 1$, is of cardinality at most $s$, where $s$ is the size of the maximum \emph{cage}. 
\item[3.] We show that in every $SC_k$ graphs, there exists a special vertex or special $C_k$. Further, we show a special ordering among the vertices and cycles of $SC_k$.
\item[4.] Recognizing $SC_k$ graph can be done in polynomial-time.
\item[5.] We show that every $SC_k$ graphs, $k \geq 5$, is hamiltonian if and only if it is $2-$connected, $3$-$C_k$ pyramid free and $3$-cage free.
\item[6.] Every $SC_k$ graph, $k \geq 5$ is 2-colorable if $k$ is even and 3-colorable if $k$ is odd.
\item[7.] We establish that tree-width of $SC_k$ graphs is at most two.
\item[8.] We show that minimum fill-in problem is polynomial-time solvable.
\end{itemize}
\textbf{This paper is organized as follows:} We present graph preliminaries in Section 2. Structural observations on $SC_k, k \geq 5$ graphs based on minimal vertex separators are addressed in Section 3. We characterize $SC_k$ graphs by establishing an ordering in Section 4.  The algorithmic results like testing a graph, coloring, hamiltonicity, treewidth and minimum fill-in for $SC_k, k \geq 5$ graphs are presented in Section 5.
\section{Graph Preliminaries}

Notations used in this paper are as per \cite{golumbicbook,dbwest}. The graphs considered in this paper are simple, undirected, connected and unweighted. Let $G$ be a simple connected graph with the non-empty vertex set $V(G)$ and the edge set $E(G)$= \{\{$u,v$\} $\vert$ $u,v \in V(G)$ and $u$ is adjacent to $v$ in $G$ and $u \neq v$\}. The $neighborhood$ of a vertex $v$ of $G$, $N_G$($v$), is the set of vertices adjacent to $v$ in $G$. The degree of the vertex $v$ is $d_G(v) = \vert N_G(v) \vert$. $\delta(G)$ and $\Delta (G)$ denotes the minimum and maximum degree of a graph $G$, respectively. A graph $G$ is said to be $k$-$regular$ if $k = \delta (G) = \Delta (G)$. The graph $M$ is called a $subgraph$ of $G$ if $V(M)$ $\subseteq$ $V(G)$ and $E(M)\subseteq E(G)$. The subgraph $M$ of a graph $G$ is said to be $induced$ $subgraph$, if for every pair of vertices $u$ and $v$ of $M$, \{$u,v$\} $\in$ $E(M)$ if and only if \{$u,v$\} $\in$ $E(G)$ and it is denoted by $[M]$. $P_{uv} = (u=u_1, u_2, \ldots, u_k=v)$ is a \emph{path} defined on $V(P_{uv})=\{u=u_1, u_2, \ldots, u_k=v\}$ such that $E(P_{uv}) = \{\{u_i, u_{i+1}\}\vert \{u_i, u_{i+1}\} \in E(G), 1 \leq i \leq k-1\}$. For simplicity, we use $\vert P_{uv} \vert$ to refer to  $\vert V(P_{uv}) \vert$. The set $V(P_{uv})\backslash \{u,v\}$ denotes the \emph{internal vertices} of the path $P_{uv}$. $P_n$ denotes the path on $n$ vertices. A \emph{cycle} $C$ on $n$-vertices is denoted as $C_n$, where $V(C) = \{x_1, x_2, \ldots, x_n\}$ and $E(C) = \{\{x_1, x_2\}, \{x_2,x_3\}, \ldots, \{x_{n-1},x_n\}, \{x_n,x_1\}\}$. An $induced$ $cycle$ is a cycle that is an induced subgraph of $G$. A graph $G$ is said to be \emph{cycle-free} if there is no induced cycle in $G$. A graph $G$ is said to be $connected$ if every pair of vertices in $G$ has a path and if a graph is disconnected, it can be divided into disjoint connected $components$ $G_1, G_2, \ldots, G_k$, $k \geq 2$, where $V(G_i)$ denotes the set of vertices in the component $G_i$. Let $S$ be a non-empty subset of $V(G)$ and let $G\backslash S$ denotes the induced subgraph on $V(G)\backslash S$. The set $S$ is said to be an \emph{independent set} if every pair of vertices of $S$ is non-adjacent. For $\{u,v\} \notin E(G)$, a subset $R \subset V(G)$ is a $(u,v)$-vertex separator if $u$ and $v$ lies in different connected components of $G\backslash R$. $R$ is a minimal $(u,v)$-vertex separator if there does not exist a $(u,v)$-vertex separator $R'\subset R$. A vertex $v$ of a connected graph $G$ is said to be a \emph{cut vertex}, if $G\backslash \{v\}$ is a disconnected graph. An edge $e = \{u,v\}$ of a connected graph $G$ is said to be a \emph{cut-edge}, if the deletion of an edge $e$ from $G$ disconnects the graph $G$.

\section{Structural Observations on Strictly Chordality-$k$ Graphs}
Recall that, a graph $G$ is said to be a strictly chordality-$k$ graph, $SC_k$, if every induced cycle is of length exactly $k$ or $G$ is cycle-free. In this section, we present some structural observations on $SC_k, k \geq 5$, graphs with respect to minimal vertex separators.

\begin{lemma}
\label{intersection}
Let $G$ be a connected $SC_k, k\geq 5$, graph. For any two induced cycles $S_i$ and $S_j$ in $G$, one of the following is true.
\begin{itemize}
\item[(i)] $\vert V(S_i) \cap V(S_j) \vert \leq 1$
\item[(ii)] $\vert E(S_i) \cap E(S_j) \vert \leq 1$
\item[(iii)] $\mid E(S_i) \cap E(S_j) \mid = \frac{k}{2}$ if $k$ is even
\end{itemize}
\end{lemma}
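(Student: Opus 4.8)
The plan is to assume that neither (i) nor (ii) holds, i.e.\ that $|V(S_i)\cap V(S_j)|\ge 2$ and $\ell:=|E(S_i)\cap E(S_j)|\ge 2$, and to deduce (iii). Two preliminary facts will be used throughout. First, since $G$ is $SC_k$ with $k\ge 5$, every induced cycle of $G$ has exactly $k$ edges, so $S_i$ and $S_j$ are both copies of $C_k$ (we of course assume $S_i\ne S_j$). Second, every cycle of $G$ has at least $k$ edges: any cycle of $G$ contains a chordless, hence induced, cycle, which uses no more vertices and has exactly $k$ of them.

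First I would analyse the common subgraph $H:=\bigl(V(S_i)\cap V(S_j),\,E(S_i)\cap E(S_j)\bigr)$. Being a subgraph of the cycle $S_i$, $H$ is a vertex-disjoint union of $m\ge 1$ paths $P_1,\dots,P_m$ (some possibly single vertices), and deleting the edges of $H$ from $S_i$ leaves $m$ nonempty ``gap'' paths $A_1,\dots,A_m$ with $\sum_t |E(A_t)|=k-\ell$; moreover each $A_t$ has both endpoints in $V(H)\subseteq V(S_j)$ while none of its internal vertices lies in $V(S_j)$ (such a vertex would have to lie in $V(H)$). The crucial step is then that each gap is long: taking a shorter arc $Q_t$ of $S_j$ between the two (distinct) endpoints of $A_t$, the paths $A_t$ and $Q_t$ meet only in those endpoints, so $A_t\cup Q_t$ is a cycle of $G$; hence $|E(A_t)|+|E(Q_t)|\ge k$ with $|E(Q_t)|\le\lfloor k/2\rfloor$, which gives $|E(A_t)|\ge\lceil k/2\rceil$. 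Summing, $k-\ell=\sum_t|E(A_t)|\ge m\lceil k/2\rceil$, so $m\ge 2$ would force $k-\ell\ge 2\lceil k/2\rceil\ge k$, contradicting $\ell\ge 2$. Therefore $m=1$.

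It then remains to treat $m=1$, where $H$ is a single path $P$ on $\ell\ge 2$ edges with endpoints $u,v$, and $S_i=P\cup A$, $S_j=P\cup B$ with $A,B$ paths from $u$ to $v$ on $k-\ell$ edges whose internal vertices lie outside $V(S_j)$, resp.\ $V(S_i)$. Using that $S_i$ and $S_j$ are induced, one checks that the only edges of $G$ with both ends in $V(S_i)\cup V(S_j)$ other than the edges of $S_i$ and $S_j$ join an internal vertex $a$ of $A$ to an internal vertex $b$ of $B$. If there is no such edge, then $A\cup B$ is a chordless cycle of $G$, hence a $C_k$, so $2(k-\ell)=k$: $k$ is even and $\ell=k/2$, which is exactly (iii). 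If some edge $ab$ does exist, the two cycles obtained from $ab$ together with the $u$-side, resp.\ $v$-side, subpaths of $A$ and $B$ have lengths summing to $2(k-\ell)+2$, and each has at least $k$ edges, so $2(k-\ell)+2\ge 2k$, forcing $\ell\le 1$ --- a contradiction; thus this sub-case cannot occur. This establishes (iii).

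I expect the main obstacle to be the gap-length argument and the elimination of the case $m\ge 2$ that rests on it: one must be sure the gap paths are genuinely nonempty (this is where $S_i\ne S_j$ enters), that their interiors really avoid $V(S_j)$ so that each $A_t\cup Q_t$ is a simple cycle rather than a theta subgraph, and that all of this remains valid when some of the $P_s$ are isolated vertices. Once $m=1$ is secured, the remainder is mostly bookkeeping, the one point requiring care being the exhaustive determination of which edges of $G$ can occur on $V(S_i)\cup V(S_j)$, so that the dichotomy ``$A\cup B$ chordless'' versus ``a chord $ab$ present'' is complete.
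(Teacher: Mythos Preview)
Your proof is correct and follows the same strategy as the paper's: reduce to the case where the common subgraph is a single path, then study the cycle $A\cup B$ formed by the two complementary arcs and its possible chords. Your argument is actually more complete than the paper's, which simply asserts without justification that the intersection must be a path $P_l$; your gap-length argument establishing $m=1$ fills that gap, and your length-sum inequality $2(k-\ell)+2\ge 2k$ for the chord case is cleaner than the paper's explicit index computation.
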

\begin{proof}
On the contrary, assume that there exist induced cycles $S_i = (x_1, x_2, \ldots, x_k)$ and $S_j = (y_1,y_2, \ldots, y_k)$ such that $\vert V(S_i) \cap V(S_j) \vert \geq 2$ and $\vert E(S_i) \cap E(S_j) \vert \geq 2$ and, $k$ is even and $\mid E(S_i) \cap E(S_j) \mid \neq \frac{k}{2}$. The only possible cycles satisfying these condition's are; If $k$ is odd, then for every $3 \leq l \leq k-1$ and if $k$ is even, then for every $l \neq \frac{k}{2}+1$ and $l \in \{3,\ldots, k-1\}$,  $\vert V(S_i) \cap V(S_j) \vert = l$ and $\vert E(S_i) \cap E(S_j) \vert = l-1$.  i.e., there exist at least two cycles $S_i$ and $S_j$ in $G$ such that both contains a $P_l = (x_1, x_k, x_{k-1}, \ldots, x_{k-l+2}) = (y_1, y_k, y_{k-1}, \ldots, y_{k-l+2})$ in common (\emph{see Figure  \ref{sisj}}). Let $S$ be the set of internal vertices of $P_l$. The graph 
\begin{figure}[h]
\centering
\includegraphics[scale=0.35]{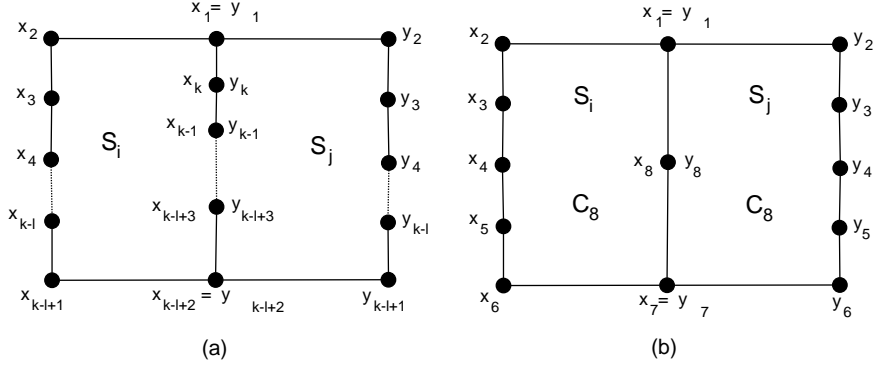}
\caption{(a) An illustration when $\vert E(S_i) \cap E(S_j) \vert = l-1$, where $l \neq \frac{k}{2}+1$ and $l \in \{3,\ldots, k-1\}$ (b) An example when $k=8$ and $l=3$}
\label{sisj}
\end{figure}
\noindent $G \backslash S$ induces $C_{2(k-l)+2}$. Note that, the cycle is induced because any chord from $x_i$ to $y_j$, $i,j \in \{2,3,\ldots,k-l+1\}$ induces either $C_{2k-2l-i-j+5}$ or $C_{i+j-1}$, for any $l \geq 3$ and $k \geq 5$. Since $4 \leq i+j \leq 2k-2l+2$, neither $C_{2k-2l-i-j+5}$ nor $C_{i+j-1}$ is $C_k$,  for any $l \geq 3$ and $k \geq 5$, which contradicts the definition of $SC_k$ graphs and hence, the lemma follows.$\hfill \qed$
\end{proof}
                                                                                                                                                                               
                                                                                                                                                                               \noindent Note that the induced cycles $S_i$ and $S_j$ in an $SC_k$ graph $G$ is said to have \emph{vertex intersection} if $\vert V(S_i) \cap V(S_j) \vert = 1$ and \emph{edge intersection} if $\vert E(S_i) \cap E(S_j) \vert = 1$.
                                                                                                                                                                                                                                                                                                                                                               \begin{corollary}                                                                                                                                                                                                                                                                                                                                                                                                                                                                                                                                       \label{intersectionseparate} Let $G$ be a connected $SC_k$ graph, $k \geq 5$. For any two induced cycles $S_i$ and $S_j$ in $G$,  either $\vert V(S_i) \cap V(S_j) \vert \leq 1$ or $\vert E(S_i) \cap E(S_j) \vert \leq 1$, if $k$ is odd and either $\vert V(S_i) \cap V(S_j) \vert \leq 1$ or $\vert E(S_i) \cap E(S_j) \vert$ is $0$ or $1$ or $\frac{k}{2}$, if $k$ is even.                                                                                                                                                                                   \end{corollary}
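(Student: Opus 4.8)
The statement is essentially a restatement of Lemma~\ref{intersection} in contrapositive-friendly form, separating the odd and even cases, so the plan is to derive it directly from that lemma rather than to rework the argument from scratch. First I would recall exactly what Lemma~\ref{intersection} guarantees: for any two induced cycles $S_i$ and $S_j$ in $G$, at least one of (i) $\vert V(S_i)\cap V(S_j)\vert \le 1$, (ii) $\vert E(S_i)\cap E(S_j)\vert \le 1$, or (iii) $\vert E(S_i)\cap E(S_j)\vert = k/2$ (when $k$ is even) must hold. The corollary merely observes that when $k$ is odd, option (iii) is vacuous, so (i) or (ii) must hold; and when $k$ is even, the three options collapse to the single disjunction ``$\vert V(S_i)\cap V(S_j)\vert \le 1$ or $\vert E(S_i)\cap E(S_j)\vert \in \{0,1,k/2\}$,'' since $\vert E(S_i)\cap E(S_j)\vert \le 1$ is the same as $\vert E(S_i)\cap E(S_j)\vert \in \{0,1\}$.

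The one genuine point to verify — and the only place where any work is needed — is the odd case, where I must rule out option (iii) entirely, not just note that the hypothesis ``$k$ is even'' fails. Here I would argue that $k/2$ is not an integer when $k$ is odd, while $\vert E(S_i)\cap E(S_j)\vert$ is by definition a nonnegative integer, so the equality $\vert E(S_i)\cap E(S_j)\vert = k/2$ is impossible; hence by Lemma~\ref{intersection}, (i) or (ii) holds, which is exactly the asserted dichotomy for odd $k$. For the even case, I would simply unfold the three alternatives of Lemma~\ref{intersection}: alternative (ii) says $\vert E(S_i)\cap E(S_j)\vert$ is $0$ or $1$, and alternative (iii) says it equals $k/2$, and these two together with alternative (i) give precisely the claimed statement.

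I do not expect any real obstacle: the corollary is a reformulation, and the proof is a two-line case split. If anything, the only subtlety worth a sentence is making explicit that ``$\vert E(S_i)\cap E(S_j)\vert \le 1$'' and ``$\vert E(S_i)\cap E(S_j)\vert$ is $0$ or $1$'' are literally the same condition, so that the even-$k$ phrasing in the corollary is faithful to Lemma~\ref{intersection}. The whole argument can be written as: ``This is immediate from Lemma~\ref{intersection}. If $k$ is odd, then $k/2 \notin \mathbb{Z}$, so alternative (iii) cannot occur, leaving (i) or (ii). If $k$ is even, alternatives (ii) and (iii) say exactly that $\vert E(S_i)\cap E(S_j)\vert \in \{0,1,k/2\}$, and together with (i) this yields the claim. $\hfill\qed$''
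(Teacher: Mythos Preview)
Your proposal is correct and matches the paper's approach exactly: the paper's proof is the single line ``Trivially follows from \emph{Lemma \ref{intersection}},'' and your plan simply unpacks what that triviality amounts to in the odd and even cases. The extra sentence you include about $k/2 \notin \mathbb{Z}$ when $k$ is odd is a reasonable clarification but not something the paper bothers to spell out.
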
                                                                                                                                                                                                                                                                                                                                                            \begin{proof}                                                                                                                                                                              Trivially follows from \emph{Lemma \ref{intersection}}. $\hfill \qed$                                                                                                                                                                                  \end{proof}

This corollary acts as a powerful tool to determine the maximum size of the minimal vertex separator in an $SC_k$ graph as well as the structure of minimal vertex separators in $SC_k$ graphs which we shall present next.                                                                                                                                                                                 
                                                                                                                                                                                 
\begin{theorem}
\label{mvssc2k+1}
 Let $G$ be a connected $SC_{k}$ graph, $k = 2m+3, m \geq 1$. The cardinality of every minimal vertex separator of $G$ is at most 2.
\end{theorem}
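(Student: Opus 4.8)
The plan is to take a minimal vertex separator $R$ of $G$, pick two vertices $a, b$ separated by $R$ into distinct components $C_a, C_b$ of $G \setminus R$, and exploit the fact (standard for minimal separators) that every vertex of $R$ has a neighbour in $C_a$ and a neighbour in $C_b$. Hence for each $r \in R$ there is an induced path through $C_a$ from $r$ to some $r' \in R$ and an induced path through $C_b$ back, and these paths can be chosen so that gluing them produces an induced cycle of $G$; since $G$ is $SC_k$, that cycle has length exactly $k = 2m+3$. So $R$ being large forces many long induced cycles sharing pieces with each other, and the intersection restrictions of Corollary~\ref{intersectionseparate} (for $k$ odd: any two induced cycles meet in at most one vertex or at most one edge) must then be violated.

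First I would set up the basic machinery: show that if $|R| \geq 3$, say $r_1, r_2, r_3 \in R$, then for each pair $r_i, r_j$ one can route a shortest path $P^a_{ij}$ inside $C_a \cup \{r_i,r_j\}$ and a shortest path $P^b_{ij}$ inside $C_b \cup \{r_i,r_j\}$; concatenating gives a cycle, and by taking shortest (chordless) routes and, if necessary, shrinking to an induced sub-cycle, we get an induced cycle $S_{ij}$ of length exactly $k$ containing $r_i$ and $r_j$. The key combinatorial point is to compare two such cycles, e.g. $S_{12}$ and $S_{13}$: they share the whole $C_a$-side path from $r_1$ (or at least a sub-path through $C_a$ incident to $r_1$), so $|V(S_{12}) \cap V(S_{13})| \geq 2$ and $|E(S_{12}) \cap E(S_{13})| \geq 1$; one then argues the shared portion actually contains at least two edges (because each side-path has length $\geq 2$, as $r_i$ and $r_j$ are non-adjacent across the separator and lie in $R$, and the detour through a component uses at least one internal vertex). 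That makes $|V \cap V| \geq 3$ and $|E \cap E| \geq 2$, contradicting Corollary~\ref{intersectionseparate} for odd $k$. The case $k$ even is explicitly excluded here since $k = 2m+3$ is odd, so we only need the "at most one vertex or at most one edge" dichotomy.

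I expect the main obstacle to be the careful argument that the two cycles $S_{12}$ and $S_{13}$ can be chosen to genuinely share at least two edges (a path of length $\geq 2$), rather than merely passing through $r_1$ and then immediately diverging — i.e. controlling the structure near $r_1$ and near the common component so that the overlap is a nontrivial path. One clean way is: fix a single shortest induced path $Q_a$ from $C_a$-side "anchored" at $r_1$ (say from $r_1$ to its nearest return vertex in $R$, all internal vertices in $C_a$), reuse this same $Q_a$ in building every $S_{1j}$, and vary only the $C_b$-side; then all the $S_{1j}$ share $Q_a$, which has at least two vertices besides $r_1$ and hence at least two edges, immediately violating the corollary once we have two distinct such cycles (which exist as soon as $|R| \geq 3$, giving distinct return/partner structures on the $b$-side). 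A secondary technical point — ensuring the glued closed walk is actually an induced cycle and not just a closed walk with chords — is handled exactly as in Lemma~\ref{intersection}: a chord would create a shorter induced cycle of length $\neq k$, again contradicting $SC_k$. Finally, the case $|R| = 2$ or $|R| \leq 1$ needs nothing, so the bound $|R| \leq 2$ follows.
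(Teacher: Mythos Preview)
Your overall strategy matches the paper's: assume $|R|\ge 3$, pick three separator vertices, route induced paths through each of the two components, assemble induced $C_k$'s, and then invoke Corollary~\ref{intersectionseparate} (for odd $k$, two induced cycles cannot share a $P_n$ with $n\ge 3$) to reach a contradiction. The paper does exactly this, with $t,u,v$ playing the role of your $r_1,r_2,r_3$.

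Where your proposal has a genuine gap is the ``reuse $Q_a$'' step. You fix an induced path $Q_a$ from $r_1$ through $C_a$ to its nearest return vertex $r'\in R$, and then claim that by varying the $C_b$-side you obtain \emph{several distinct} induced $C_k$'s all containing $Q_a$. But $Q_a$ already pins down both of its endpoints, $r_1$ and $r'$; to close up into a cycle you must produce an induced path from $r'$ back to $r_1$ avoiding the interior of $Q_a$. The existence of a third separator vertex $r_3$ does not, by itself, furnish two \emph{different} such return paths: $r_3$ has neighbours in $C_b$, but that does not force a second internally disjoint $r'$--$r_1$ path through $C_b$. So the assertion that ``distinct return/partner structures on the $b$-side'' exist as soon as $|R|\ge 3$ is unsupported, and with it the claim that two induced $C_k$'s share all of $Q_a$. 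A second, smaller gap: you assume $r_i$ and $r_j$ are non-adjacent (to force each side-path to have length $\ge 2$), but nothing prevents vertices of a minimal separator from being adjacent; the paper explicitly splits into cases according to which of $\{t,u\},\{u,v\},\{t,v\}$ are edges.

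The paper resolves both issues not by a reuse trick but by a fairly heavy case analysis (its Table~\ref{table:table1}): it builds the two $C_k$'s $(P_{tu}^1,P_{tu}^2)$ and $(P_{uv}^1,P_{uv}^2)$ around the middle vertex $u$, then examines whether the $G_1$-side neighbours of $u$ on these paths coincide ($b=c$ versus $b\neq c$) and likewise on the $G_2$-side ($x=y$ versus $x\neq y$), further splitting on all possible chords from $t$ and $v$ into the various sub-paths. In each sub-case it either exhibits an induced cycle of length $>k$ directly, or exhibits two induced $C_k$'s whose intersection is a $P_n$ with $n\ge 3$. Your plan is on the right track, but you will need something closer to this case split (or a genuinely different device for manufacturing the overlap) to make the argument go through.
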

\begin{proof} On the contrary, assume that there exist a minimal vertex separator $S$ such that $\vert S \vert$ = $n$, $n \geq$ 3. The graph $G\backslash S$ is a disconnected graph with distinct connected components $G_1, G_2, \ldots, G_l$, $l \geq 2$. 
Consider the graph $H$ induced on the set $V(H) = V(G_1) \cup V(G_2) \cup S$. Throughout this proof, when we refer to $P_{xy}^i$, we mean the shortest path $P_{xy}$ where every internal vertex belongs to $G_i$, $1 \leq i \leq 2$. Let $t, u$ and $v$ be any three vertices in $S$ and let $S' = \{t, u, v\}$. Since $S$ is a minimal vertex separator, every vertex in $S$ is adjacent to at least one vertex in each component. Thus, for every pair $x, y \in S'$ there exists $P_{xy}^1$ and $P_{xy}^2$ ($\because G_1$ and $G_2$ are connected components of $H \backslash S$). Let $P_{tu}^{1}$ = ($t, a=a_1, \ldots, b=a_p, u$), $P_{tu}^2$ = ($t, w=w_1, \ldots, x=w_q, u$), $P_{tu}^{1}$ = ($u, c=c_1, \ldots, d=c_{r}, v$), $P_{uv}^2$ = ($u, y=y_1, \ldots, z=y_{s}, v$), $P_{bc}^{1}$ = ($b, b_1, \ldots, c$) and $P_{xy}^{2}$ = ($x, x_1, \ldots, y$). Note that if $b\neq c$, then $(v,P_{bc}^{1})$ forms an induced $C_k$ and if $x \neq y$, then $(v,P_{xy}^{2})$ forms an induced $C_k$. We complete this proof using case analysis (\emph{see Table \ref{table:table1}}) by considering the cases where $S'$ is independent and not independent.

In each case, we arrive at a contradiction by exhibiting an induced cycle other than $C_k$. Further, we exhibit two induced cycles $S_i$ and $S_j$ with $P_n, n \geq 3$ in common, which contradicts \emph{Corollary \ref{intersectionseparate}}. It follows that our assumption that there exist a minimal vertex separator of size 3 or more is wrong. Thus, the theorem is true for $H$ and hence the super graph $G$ as every induced cycle $H$ is also an induced cycle in $G$. $\hfill \qed$
\end{proof}

\begin{figure}[h]
\vspace{-0.5cm}
\begin{center}
\includegraphics[scale=0.25]{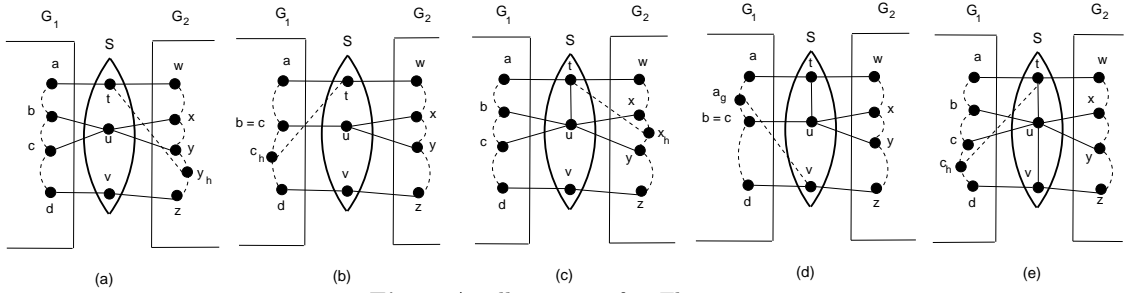}
\vspace{-0.5cm}
\caption{An illustration for \emph{Theorem \ref{mvssc2k+1}}}
\label{mvsk1}
\end{center}

\end{figure}

\vspace{-1.4cm}
\begin{table}[h]
\centering
\label{pre}
\caption{Possible chords from vertices $t$ and $v$}
\begin{tabular}{|l|l|l|l|} \hline
\textbf{Type A:} & $t$ is adjacent to a vertex in $V(P_{bc}^{1})\backslash \{b\}$ &
\textbf{Type E:} & $v$ is adjacent to a vertex in $V(P_{bc}^{1})\backslash \{c\}$ \\
\textbf{Type B:} & $t$ is adjacent to a vertex in $V(P_{cd}^{1})\backslash \{c\}$ &
\textbf{Type F:} & $v$ is adjacent to a vertex in $V(P_{ab}^{1})\backslash \{b\}$ \\
\textbf{Type C:} & $t$ is adjacent to a vertex in $V(P_{xy}^{2})\backslash \{x\}$ &
\textbf{Type G:} & $v$ is adjacent to a vertex in $V(P_{xy}^{2})\backslash \{y\}$ \\
\textbf{Type D:} & $t$ is adjacent to a vertex in $V(P_{yz}^{2})\backslash \{y\}$ &
\textbf{Type H:} & $v$ is adjacent to a vertex in $V(P_{wx}^{2})\backslash \{x\}$ \\ \hline
\end{tabular}
\vspace{-0.8cm}
\end{table}

\begin{table}[H]
\caption{Case analysis for the proof of Theorem \ref{mvssc2k+1}}
\label{table:table1}
\begin{tabular}{|l|l|}
\multicolumn{2}{l}{\textbf{Case 1:}  $[S']$ is independent. $(P_{tu}^{1},P_{tu}^2)$ and $(P_{uv}^{1}, P_{uv}^2)$ form an induced $C_k$ in $H$.}\\ \hline

\multicolumn{1}{|c|}{\textbf{Case Analysis}} & \multicolumn{1}{|c|}{\textbf{Induced cycles with justification}}\\  \hline 
\textbf{Case 1.1:} $b \neq c$ and $x \neq y$; &  $(t, P_{ab}^{1}, P_{bc}^{1}, P_{cd}^{1}, v, P_{yz}^{2}, P_{xy}^{2}, P_{wx}^{2})$ forms an induced $C_{n >k}$, a contradiction. \\ 
& 	The cycle is induced by the following sub cases:\\ \cline{2-2} 
&\textbf{Case 1.1a:} Chord of \emph{Type C}. $S_i=(P_{tu}^{1}, P_{xx_h}^{2})$ and $S_j = (u,P_{xy}^{2})$, where, $x_h$ is the least indexed \\
& vertex in $P_{xy}$ such that $\{t,x_h\} \in E(G)$; Note that $V(S_i) = V(P_{tu}^{1}) \cup V(P_{xx_h}^{2})$ and \\
& $V(S_j) = \{u\} \cup V(P_{xy}^{2})$. $[V(S_i)\cap V(S_j)] = P_n, n \geq 3$, a contradiction. \\ 
&	Similar arguments hold good for chords of \emph{Type A, G} and \emph{E}. \\ \cline{2-2}
&\textbf{Case 1.1b:} Chord of \emph{Type D}. $S_i=(P_{tu}^{1}, P_{yy_h}^{2})$ and $S_j = (P_{uv}^{1},P_{uv}^{2})$, where, $y_h$ is the least indexed \\
&  vertex in $P_{yz}$ such that $\{t,y_h\} \in E(G)$; $[V(S_i)\cap V(S_j)] = P_{n \geq 3}$, a contradiction (\emph{see Figure 2(a)}). \\ 
&	Similar arguments can be given if chords are of \emph{Type B, F} and \emph{H}.\\
\hline
\textbf{Case 1.2:} $b = c$ and $x \neq y$; & $(t, P_{ab}^{1}, P_{cd}^{1}, v, P_{yz}^{2}, P_{xy}^{2}, P_{wx}^{2})$ forms an induced $C_{n >k}$, a contradiction. \\ 
& The cycle is induced by the following sub cases:\\ \cline{2-2}

&	\textbf{Case 1.2a:} Chord of \emph{Type B}. $S_i=(P_{tu}^{2}, P_{cc_h}^{1})$ and $S_j = (P_{uv}^{1},P_{uv}^{2})$, where, $c_h$ is the least indexed \\
&	vertex in $P_{cd}$ such that $\{t,c_h\} \in E(G)$; $[V(S_i)\cap V(S_j)] = P_{n \geq 3}$, a contradiction (\emph{see Figure 2(b)}).  \\ 
&	The argument is symmetric for chords of \emph{Type F}.\\ \cline{2-2}
& \textbf{Case 1.2b:} Chord of \emph{Type A} or \emph{G}.  The argument is similar to the \emph{Case 1.1a}. \\ \cline{2-2}
& \textbf{Case 1.2c:} Chord of \emph{Type D} or \emph{H}. The argument is similar to the \emph{Case 1.1b}. \\  \hline

\textbf{Case 1.3:} $b \neq c$ and $x = y$; & $(t, P_{ab}^{1}, P_{bc}^{1}, P_{cd}^{1}, v, P_{yz}^{2}, P_{wx}^{2})$ forms an induced $C_{n >k}$, a contradiction. \\
& The argument for the cycle is induced is symmetric to the \emph{Case 1.2}\\ \hline

\textbf{Case 1.4:} $b = c$ and $x = y$ & The induced cycles $S_i= (P_{tu}^{1},P_{tu}^{2})$ and $S_j=(P_{uv}^{1},P_{uv}^{2})$ have $P_n, n \geq 3$ \\
& in common, a contradiction.\\ \hline
\end{tabular}
\vspace{-0.8cm}
\end{table}

\begin{table}[H]
\begin{tabular}{|l|l|}
\multicolumn{2}{l}{\textbf{Case 2:} $[S']$ is not independent and $\{t,u\} \in E(G)$, $\{u,v\},\{t,v\} \notin E(G)$.} \\ \hline
\textbf{Case 2.1:} $b\neq c$ and $x \neq y$ & $(t, P_{ab}^{1}, P_{bc}^{1}, P_{cd}^{1}, v, P_{yz}^{2}, P_{xy}^{2}, P_{wx}^{2})$ forms an induced $C_{n >k}$, a contradiction. \\
&	The cycle is induced by the following sub cases:\\ \cline{2-2}

&	\textbf{Case 2.1a:} Chord of \emph{Type C}. $S_i=(t,P_{xx_h}^{2},u)$ and $S_j = (u,P_{xy}^{2})$, where, $x_h$ is the least indexed \\
& vertex in $P_{xy}$ such that $\{t,x_h\} \in E(G)$; $[V(S_i)\cap V(S_j)] = P_{n \geq 3}$, a contradiction (\emph{see Figure 2(c)}). \\
&	Similar argument hold good for chord of \emph{Type A}. \\ \cline{2-2}
&	\textbf{Case 2.1b} Chords of \emph{Type E} or \emph{G}.	The argument is similar to the \emph{Case 1.1a}. \\ \cline{2-2}
&	\textbf{Case 2.1c:} Chord of \emph{Type B}. $S_i=(t,P_{cc_h},u)$ and $S_j=(P_{uv}^{1},P_{uv}^{2})$, where, $c_h$ is the least indexed \\
 & vertex in $P_{cd}$ such that $\{t,c_h\} \in E(G)$; $[V(S_i)\cap V(S_j)] = P_n, n \geq 3$, a contradiction. \\ 
 & The argument is symmetric for chord of \emph{Type D}.\\ \cline{2-2}
 & \textbf{Case 2.1d:} Chord of \emph{Type F}.  $S_i=(t,P_{tu}^{1},u)$ and $S_j=(P_{uv}^{2},P_{aa_g}^{1},t)$, where, $a_g$ is the largest \\
 & indexed vertex in $P_{ab}$ such that $\{v,a_g\} \in E(G)$; $[V(S_i)\cap V(S_j)] = P_{n \geq 3}$, a contradiction.\\ 
 & Similar argument for \emph{Case H}.\\ \hline
\textbf{Case 2.2:} $b= c$ and $x \neq y$ & $(t, P_{ab}^{1}, P_{cd}^{1}, u, P_{yz}^{2}, P_{xy}^{2}, P_{wx}^{2})$ forms an induced $C_{n >k}$, a contradiction. \\ 
&The cycle is induced by the following sub cases:\\ \cline{2-2}

& \textbf{Case 2.2a:} Chord of \emph{Type F}. $S_i=(P_{uv}^{2},P_{ba_g}^{1})$ and $S_j = (t,P_{tu}^{1},u)$, where, $a_g$ is the largest indexed \\ 
& vertex in $P_{ab}$ such that $\{v,a_g\} \in E(G)$; $[V(S_i)\cap V(S_j)] = P_{n \geq 3}$, a contradiction (\emph{see Figure 2(d)}). \\ \cline{2-2}
& \textbf{Case 2.2b:} Chord of \emph{Type B}.  $S_i=(t,P_{cc_h}^{1},u)$ and $S_j=(P_{uv}^{1},P_{uv}^{2})$, where, \\ 
 & $c_h$ is the least indexed vertex in $P_{cd}$ such that $\{t,c_h\} \in E(G)$. \\ \cline{2-2}
&\textbf{Case 2.2c} Chords of \emph{Type C} or \emph{D} or \emph{G} or \emph{H:}\\
&  The arguments are similar to the sub cases of \emph{Case 2.1}. \\ \hline
\textbf{Case 2.3:} $b \neq c$ and $x = y$ & $(t, P_{ab}^{1}, P_{bc}^{1}, P_{cd}^{1}, v, P_{yz}^{2}, P_{wx}^{2})$ forms an induced $C_{n >k}$, a contradiction. \\
  & The argument is similar to the \emph{Case 2.2}\\ \hline
\textbf{Case 2.4:} $b = c$ and $x = y$ & $(t, P_{ab}^{1}, P_{cd}^{1}, v, P_{yz}^{2}, P_{wx}^{2})$ forms an induced $C_{n >k}$, a contradiction. \\ 
& The cycle is induced by the arguments in \emph{Case 2.2a; Case 2.2b; Case 2.3}\\ \hline
%
\multicolumn{2}{|l|}{\textbf{Case 3:} $[S']$ is not independent and $\{t,v\} \in E(G)$, $\{u,v\},\{t,u\} \notin E(G)$. The argument is similar to \emph{Case 2}.} \\ \hline

\multicolumn{2}{|l|}{\textbf{Case 4:} $[S']$ is not independent and $\{u,v\} \in E(G)$, $\{t,v\},\{t,u\} \notin E(G)$. The argument is similar to \emph{Case 2}.} \\ \hline

\multicolumn{2}{l}{\textbf{ }} \\
\multicolumn{2}{l}{\textbf{Case 5:} $[S']$ is not independent and $\{t,u\},\{u,v\} \in E(G)$, $\{t,v\} \notin E(G)$ }\\ \hline
\textbf{Case 5.1:} $b \neq c$ and $ x \neq y$ & $(t, P_{ab}^{1}, P_{bc}^{1}, P_{cd}^{1}, v, P_{yz}^{2}, P_{xy}^{2}, P_{wx}^{2})$ forms an induced $C_{n >k}$, a contradiction. \\ \hline
& The cycle is induced by the following sub cases:\\ \cline{2-2}

& \textbf{Case 5.1a:} Chord of \emph{Type A}.  The argument is similar to the \emph{Case 2.1a.} \\ & Similar arguments can be given if chords are of \emph{Type C, E} and \emph{G}. \\ \cline{2-2}
& \textbf{Case 5.1b:} Chord of \emph{Type B}. $S_i=(t,P_{cc_h}^{1},u)$ and $S_j=(u,P_{uv}^{1},v)$, where, $c_h$ is the least indexed \\ 
 &vertex in $P_{cd}$ such that $\{t,c_h\} \in E(G)$; $[V(S_i)\cap V(S_j)] = P_{n \geq 3}$, a contradiction (\emph{see Figure 2(e)}). \\ 
 & Similar arguments hold good for chords of \emph{Type D, F} and \emph{H}. \\ \hline  
\textbf{Case 5.2:} $b=c$ and $x \neq y$ &  $(t, P_{ab}^{1}, P_{cd}^{1}, v, P_{yz}^{2}, P_{xy}^{2}, P_{wx}^{2})$ forms an induced $C_{n >k}$, a contradiction. \\
& The argument is similar to the \emph{Case 5.1}.\\
 \hline
\textbf{Case 5.3:} $b \neq c$ and $x = y$ &  $(t, P_{ab}^{1}, P_{bc}^{1}, P_{cd}^{1}, v, P_{yz}^{2}, P_{wx}^{2})$ forms an induced $C_{n >k}$, a contradiction. \\
& The argument is similar to the \emph{Case 5.1}\\
 \hline
\textbf{Case 5.4:} $b = c$ and $x = y$ &  $(t, P_{ab}^{1}, P_{cd}^{1}, v, P_{yz}^{2}, P_{wx}^{2})$  forms an induced $C_{n >k}$, a contradiction. \\ 
& The argument is similar to the \emph{Case 5.1b}.\\
 \hline
\multicolumn{2}{|l|}{\textbf{Case 6:} $[S']$ is not independent and $\{t,u\},\{t,v\} \in E(G)$, $\{u,v\} \notin E(G)$. The argument is similar to \emph{Case 5}. }\\ \hline
\multicolumn{2}{|l|}{\textbf{Case 7:} $[S']$ is not independent and $\{t,v\},\{u,v\} \in E(G)$, $\{u,t\} \notin E(G)$. The argument is similar to \emph{Case 5}. }\\ \hline
\end{tabular}
\end{table}

\begin{lemma}
\label{mvssck}
Let $G$ be a connected $SC_k$ graph, $k=2m+4, m \geq 1$. For any two induced cycles $S_i$ and $S_j$: if either $\vert V(S_i) \cap V(S_j) \vert \leq 1$ or $\vert E(S_i) \cap E(S_j) \vert \leq 1$, then the cardinality of every minimal vertex separator of $G$ is at most 2.
\end{lemma}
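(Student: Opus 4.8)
The plan is to reduce the statement to the argument already carried out for \emph{Theorem \ref{mvssc2k+1}}. The key observation is that, for $k=2m+4$ with $m\geq 1$, the hypothesis of this lemma is precisely what \emph{Corollary \ref{intersectionseparate}} provides in the odd case: it states that for \emph{every} pair of induced cycles $S_i,S_j$ of $G$ either $\vert V(S_i)\cap V(S_j)\vert\leq 1$ or $\vert E(S_i)\cap E(S_j)\vert\leq 1$, so the exceptional possibility $\vert E(S_i)\cap E(S_j)\vert=\frac{k}{2}$ that \emph{Corollary \ref{intersectionseparate}} allows when $k$ is even (a genuine extra case, since $\frac{k}{2}=m+2\geq 3$) cannot occur here. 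Reformulated in the form in which it is actually used: under this hypothesis no two induced cycles of $G$ can share a common subpath $P_n$ with $n\geq 3$, since such a shared $P_n$ would force simultaneously $\vert V(S_i)\cap V(S_j)\vert=n\geq 3$ and $\vert E(S_i)\cap E(S_j)\vert=n-1\geq 2$.

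Granting this, I would argue by contradiction exactly along the lines of \emph{Theorem \ref{mvssc2k+1}}. Suppose $G$ has a minimal vertex separator $S$ with $\vert S\vert\geq 3$; let $G_1,G_2$ be two connected components of $G\backslash S$, fix three vertices $t,u,v\in S$, put $S'=\{t,u,v\}$, and work inside the subgraph $H$ of $G$ induced on $V(G_1)\cup V(G_2)\cup S$. For each pair $x,y\in S'$ take the shortest paths $P_{xy}^1$ and $P_{xy}^2$ with internal vertices in $G_1$ and $G_2$ respectively, together with the connecting paths $P_{bc}^1$ and $P_{xy}^2$, using the notation of the proof of \emph{Theorem \ref{mvssc2k+1}}. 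Then run the same case analysis --- the cases where $[S']$ is independent or not and the possible adjacencies among $t,u,v$ (Cases 1--7), and within each case the sub-cases $b=c$ or $b\neq c$ and $x=y$ or $x\neq y$, together with the chord types A--H. In every branch one either exhibits an induced cycle of length strictly larger than $k$, contradicting the definition of $SC_k$, or one exhibits two induced cycles $S_i,S_j$ having a common $P_n$ with $n\geq 3$, which contradicts the observation of the previous paragraph. Since every induced cycle of $H$ is also an induced cycle of $G$, this contradiction is valid for $G$, and hence every minimal vertex separator of $G$ has cardinality at most $2$.

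The one substantive point to verify --- and the step I expect to be the main, if routine, obstacle --- is that the case analysis in the proof of \emph{Theorem \ref{mvssc2k+1}} does not use the parity of $k$ anywhere except through \emph{Corollary \ref{intersectionseparate}}: one must check that each cycle it produces as a contradiction still has length genuinely greater than $k$ when $k$ is even, and that each intersecting pair it produces genuinely shares a path on at least three vertices, rather than on two only, and not merely the exceptional path on $\frac{k}{2}+1$ vertices that the plain even-$k$ form of \emph{Corollary \ref{intersectionseparate}} would tolerate. Since the hypothesis of the lemma forbids a common $P_n$ for every $n\geq 3$, including $n=\frac{k}{2}+1$, no branch of the analysis is weakened by passing from odd $k$ to even $k$, so the argument transfers essentially verbatim and the lemma follows.
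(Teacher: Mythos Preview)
Your proposal is correct and takes essentially the same approach as the paper: the paper's own proof is the single line ``An argument similar to \emph{Theorem \ref{mvssc2k+1}} establishes this claim,'' and you have simply spelled out why that transfer is legitimate, namely that the hypothesis of the lemma supplies exactly the ``no common $P_n$ with $n\ge 3$'' conclusion that \emph{Corollary \ref{intersectionseparate}} supplies for odd $k$, so the entire case analysis goes through unchanged.
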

\begin{proof}
An argument similar to \emph{Theorem \ref{mvssc2k+1}} establishes this claim. $\hfill \qed$
\end{proof}

\begin{lemma}
\label{mvsindependent}
Let $G$ be a connected $SC_k$ graph, $k=2m+4, m \geq 1$. If $S$ is a minimal vertex separator of $G$ with $\mid S\mid \geq 3$, then $S$ is an independent set.
\end{lemma}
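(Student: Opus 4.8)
The plan is to argue by contradiction in the spirit of \emph{Theorem \ref{mvssc2k+1}}: assume that $S$ is a minimal vertex separator with $\vert S\vert\ge 3$ that is \emph{not} independent, say $\{t,u\}\in E(G)$ with $t,u\in S$, and produce a forbidden configuration. Since $S$ is minimal, there are (at least) two components $G_1,G_2$ of $G\backslash S$ in each of which every vertex of $S$ has a neighbour; fix any third vertex $v\in S$. As in \emph{Theorem \ref{mvssc2k+1}}, it suffices to work inside the induced subgraph $H$ on $V(G_1)\cup V(G_2)\cup\{t,u,v\}$, since every induced cycle of $H$ is an induced cycle of $G$.

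First I would record two easy facts. (i) If $v$ were adjacent to both $t$ and $u$, then $[\{t,u,v\}]=C_3$, impossible for $k\ge 6$; hence we may assume $\{t,v\}\notin E(G)$, and the only relevant adjacency patterns on $\{t,u,v\}$ are ``only $\{t,u\}$'' and ``$\{t,u\}$ and $\{u,v\}$'' --- i.e.\ Cases $2$ and $5$ of \emph{Theorem \ref{mvssc2k+1}}. (ii) For $i\in\{1,2\}$, let $P^i_{tu}$ be a shortest $t$--$u$ path whose internal vertices lie in $G_i$. Then $P^i_{tu}$ is an induced path and, since $\{t,u\}\in E(G)$, closing it by the edge $\{t,u\}$ yields an induced cycle, hence one of length \emph{exactly} $k$; so $P^i_{tu}$ has precisely $k-2$ internal vertices. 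Write $A$ (through $G_1$) and $B$ (through $G_2$) for these two induced $C_k$'s; they meet in exactly the vertices $t,u$ and the edge $\{t,u\}$.

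Next, using the shortest paths $P^1_{tv},P^2_{tv},P^1_{uv},P^2_{uv}$ (internal vertices in the indicated component) together with $A$, $B$ and the edge $\{t,u\}$, I would run the case analysis promised above, each path length now being constrained by fact (ii). In every case the aim is to exhibit either an induced cycle of length $\ne k$ --- typically by splicing $P^1_{tv}$, the edge $\{t,u\}$ and $P^2_{uv}$ (or $P^1_{uv}$, $\{t,u\}$, $P^2_{tv}$) and then suppressing any chord that $t$ or $u$ sends into these paths --- or two induced cycles that share a path on at least three vertices, which \emph{Corollary \ref{intersectionseparate}} forbids.

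The main obstacle is the extra freedom of \emph{even} $k$: \emph{Corollary \ref{intersectionseparate}} permits two induced cycles to share a $P_{\frac{k}{2}+1}$, so the odd-$k$ reasoning of \emph{Theorem \ref{mvssc2k+1}} cannot be transcribed verbatim --- whenever the argument outputs two cycles meeting in a path, one must verify that this path does not have exactly $\frac{k}{2}$ edges. The rigidity of fact (ii) is what settles this: since the $t$--$u$ parts of $A$ and $B$ are pinned to length $k-1$, the cycle built from $P^1_{tv}$, $\{t,u\}$, $P^2_{uv}$ has exactly $\vert P^1_{tv}\vert+\vert P^2_{uv}\vert-1$ vertices, and a short computation forces, in each case, a length strictly larger than $k$ (or a cycle of length $3\le \ell<k$, or an overlap of size $\ne\frac{k}{2}$); in the few borderline situations one continues one of the two cycles past $v$ into the other full component to obtain either a strictly larger overlap or an induced $C_\ell$ with $\ell>k$. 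Reaching a contradiction in all cases shows $S$ must be independent.
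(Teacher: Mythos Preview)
Your setup matches the paper's: argue by contradiction, pick $t,u,v\in S$ with $\{t,u\}\in E(G)$, restrict to $H$ induced on $V(G_1)\cup V(G_2)\cup S$, and use that closing a shortest $P^i_{tu}$ with the edge $\{t,u\}$ must give an induced $C_k$, so each $P^i_{tu}$ has exactly $k-2$ internal vertices. So far so good.

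The gap is in your main engine. You plan to splice $P^1_{tv}$, the edge $\{t,u\}$, and $P^2_{uv}$ and then either get a cycle of length $\neq k$ or an overlap forbidden by \emph{Corollary \ref{intersectionseparate}}. But fact~(ii) pins only $\vert P^i_{tu}\vert$; it gives no independent control over $\vert P^1_{tv}\vert$ and $\vert P^2_{uv}\vert$. Nothing prevents the spliced cycle from having length exactly $k$, and nothing prevents its overlap with $A$ or with $B$ from being exactly $\frac{k}{2}$ edges --- precisely the case \emph{Corollary \ref{intersectionseparate}} allows when $k$ is even. Concretely, for $k=6$ one can have $P^2_{uv}=(u,w_4,w_3,v)$ and $P^1_{tv}$ of length~$2$ with a fresh internal vertex, so the spliced cycle is a $C_6$ meeting $B$ in the path $(t,u,w_4,w_3)$ of $3=\tfrac{k}{2}$ edges and meeting $A$ in just the edge $\{t,u\}$; neither overlap is forbidden. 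Your ``short computation'' and the vague ``continue past $v$'' do not dispose of this.

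The paper avoids this trap by not relying on \emph{Corollary \ref{intersectionseparate}} at all. Writing $P^1_{tu}=(t,a_1,\ldots,a_{k-2},u)$ and $P^2_{tu}=(t,w_1,\ldots,w_{k-2},u)$, it takes a path $P^1_{va_{k-2}}$ from $v$ to $a_{k-2}$ through $G_1$ and a path $P^2_{vw_{k-2}}$ from $v$ to $w_{k-2}$ through $G_2$ (their concatenation with $u$ is forced to be an induced $C_k$), and then exhibits the single large cycle
\[
\bigl(t,\;a_1,\ldots,a_{k-2},\;P^1_{va_{k-2}}\text{ reversed},\;v,\;P^2_{vw_{k-2}},\;w_{k-2},\ldots,w_1,\;t\bigr),
\]
which has length $3k-6>k$ for every $k\ge 6$. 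The remaining work is to rule out chords from $t$ and $v$ into these paths; each potential chord produces an induced cycle of length $\neq k$ directly, with no appeal to overlap restrictions. That explicit long cycle through both components while \emph{avoiding} $u$ is the missing idea in your plan; once you have it, the case analysis you outline becomes a chord-exclusion argument rather than an overlap argument, and the even-$k$ obstacle disappears.
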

\begin{proof}
On the contrary, assume that there exists a minimal vertex separator $S$ such that $\mid S\mid \geq 3$ and $S$ is not an independent set. Let $G_1, G_2, \ldots, G_l$, $l\geq 2$ be the connected components of $G\backslash S$. Consider the graph $H$ induced on the set $V(H) = V(G_1) \cup V(G_2) \cup S$. Choose any three vertices, $S'=\{t,u,v\}$, from $S$ such that either $\{t,u\} \in E(G)$ and $\{u,v\},\{t,v\} \notin E(G)$ or $\{t,u\},\{u,v\} \in E(G)$ and $\{t,v\} \notin E(G)$. Since $S$ is a minimal vertex separator, every vertex in $S$ is adjacent to at least one vertex in each component. Thus, $P^{1}_{tu}$ and $P^{2}_{tu}$ exists and these paths create a cycle of length $k$, say $S_1=P^{1}_{tu}=(t,a_1,a_2,\ldots,a_{k-2},u)$ and $S_2=P^{2}_{tu}=(t,w_1,w_2,\ldots,w_{k-2},u)$. Let $b_1$ be a vertex in $G_1$ which is adjacent to $v$ in $P^{1}_{va_{k-2}}$ and $x_1$ be a vertex in $G_2$ which is adjacent to $v$ in $P^{2}_{vw_{k-2}}$.
\begin{description}

\item[\textbf{Case ($\{t,u\} \in E(G)$} and \textbf{$\{u,v\},\{t,v\} \notin E(G)$):}] 
 It is clear that, $S_3 = (P^{1}_{va_{k-2}},P^{2}_{vw_{k-2}},u)$ forms an induced $C_k$. Let $\mid V(P^{2}_{x_1w_{k-2}}) \mid = n$ and $\mid V(P^{1}_{b_1a_{k-2}}) \mid = k$-$n$-$2$. Thus, $P^{1}_{va_{k-2}}=(v,b_1,\ldots,b_{k-n-2}=a_{k-2})$ and $P^{2}_{vw_{k-2}}=(v,x_1,\ldots,x_n=w_{k-2})$. Hence, $(P_{ta_{k-2}}^{1}, P_{va_{k-2}}^{1}, P_{vw_{k-2}}^{2}, P_{tw_{k-2}}^{2})$ forms an induced cycle of length greater than $k$. The cycle is induced because the following cases are not possible by the definition of $SC_k$.

\begin{figure}[h]
\centering
\includegraphics[scale=0.3]{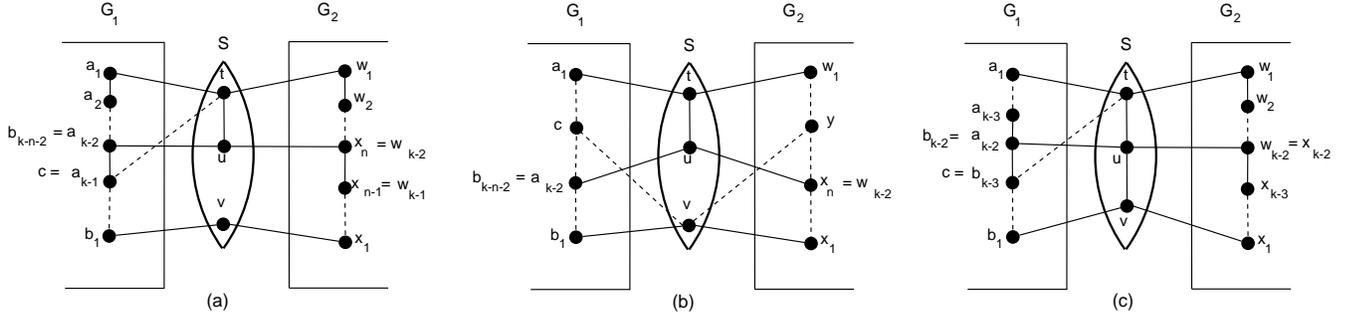}
\caption{An illustration of the graph when (a) $\{t,a_{k-1}\}\in E(G)$ and $\{u,v\} \notin E(G)$, (b) $\{v,c\}, \{v,y\}\in E(G)$ and $\{u,v\} \notin E(G)$, and (c) $\{t,c\} \in E(G)$ and $\{u,v\} \in E(G)$ }
\label{fig:is1}
\end{figure}

\begin{itemize}
\item[$\bullet$]  If $a_p$, $p \in \{1,\ldots,k-3\}$, has adjacency in $P^{1}_{b_1b_{k-n-3}}$, and $b_q$, $q \in\{1,\ldots, k-n-3\}$, has adjacency in $P^{1}_{a_1a_{k-3}}$. Choose the least $p$ such that $\{a_p,c\}\in E(G)$, $c\in \{b_1,\ldots, b_{k-n-3}\}$, and $(c, \ldots, a_{k-2}, \ldots, a_p)$ forms an induced $C_k$. Choose the least $q$ such that $\{b_q,d\}\in E(G)$, $d\in \{a_1,\ldots, a_{k-3}\}$ and $(d, \ldots, a_{k-2},$ $\ldots, b_q)$ forms an induced $C_k$. Then, either $(t,a_1,\ldots, a_p, c, \ldots, b_{k-n-2},u)$ or $(P^{2}_{uv}, v, b_1, \ldots, b_q, d, \ldots,$ $a_{k-2}, u)$ forms an induced cycle of length greater than $k$. 

\item[$\bullet$]  If $w_p$, $p \in \{1,\ldots,k-3\}$, has adjacency in $P^{2}_{x_1x_{n-1}}$ and $x_r$, $r \in\{1,\ldots, n-1\}$, has adjacency in $P^{2}_{w_1w_{k-3}}$. Choose the least $p$ such that $\{w_p,y\}\in E(G)$, $y\in \{x_1,\ldots, x_{n-1}\}$, and $(y, \ldots, w_{k-2}, \ldots, w_p)$ forms an induced $C_k$. Choose the least $r$ such that $\{x_r,z\}\in E(G)$, $z\in \{w_1,\ldots, w_{k-3}\}$ and $(z, \ldots, w_{k-2},$ $\ldots, x_r)$ forms an induced $C_k$. Then, either $(t,w_1,\ldots, w_p, y, \ldots, x_{n-1},u)$ or $(P^{1}_{uv}, v, x_1, \ldots, x_r, z, \ldots,$ $ w_{k-2}, u)$ forms an induced cycle of length greater than $k$. 

\item[$\bullet$] If $t$ is adjacent to some vertices in $P^{1}_{b_1b_{k-n-2}}$. Pick the largest indexed vertex in $P^{1}_{b_1b_{k-n-3}}$, say $c$, such that $t$ is adjacent to $c$. If $c=a_{k-1}=b_{k-n-3}$, then $(t,a_{k-1},a_{k-2},u)$ forms an induced $C_4$ (\emph{see Figure  \ref{fig:is1}(a)}). If $c \in \{b_1,b_2,\ldots,b_{k-n-4}\}$, then $(t,a_1,\ldots,a_{k-2}=b_{k-n-2}, \ldots, c)$ creates an induced cycle of greater than $k$. The argument is similar if $t$ is adjacent to a vertex in $P^{2}_{w_{k-2}x_1}$.

\item[$\bullet$] If $v$ has a neighbor in $P^{1}_{a_1a_{k-2}}$. Choose the least indexed vertex in $P^{1}_{a_1a_{k-3}}$, say $c$, such that $\{v,c\}\in E(G)$. 

\begin{itemize}
\item[-] If $v$ does not have a neighbor in $P^{2}_{w_1w_{k-2}}$, then $(t,a_1,\ldots, c, v, x_1, \ldots, x_n=w_{k-2}, \ldots, w_1)$  forms an induced cycle of length greater than $k$.
\item[-] If $v$ has a neighbor in $P^{2}_{w_1w_{k-2}}$, then choose the least indexed vertex in $P^{2}_{w_1w_{k-2}}$, say $y$, such that $v$ is adjacent to $y$. Since $G$ is an $SC_k$ graph, $\mid P^{1}_{a_1c} \mid + \mid P^{2}_{w_1y} \mid = k-2$. Thus, either $(t,w_1,\ldots, y, v, b_1, \ldots, b_{k-n-2}, u)$ or $(t, a_1,\ldots, c, v, x_1, \ldots, x_n, u)$ forms an induced cycle of length greater than $k$ (\emph{see Figure  \ref{fig:is1}(b)}).
\end{itemize}

\end{itemize}

\item[\textbf{Case ($\{t,u\},\{u,v\} \in E(G)$} and \textbf{$\{t,w\} \notin E(G)$):}]
By the definition of $SC_k$, $(P^{1}_{va_{k-2}},u)$ and $(P^{2}_{vw_{k-2}},u)$ forms an induced $C_k$. Thus, $(P^{1}_{ta_{k-2}}, P^{1}_{vb_{k-2}}, P^{2}_{vx_{k-2}}, P^{2}_{tw_{k-2}})$ forms an induced cycle of length greater than $k$. The cycle is induced because the following cases are not possible by the definition of $SC_k$.

\begin{itemize}
\item[$\bullet$] If $t$ is adjacent to some vertices in $P^{1}_{b_1b_{k-2}}$, then choose the largest indexed vertex in $P^{1}_{b_1b_{k-3}}$, say $c$, such that  $\{t,c\}\in E(G)$. If $c=b_{k-3}$, then $(t,c, b_{k-2},u)$ forms an induced $C_4$. If $c \in \{b_1,\ldots,b_{k-4}\}$, then $(P^{1}_{ta_{k-2}}, P^{1}_{cb_{k-2}})$ forms an induced $C_{h > k}$ (\emph{see Figure  \ref{fig:is1}(c)}). Similar argument if $t$ has a neighbor in $P^{2}_{x_1x_{k-2}}$ and if $v$ has an adjacency in $P^{1}_{a_1a_{k-2}}$ or in $P^{2}_{w_1w_{k-2}}$.
\end{itemize}
\end{description}
All the above cases contradict the definition of $SC_k$ graphs. Hence, the lemma is true. $\hfill \qed$
\end{proof}

\begin{definition}
Let $P = \{ P_{u_1u_{l-2}}^{i} \mid 1\leq i \leq n\}$. A graph $G$ is said to be a $cage$ $graph$ of size $n$ denoted as $CAGE(n,l)$ if there exist $w,z \in V(G)$ such that $\{w,u_{1}^{i}\}, \{z,u_{l-2}^{i}\} \in E(G)$ for all $1 \leq i \leq n$ and $P_{u_1u_{l-2}}^{i} $ is a path of length $l-2$. The $CAGE(3,4)$ is shown in \emph{Figure. \ref{fig:pp}}. A $CAGE(n,l)$ is maximum or a maximum cage if there is no $n' > n$ such that $G$ has $CAGE(n',l)$.
\end{definition}

\begin{figure}[h]
\centering
\includegraphics[scale=0.5]{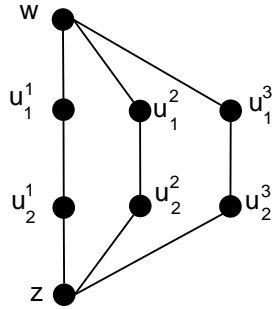}
\caption{$CAGE(3,4)$}
\label{fig:pp}
\end{figure}

\begin{theorem}
\label{mvssc2ksize}
Let $G$ be a connected $SC_k$ graph, $k=2m+4, m \geq 1$. For any two induced cycles $S_i$ and $S_j$ in $G$, if $\vert E(S_i) \cap E(S_j) \vert = k/2$ i.e., $G$ contains $CAGE(3, \frac{k}{2}+1)$, then the cardinality of every minimal vertex separator of $G$ is at most $s$, where $s$ is the size of the maximum cage. 
\end{theorem}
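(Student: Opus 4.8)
The plan is to prove the contrapositive-flavoured statement: if $G$ has a minimal vertex separator $S$ with $\vert S\vert = n$, then $G$ contains a $CAGE(n,\frac{k}{2}+1)$ as a subgraph; since $s$ is by definition the largest $n'$ for which $G$ contains a $CAGE(n',\frac{k}{2}+1)$, this yields $n \le s$. When $\vert S\vert \le 2$ there is nothing to do, because the hypothesis already guarantees $G$ contains a $CAGE(3,\frac{k}{2}+1)$ and hence $s \ge 3$. So assume $n \ge 3$. Exactly as in Theorem~\ref{mvssc2k+1} and Lemma~\ref{mvsindependent}, fix two components $G_1,G_2$ of $G\backslash S$ and put $H = [V(G_1)\cup V(G_2)\cup S]$; every induced cycle of $H$ is an induced cycle of $G$, $S$ remains a minimal vertex separator of $H$ (each vertex of $S$ has a neighbour in both $G_1$ and $G_2$), and a cage found inside $H$ is a cage in $G$. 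By Lemma~\ref{mvsindependent}, $S$ is independent.

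\textbf{Step 1 (the $C_k$'s hanging on $S$).} For $x,y\in S$ and $t\in\{1,2\}$ let $P^t_{xy}$ be a shortest $x$--$y$ path all of whose internal vertices lie in $G_t$ (this exists since every vertex of $S$ has a neighbour in $G_1$ and in $G_2$). I would first show that $P^1_{xy}\cup P^2_{xy}$ is a chordless cycle, hence a $C_k$: a chord cannot lie inside a shortest path; a chord incident to $x$ or $y$ would shorten $P^1_{xy}$ or $P^2_{xy}$ (here $x\not\sim y$ because $S$ is independent); and a chord joining an interior vertex of $P^1_{xy}$ to an interior vertex of $P^2_{xy}$ is impossible, since taking the chord closest to $x$ one reads off an induced cycle of length $\ne k$, exactly as in the bulleted cases of Lemma~\ref{mvsindependent}. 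Denote this induced $C_k$ by $D_{xy}$.

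\textbf{Steps 2--3 (overlap structure and the cage).} Fix a vertex $c\in S$ and consider the cycles $D_{cx}$, $x\in S\backslash\{c\}$; any two of them share the vertex $c$, so by Corollary~\ref{intersectionseparate} two of them either meet in at most one vertex, or share exactly one edge, or share exactly $\frac{k}{2}$ edges. The heart of the argument is to show that, for a suitable choice of $c$, every pair $D_{cx},D_{cy}$ shares exactly $\frac{k}{2}$ edges, that these shared edges always form the same path $Q$ of length $\frac{k}{2}$, that $Q$ passes through $c$, and that one endpoint $w$ of $Q$ lies in $V(G_1)$ and the other, $z$, in $V(G_2)$. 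The two ``small overlap'' alternatives are excluded in the style of Step~1: if $D_{cx}$ and $D_{cy}$ shared at most one edge, then $P^1_{cx},P^1_{cy}$ would leave $c$ along different edges and $P^2_{cx},P^2_{cy}$ likewise, and splicing these four shortest paths together with $P^1_{xy}$ (resp. $P^2_{xy}$) produces an induced cycle $\ne C_k$ by a case analysis in the spirit of the tables for Theorem~\ref{mvssc2k+1}; that $c\in Q$ and that $w,z$ are independent of the pair $x,y$ follows by tracking where $Q$ can enter and leave $G_1$ and $G_2$. Granting this, $w$ and $z$ are antipodal on each $C_k$ $D_{cx}$, so the arc of $D_{cx}$ from $w$ to $z$ that avoids $Q$ is a $w$--$z$ path of length $\frac{k}{2}$ through $x$, while $Q$ itself is a $w$--$z$ path of length $\frac{k}{2}$ through $c$. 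For $x\ne x'$ these two arcs meet only in $\{w,z\}$, since $D_{cx}\cap D_{cx'}=Q$ and each arc meets $Q$ only in $\{w,z\}$. Hence we obtain $n$ internally vertex-disjoint $w$--$z$ paths of length $\frac{k}{2}$, i.e. a $CAGE(n,\frac{k}{2}+1)$, so $n\le s$.

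\textbf{Main obstacle.} I expect Step~2 to be the real difficulty. Showing that the common parts $D_{cx}\cap D_{cy}$ are all equal to one fixed path $Q$ with a $G_1$-endpoint and a $G_2$-endpoint — equivalently, that the shortest $c$-to-$S$ paths through $G_1$ (and through $G_2$) all emanate from a single vertex — and ruling out the ``at most one edge'' / ``at most one vertex'' possibilities of Corollary~\ref{intersectionseparate} for a well chosen centre $c$, requires a case analysis of the same flavour as the tables used for Theorem~\ref{mvssc2k+1}, but somewhat more delicate: one must control two components and $n$ cycles simultaneously rather than three fixed separator vertices, and must verify that the $c\notin Q$ configurations (which force extra shared vertices between two $C_k$'s) genuinely cannot arise.
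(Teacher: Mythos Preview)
Your route is genuinely different from the paper's and a good deal heavier. The paper's proof is essentially three lines: assume $|S|=n>s$; by Lemma~\ref{mvsindependent} $S$ is independent; then invoke the general fact that every minimal vertex separator is a $(c,d)$-\emph{minimum} separator for some non-adjacent pair $c,d$, so by Menger there are $n$ internally vertex-disjoint $c$--$d$ paths, one through each vertex of $S$, and these constitute a $CAGE(n,\tfrac{k}{2}+1)$, contradicting the maximality of $s$. In other words, the cage poles are handed to you as $c$ and $d$ by Menger's theorem; your Steps~2--3 instead try to \emph{discover} the poles as the endpoints of a common overlap $Q$ of the family $\{D_{cx}:x\in S\setminus\{c\}\}$, which is precisely the work the Menger shortcut bypasses. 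Your ``main obstacle'' is real: the centre $c\in S$ may a priori have several neighbours in each $G_t$, so there is no immediate reason the shortest paths $P^t_{cx}$ all leave $c$ along the same edge, and forcing the $\tfrac{k}{2}$-edge overlap for \emph{every} pair $D_{cx},D_{cy}$ simultaneously (rather than for three fixed vertices as in Theorem~\ref{mvssc2k+1}) is more delicate than your sketch suggests. If your approach can be completed it would be self-contained (no appeal to Menger or to the ``minimal $\Rightarrow$ minimum for some pair'' reduction), but that is the only gain; the paper's argument is both shorter and sidesteps the case analysis you correctly flag as the crux.
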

\begin{proof}
On the contrary, assume that there exists a minimal vertex separator $S$ of $G$ such that $\mid S\mid = n$, $n >s$. Since $s \geq 3$, $S$ is an independent set, due to \emph{Lemma \ref{mvsindependent}}. We know that every minimal vertex separator is $(a,b)$-minimal vertex separator for some non-adjacent vertices $a$ and $b$ in $G$. Also, every $(a,b)$-minimal vertex separator is $(c,d)$-minimum vertex separator for some non-adjacent vertices $c$ and $d$ in $G$. Without loss of generality, let us assume that $S$ is a $(c,d)$-minimum vertex separator. Thus, every vertex in $S$ is part of a vertex disjoint path from $c$ to $d$. Hence, we get $CAGE(n,\frac{k}{2}+1)$, where $n > s$. This contradicts the maximality of $s$. Hence the theorem. $\hfill\qed$
\end{proof}

\section{Characterization of $SC_k$ graphs}
Like chordal graphs has a simplicial vertex \cite{tarjan} and chordal bipartite \cite{GolumbicGoss} has a bi-simplicial edge, we shall observe that every $SC_k$ graph has a special vertex or a special $C_k$ namely $pendant$ $vertex$ or \emph{pendant cycle}, respectively. Thus, we can obtain an ordering called \emph{vertex cycle ordering (VCO)} for an $SC_k$ graph. 

\begin{definition}
Let $G$ be an $SC_k, k\geq 5$, graph. A vertex $v \in V(G)$ is said to be a \emph{pendant vertex} if $d_G(v) = 1$. A cycle $S_i$ is said to be $0$-$pendant$ $C_k$ in $G$ if for every cycle $S_j$, $i\neq j$, in $G$, $\vert V(S_i) \cap V(S_j) \vert = 0$ and $S_i$ can have at most one cut vertex of $G$.\\

 A cycle $S_i$ in $G$ is said to be $1$-$pendant$ $C_k$ if $S_i$ has exactly one cut vertex $v$, and there exist at least one induced cycle $S_j$ such that $\vert V(S_i) \cap V(S_j) \vert = 1$ and $S_i$ and $S_j$ shares $v$ in common and with every other cycle $S_m$ in $G$ $\vert V(S_i)\cap V(S_m)\vert = 0$. \\
 
 A cycle $S_i$ in $G$ is said to be $2$-$pendant$ $C_k$ if $S_i$ has exactly one $\{u,v\}$-vertex separator such that $\{u,v\} \in E(G)$ and for all other cycles $S_j$, $S_i$ has vertex intersection with $S_j$ at $u$ or $v$, or edge intersection with $S_j$ at $\{u,v\}$, or no intersection with $S_j$.\\

 A cycle $S_i$ in $G$ is said to be $s$-$pendant$ $C_k$, $s\geq 3$, if there exist at least one cycle $S_j$ in $G$ such that $\vert V(S_i) \cap V(S_j) \vert = s$ and $\vert E(S_i) \cap E(S_j) \vert = s-1$, say $V(S_i) \cap V(S_j) = P_l = (u_1,u_2,\ldots,u_s)$, satisfying the following conditions:
\begin{itemize}
\item[1.] $S_i$ can have $u_1$ or $u_s$ as a cut vertex but not both.
\item[2.] there does not exist a cycle $S_m, m \neq j$ in $G$ such that the graph induced on $ V(S_i)\cap V(S_m)$ is not $P_l$ and $ E(S_i)\cap E(S_m)  \neq \emptyset$.
\end{itemize}   
\end{definition}

\begin{figure}[h]
\centering
\includegraphics[scale=0.3]{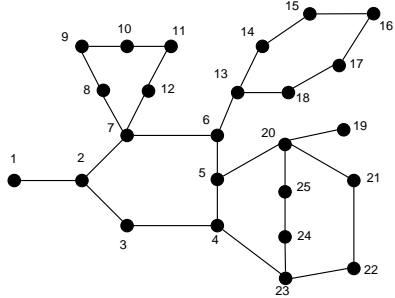}
\caption{An example of an $SC_6$ graph where $\{1\}$ and $\{19\}$ are pendant vertices, $(13,14,15,16,17,18)$ is a $0$-pendant $C_6$, $(7,8,9,10,11,12)$ is a $1$-pendant $C_6$, $(4,5,20,25,24,23)$ is a $2$-pendant cycle and $(20,21,22,23,24,25)$ is a $4$-pendant cycle.}
\end{figure}

\begin{lemma}
\label{specialvertex2k+1}
An $SC_k$ graph $G$ other than $C_k$, $k = 2m+3, m \geq 1$, has any one of the following properties:
\begin{itemize}
\item[(i)] Two non-adjacent pendant vertices
\item[(ii)] Two $s$-pendant $C_k$, $s \in \{0,1,2\}$.
\item[(iii)] An $s$-pendant $C_k$ and a pendant vertex, $s \in \{0,1,2\}$.
\end{itemize}
\end{lemma}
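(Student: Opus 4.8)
The plan is to organise the argument around the block structure of $G$, reducing the statement to a local analysis at the ends of $G$. Since $k=2m+3$ is odd, two tools are available throughout: by Corollary~\ref{intersectionseparate} any two induced cycles of $G$ share at most one vertex or at most one edge (the value $k/2$ being unavailable), and by Theorem~\ref{mvssc2k+1} every minimal vertex separator of $G$ has size at most two. In particular no two induced $C_k$'s can have a common $P_n$ with $n\ge 3$, so no $s$-pendant $C_k$ with $s\ge 3$ exists, which matches the list $(i)$--$(iii)$. If $G$ is cycle-free it is a tree; a tree on at least three vertices has two non-adjacent leaves, i.e.\ two non-adjacent pendant vertices, giving $(i)$ (the trees $K_1$ and $K_2$ being degenerate). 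So from now on I assume $G$ contains an induced $C_k$.

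The heart of the proof is a structural claim about the blocks (maximal $2$-connected subgraphs, or bridges) of $G$. I would show: every non-trivial block $B$ is an acyclic arrangement of $C_k$'s glued along single edges. Concretely, inside $B$ two induced $C_k$'s never overlap in a single vertex (such an overlap vertex would be a cut vertex, contradicting $2$-connectivity), so by Corollary~\ref{intersectionseparate} any two overlap in a single shared edge or not at all; moreover, once one groups the $C_k$'s of $B$ that pairwise share one common edge (a ``book''), the resulting overlap pattern contains no cycle --- a cyclic chain of $C_k$'s (a ``necklace'') would produce an induced cycle of length $\neq k$. Each size-$2$ minimal vertex separator lying on a $C_k$ of $B$ is then exactly one such shared edge. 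The consequence I need is: a non-trivial block other than a single $C_k$ contains at least two induced $C_k$'s, each of which meets the rest of $B$ in exactly one shared edge $\{u,v\}$, and hence meets every other cycle of $G$ only at $u$, at $v$, or along $\{u,v\}$; combined with the fact that a $C_k$ in an end block can absorb at most one cut vertex of $G$, this identifies each such cycle as a $0$-, $1$-, or $2$-pendant $C_k$ according to how the cut vertex interacts with cycles of neighbouring blocks. Establishing this claim is where essentially all the work lies: in each forbidden configuration one exhibits either an induced cycle of length other than $k$ or two induced cycles sharing a $P_n$ with $n\ge 3$, contradicting $SC_k$-ness via Corollary~\ref{intersectionseparate} exactly as in the case analysis of Theorem~\ref{mvssc2k+1}.

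Granting the claim, I would finish through the block--cut tree $T$ of $G$. If $G$ is $2$-connected it is a single block; it is not a bridge and, since $G\neq C_k$, not a single $C_k$, so the claim gives two induced $C_k$'s that are $2$-pendant, which is case $(ii)$. Otherwise $T$ has at least two leaves, i.e.\ at least two end blocks $B_1,B_2$, each containing a unique cut vertex $c_j$ of $G$. For each $B_j$: if $B_j$ is a bridge $\{c_j,x_j\}$ then $x_j$ has degree one in $G$ and is a pendant vertex; if $B_j$ is non-trivial then, by the claim, $B_j$ contains an induced $C_k$ meeting the rest of $B_j$ in one shared edge and carrying at most the cut vertex $c_j$, which is then a $2$-pendant $C_k$ if $c_j$ lies on no cycle outside $B_j$, a $1$-pendant $C_k$ if $c_j$ lies on exactly one other cycle, and a $0$-pendant $C_k$ if $B_j$ is itself a single $C_k$ attached only through $c_j$. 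Each of $B_1,B_2$ thus contributes, as a distinct object, a pendant vertex or an $s$-pendant $C_k$ with $s\in\{0,1,2\}$; the two contributions give $(i)$, $(ii)$, or $(iii)$.

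The main obstacle is the structural claim of the second paragraph --- proving that $SC_k$-ness forces the $C_k$'s within a $2$-connected block to be glued in an acyclic, edge-to-edge fashion, which is what ultimately guarantees the existence of two leaf cycles. A secondary, more clerical task is to check, once the claim is available, that the leaf cycles and leaf vertices produced satisfy the precise defining conditions of $0$-, $1$-, and $2$-pendant $C_k$ (in particular the clauses about a unique cut vertex and about intersecting every other cycle only at $u$, $v$, or $\{u,v\}$), and that the two structures coming from the two end blocks are genuinely different; both of these are direct once the structure of blocks is pinned down.
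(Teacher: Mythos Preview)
Your block--cut-tree approach is genuinely different from the paper's and essentially front-loads the content of Theorem~\ref{decompositionsc2k+1} (which the paper proves \emph{after} this lemma, using the VCO that depends on it). The paper instead argues by a direct induction on $|V(G)|$: it splits into the case where some minimal vertex separator has size~$1$ and the case where every minimal vertex separator has size~$2$, picks such a separator $S$, forms the pieces $G'=[V(G_1)\cup S]$ and $G''=[V(G_2)\cup S]$, applies the hypothesis to each, and checks that the pendant vertex or $s$-pendant $C_k$ found there remains pendant in $G$. Your route yields a full description of the $2$-connected pieces (trees of $C_k$'s glued along single edges), at the cost of the deferred case analysis; the paper's induction is quicker but leaves that structure implicit.

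Three places in your sketch need repair. First, the parenthetical ``such an overlap vertex would be a cut vertex'' is not immediate: if $C_1\cap C_2=\{w\}$ inside a $2$-connected block $B$, then $w$ separates $C_1\cup C_2$ but need not separate $B$; the contradiction only comes after you use the extra path around $w$ to force a forbidden induced cycle --- this is exactly the work you postpone, so the parenthetical should be dropped rather than offered as a reason. Second, ``each size-$2$ minimal vertex separator lying on a $C_k$ of $B$ is then exactly one such shared edge'' is false as written (already a single $C_k$ has many non-adjacent pairs as size-$2$ separators); what you need, and what your structural claim actually delivers, is only that the separator detaching a leaf cycle from the rest of $B$ is one of the shared edges. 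Third, the closing classification slips: a leaf cycle of an end block $B_j$ that carries the cut vertex $c_j$ at a position other than $u,v$ and meets an outside cycle through $c_j$ is neither $1$-pendant nor $2$-pendant under the paper's definitions; you must instead pick a different leaf of the cycle-tree of $B_j$, which exists because a single vertex $c_j$ can lie off the shared edge of at most one such leaf.
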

\begin{proof}
 We shall partition the set of $SC_k$ graphs into $SC_k$ graphs with at least one minimal vertex separator of size one and $SC_k$ graphs with every minimal vertex separator is of size two. In both the cases, we shall prove the lemma by mathematical induction on the number of vertices $n$ of $G$.
\begin{description}
\item[Case 1:] There is a minimal vertex separator of size one. \\
\noindent \emph{Base cases:} 
\begin{itemize}
\item[(A)] $G$ be a tree on $n$ vertices, $2 \leq n \leq 2k-1$. Trivially, $G$ has two non-adjacent pendant vertices as there are at least two leaves (degree one vertex) in any tree.
\item[(B)] $G$ is not a tree on $n$ vertices, $n=2k-1$. Clearly, $G$ has two $C_k$ sharing a vertex in common. So, $G$ has two $1$-pendant $C_k$.
\item[(C)] $G$ is a graph different from (A) and (B) on $n$ vertices, $k+1 \leq n \leq 2k-1$. It is easy to see that $G$ has either a $0$-pendant $C_k$ and a pendant vertex or two pendant vertices.
\end{itemize}
\noindent Let $G$ be an $SC_k$ graph with $n \geq 2k$ vertices. Let $S$ be any minimal vertex separator of $G$ such that $\vert S \vert = 1$. Let $G_1$ and $G_2$ be any two connected components in $G \backslash S$. Let $G'$ and $G''$ be the graphs induced on $V(G_1)\cup V(S)$ and  $V(G_2)\cup V(S)$, respectively. If both $G'$ and $G''$ are $C_k$, then there are two $1$-pendant $C_k$'s in $G$. Otherwise, by the induction hypothesis, $G'$ and $G''$ have a $s$-pendant $C_k$, $s \in \{0,1,2\}$, or a pendant vertex, which are also pendant in $G$. Hence the claim.
\item[Case 2:] Every minimal vertex separator is of size two. Let $G$ be an $SC_k$ graph and $S$ be any minimal vertex separator of $G$ such that $S = \{u, v\}$ and $\{u,v\} \in E(G)$.\\
\noindent \emph{Base case:} For $n=2k-2$, an $SC_k$ graph with $2k-1$ edges has two $2$-pendant $C_k$'s.\\
\noindent Let $G$ be an $SC_k$ graph with $n \geq 2k-1$ and $G'$ and $G''$ as defined before. By the hypothesis, $G'$ and $G''$ have a $2$-pendant $C_k$ which are also a $2$-pendant $C_k$ in $G$.
\end{description}
Thus the lemma is true for all $SC_k$ graphs, $k=2m+3, m \geq 1$. $\hfill \qed$

\end{proof}

\begin{lemma}
\label{specialvertex2k}
An $SC_k$ graph $G$ other than $C_k$, $k = 2m+4, m \geq 1$, has any one of the following properties:
\begin{itemize}
\item[(i)] Two non-adjacent pendant vertices.
\item[(ii)] Two $s$-pendant $C_k$, $s \in \{0,1,2,\frac{k}{2}+1\}$.
\item[(iii)] An $s$-pendant $C_k$ and a pendant vertex, $s \in \{0,1,2,\frac{k}{2}+1\}$.
\end{itemize}
\end{lemma}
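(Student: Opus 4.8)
\textbf{Proof proposal for Lemma \ref{specialvertex2k}.}

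The plan is to mirror the structure of the proof of Lemma \ref{specialvertex2k+1}, partitioning the class of $SC_k$ graphs, $k = 2m+4$, according to the maximum size of a minimal vertex separator, and then arguing by induction on $n = |V(G)|$. By Theorem \ref{mvssc2ksize} and Lemma \ref{mvsindependent}, every minimal vertex separator $S$ satisfies $|S| \le s$ (the size of the maximum cage), and when $|S| \ge 3$ it is an independent set; moreover Lemma \ref{mvssck} handles the situation where all induced cycles pairwise meet in at most one vertex or one edge. So the case split I would use is: (1) $G$ has a minimal vertex separator of size one; (2) every minimal vertex separator has size exactly two; (3) $G$ has a minimal vertex separator $S$ with $|S| \ge 3$, which by Lemma \ref{mvsindependent} is independent and, by Theorem \ref{mvssc2ksize} together with Lemma \ref{intersection}(iii), forces two induced $C_k$'s sharing a $P_{k/2+1}$, i.e.\ the $\frac{k}{2}+1$-pendant situation. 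Cases (1) and (2) are essentially identical to the corresponding cases in Lemma \ref{specialvertex2k+1}: establish the base cases ($G$ a tree, $G$ two $C_k$'s sharing a vertex, or a small graph with few vertices giving a $0$-pendant $C_k$ plus a pendant vertex or two pendant vertices; for the size-two case a minimal $SC_k$ graph with an edge-separator has two $2$-pendant $C_k$'s), then for $n$ large pick a minimal separator $S$, split off $G' = [V(G_1)\cup S]$ and $G'' = [V(G_2)\cup S]$, and apply the induction hypothesis to each piece, observing that a pendant vertex or $s$-pendant $C_k$ in $G'$ or $G''$ remains one in $G$ (since $S$ separates, nothing outside the piece can create a new chord or a new cycle-intersection inside it).

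The genuinely new work is Case (3). Here I would argue as follows. Let $S$ be a minimal vertex separator with $|S| = t \ge 3$; it is independent and, since $G$ is $SC_k$ with $k$ even, the only way three internally disjoint $t$-paths can join the two ``apex'' vertices $w,z$ of a cage without creating a short or long induced cycle is for each pair of paths to form a $C_k$ and hence share the half-path structure of Lemma \ref{intersection}(iii); thus $G$ contains $CAGE(t, \frac{k}{2}+1)$ and any two of its paths together with $w,z$ give induced $C_k$'s meeting in a $P_{k/2+1}$. The base case is a graph that is exactly a maximum cage plus possibly pendant appendages at $w$ or $z$: then the two ``outermost'' paths of the cage are each $\frac{k}{2}+1$-pendant $C_k$'s (they meet every other cycle only along the common $P_{k/2+1}$, and each can carry at most one cut vertex, namely $w$ or $z$). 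For the inductive step with $n$ larger, take a minimal vertex separator; if some minimal separator has size $\le 2$ we are back in Cases (1)--(2) and done, so assume every minimal separator has size $\ge 3$, pick one, split into $G'$ and $G''$ along it as before, and apply induction; a $\frac{k}{2}+1$-pendant $C_k$ produced in a piece survives in $G$ because the separator blocks any cycle from the other side from intersecting it in anything other than (a sub-path of) the shared $P_{k/2+1}$, which is exactly condition 2 in the definition of an $s$-pendant cycle.

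The main obstacle I anticipate is bookkeeping in Case (3): verifying that when we re-attach $G'$ and $G''$ along a separator of size $\ge 3$, the candidate $\frac{k}{2}+1$-pendant cycle in one part is not ``spoiled'' by a cycle living partly in the other part --- concretely, ruling out a cycle $S_m$ with $\emptyset \ne E(S_i)\cap E(S_m)$ but $[V(S_i)\cap V(S_m)] \ne P_l$. This is where Corollary \ref{intersectionseparate} does the heavy lifting: any such $S_m$ would have to intersect $S_i$ in a way not permitted for even $k$ (neither $\le 1$ vertex, nor $\le 1$ edge, nor exactly $k/2$ edges), contradicting the corollary. A secondary nuisance is pinning down the exact values of $n$ for the base cases so that the induction has somewhere to land; I would state these as ``$n$ below the size of the smallest non-trivial $SC_k$ graph in each family'' rather than chase precise constants, exactly as the companion lemma does. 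Modulo this case analysis, the lemma follows. $\hfill\qed$
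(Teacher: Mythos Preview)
Your overall strategy --- induction on $n$, splitting on the size of a minimal vertex separator, and reusing the odd-$k$ argument for separators of size at most two --- is exactly what the paper does. The gap is in your Case (3). When $|S|\ge 3$ and you form $G'=[V(G_1)\cup S]$ and $G''=[V(G_2)\cup S]$, the induction hypothesis applied to $G'$ may return pendant vertices (or small-$s$ pendant cycles) that lie \emph{in} $S$, and these are not pendant in $G$, since every vertex of $S$ has neighbours in both $G_1$ and $G_2$. Your anticipated obstacle (a cycle from the other side ``spoiling'' a $(\tfrac{k}{2}{+}1)$-pendant cycle via an illegal intersection) is real but secondary; the primary issue is that you may simply fail to get a usable pendant object out of each piece at all.

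The paper handles this with two sub-cases you do not have. First, if both $G'$ and $G''$ have their pendant vertices only inside $S$, it argues that $G$ is then exactly a $CAGE(|S|,\tfrac{k}{2}+1)$ and reads off two $(\tfrac{k}{2}{+}1)$-pendant $C_k$'s directly --- this is close to your ``base case'' for Case (3), but in the paper it is invoked \emph{inside} the inductive step, not as a separate anchor for the induction on $n$. Second, and this is the step you are missing, if exactly one side (say $G'$) yields a pendant vertex $u$ or an $s$-pendant cycle $C$ lying in $G_1$ (hence genuinely pendant in $G$) while $G''$ yields nothing usable outside $S$, the paper does \emph{not} try to extract a second pendant object from $G''$; instead it deletes $u$ (respectively the vertices of $C$ not shared with the rest of $G$) from the whole of $G$ and re-applies the induction hypothesis to that strictly smaller graph, then combines the original pendant object with whatever the smaller graph provides. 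Your split-and-induct-on-both-pieces scheme cannot reproduce this ``remove one and re-induct on the remainder'' move, so as written your Case (3) does not close.
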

\begin{proof}
We use induction on $n$, the number of vertices in $G$.\\
\noindent \emph{Base cases:} \vspace{-0.16cm}
\begin{itemize}
\item[(A)] For $2 \leq n \leq 2k-1$, any tree with $n$ vertices has exactly two non-adjacent pendant vertices.
\item[(B)] $G$ is not a tree on $n$ vertices, $n=2k-1$. $G$ has two $1$-pendant $C_k$'s, or two pendant vertices, or a $0$-pendant $C_k$ and a pendant vertex.
\item[(C)] $G$ is not a tree on $\frac{3k}{2}-1$ vertices, $G$ has two $(\frac{k}{2}+1)$-pendant $C_k$, or two pendant vertices, or a $0$-pendant $C_k$ and a pendant vertex.
\item[(D)] $G$ is not a tree on $2k-2$ vertices. $G$ has any one of the following:
\begin{itemize}
\item[$\bullet$] two $2$-pendant $C_k$'s.
\item[$\bullet$] three $(\frac{k}{2}+1)$-pendant $C_k$.
\item[$\bullet$] a pendant vertex and a $2$-pendant $C_k$.
\item[$\bullet$] a pendant vertex and a $(\frac{k}{2}+1)$-pendant $C_k$.
\item[$\bullet$] a $0$-pendant $C_k$ and a pendant vertex.
\item[$\bullet$] two pendant vertices.
\end{itemize}
 
\item[(E)] $G$ is a graph different from (A) and (C) on $n$ vertices, $k+1 \leq n < 2k-2$. $G$ has either a $s$-pendant $C_k$, $s\in \{0,\frac{k}{2}+1\}$, and one pendant vertex or two pendant vertices.
\end{itemize}
\noindent Let $G$ be an $SC_k$ graph with $n \geq 2k$ vertices. Let $S$ be any minimal vertex separator of $G$. Let $G_1$ and $G_2$ be any two connected components in $G \backslash S$. Let $G'$ and $G''$ be the graphs induced on $V(G_1)\cup V(S)$ and  $V(G_2)\cup V(S)$, respectively. If  $ S  = \{u\}$ or $ S  = \{u,v\}$ such that $\{u,v\} \in E(G)$, by the induction hypothesis, both $G'$ and $G''$ have a pendant vertex or a $s$-pendant $C_k$, $s\in\{0,1,2,\frac{k}{2}+1\}$, which are also pendant in $G$. If $\mid S \mid \geq 3$, then $S$ is an independent set, by \emph{Lemma \ref{mvsindependent}}. The possible existence of $(i)$, $(ii)$, or $(iii)$ in this case are as follows:

\begin{itemize}
\item[(1)] If $G'$ (as well as $G''$) has a pendant vertex in $G_1$ (as well as $G_2$), it is also a pendant vertex in $G$.
\item[(2)] If $G'$ (as well as $G''$) has a $s$-pendant $C_k$, $s\in\{0,1,2,\frac{k}{2}+1\}$ in $G_1$ (as well as $G_2$), it is also pendant in $G$.
\item[(3)] If $G'$ and $G''$ do not have any pendant vertices and $s$-pendant $C_k$, $s \in \{0,1,2,\frac{k}{2}+1\}$ in $G_1$ and $G_2$, respectively, and if $G'$ and $G''$ have pendant vertices only in $S$. Since, $S$ is a minimal vertex separator of size greater than two, the only possibility of $G$ is $CAGE(\mid S \mid, \frac{k}{2}+1), l \geq 3$. Thus, $G$ has at least two $(\frac{k}{2}+1)$-pendant $C_k$.
\item[(4)] If either $G'$ has a pendant vertex or a $s$-pendant $C_k$, $s\in \{0,1,2,\frac{k}{2}+1\}$ in $G_1$, and $G''$ has neither of them in $G_2$. If $G'$ itself has any one of $(i)$, $(ii)$ and $(iii)$ in $G_1$, then there is nothing to prove. If $G'$ has a pendant vertex $u$ in $G_1$, then $G\backslash \{u\}$ may have any one of $(i)$, $(ii)$ and $(iii)$, by the hypothesis. If $G\backslash \{u\}$ has none of (i), (ii), and (iii) then, $G\backslash \{u\}$ is a CAGE, thus $G\backslash \{u\}$ has a pendant cycle together with $u$, our claim follows in $G$. If $G\backslash \{u\}$ has $(i)$, then $G$ has two pendant vertices, one is $u$ and the other is from $G\backslash \{u\}$. If $G\backslash \{u\}$ has $(ii)$, then $G$ has a pendant vertex and $s$-pendant cycle from $G\backslash \{u\}$. If $G\backslash \{u\}$ has $(iii)$, then $G$ has a pendant vertex and either a $s$-pendant cycle or a pendant vertex from $G\backslash \{u\}$. If $G'$ has a $s$-pendant $C_k$ in $G_1$, $s\in \{0,1,2,\frac{k}{2}+1\}$, say $C$, then $G\backslash (V(C)\backslash (V(G)\cap V(C)))$ has any one of $(i)$, $(ii)$ and $(iii)$, by the hypothesis. Let $M=G\backslash (V(C)\backslash (V(G)\cap V(C)))$. Thus, if $M$ has $(i)$, then $G$ has a pendant vertex from $M$ and a $s$-pendant $C_k$ from $G_1$, if $M$ has $(ii)$, then $G$ has two $s$-pendant cycle's one from $G_1$ and the other from $M$, if $M$ has $(iii)$, then $G$ has a $s$-pendant $C_k$ from $G_1$ and either a $s$-pendant cycle or a pendant vertex from $M$.
\end{itemize}

Thus the lemma is true for all $SC_k$ graphs, $k=2m+4, m \geq 1$. $\hfill \qed$

\end{proof}

\begin{theorem}
\label{construction}
A connected graph is $SC_k, k\geq 5$, if and only if it can be constructed using the following rules.
\begin{itemize}
\item[(i)] $K_1$ is an $SC_k$ graph.
\item[(ii)] $C_k$ is an $SC_k$ graph.
\item[(iii)] If $G$ is an $SC_k$ graph, then the graph $G'$, where, $V(G') = V(G) \cup \{v\}$, $E(G') = E(G) \cup \{u, v\}$ such that $v \notin V(G)$ and $u$ is any vertex in $V(G)$,  is also an $SC_k$ graph.
\item[(iv)] If $G$ is an $SC_k$ graph, then the graph $G'$, where, $V(G') = V(G) \cup \{v_1, v_2, \ldots, v_{k-1}\}$, $E(G') = E(G) \cup \{\{u, v_1\}, \{v_1, v_2\}, \{v_2, v_3\},  \ldots, \{v_{k-2}, v_{k-1}\}, \{v_{k-1}, u\} \}$ such that $\{v_1, v_2, \ldots, v_{k-1}\} \cap V(G) = \emptyset$ and $u$ is any vertex in $V(G)$,  is also an $SC_k$ graph.
\item[(v)] If $G$ is an $SC_k$ graph, then the graph $G'$, where, $V(G') = V(G) \cup \{v_1, v_2, \ldots, v_{k-2}\}$, $E(G') = E(G) \cup \{\{u, v_1\}, \{v_1, v_2\}, \{v_2, v_3\}, \ldots, \{v_{k-2}, v\} \}$ such that $\{v_1, v_2, \ldots, v_{k-2}\} \cap V(G) = \emptyset$ and $\{u, v\}$ is any edge in $E(G)$,  is also an $SC_k$ graph.
\item[(vi)] If $G$ is an $SC_k$ graph and $ k=2m+4, m \geq 1$, then the graph $G'$, where, $V(G') = V(G) \cup \{v_1, v_2, \ldots, v_{\frac{k}{2}-1}\}$, $E(G') = E(G) \cup \{\{u_1, v_1\}, \{v_1, v_2\}, \{v_2, v_3\}, \ldots, \{v_{\frac{k}{2}-1}, u_{\frac{k}{2}+1}\} \}$ such that $\{v_1, v_2, \ldots, v_{\frac{k}{2}-1}\} \cap V(G) = \emptyset$ and $\{u_1,u_2,\ldots,u_{\frac{k}{2}+1}\}$ is any path of length $\frac{k}{2}+1$ contained in no induced cycle in $G$ or in any one induced cycle $S_i$ of length $k$ in $G$ such that there does not exist an induced cycle $S_j$ in $G$ with $V(S_i) \cap V(S_j) = \{w_1,\ldots,w_{\frac{k}{2}+1}\}$, $w_p = u_p$ for some $p \in\{1,\ldots, \frac{k}{2}+1\}$ and for at least one $q \in\{1,\ldots, \frac{k}{2}+1\}$, $w_q \neq u_q$.
\end{itemize}
\end{theorem}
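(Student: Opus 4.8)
The plan is to prove both implications. The ``if'' direction (every graph produced by rules (i)--(vi) is $SC_k$) is a rule-by-rule verification; the ``only if'' direction (every connected $SC_k$ graph is so produced) is an induction on $|V(G)|$ driven by the pendant-structure guarantees of Lemmas \ref{specialvertex2k+1} and \ref{specialvertex2k}.

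\emph{Soundness.} Rules (i), (ii) are trivial, and in (iii) the new vertex is pendant, hence on no cycle, so the induced cycles of $G'$ are exactly those of $G$. For (iv)--(vi) the uniform point is that the added vertices induce a path $P$ whose internal vertices have degree $2$ in $G'$ and only whose end-vertices have neighbours in $V(G)$. Consequently (a) no chord of any cycle of $G$ is created, so every induced cycle of $G$ persists in $G'$, and (b) any induced cycle of $G'$ using a new vertex contains all of $P$, hence equals $P$ closed up by an induced path of $G$ joining the two ends of $P$. In (iv) the ends of $P$ coincide with $u$, so $P$ is a cycle, it is induced (the internal vertices have degree $2$), it has length $k$, and it is the only new induced cycle. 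In (v) the ends of $P$ form an edge $\{u,v\}$ of $G$, so a new induced cycle is $P$ plus an induced $u$--$v$ path $Q$ of $G$; but if $|Q|\geq 2$ then $\{u,v\}$ is a chord of $P\cup Q$, so $Q$ must be the edge $uv$ and the only new induced cycle is again a $C_k$. In (vi) one checks first, as a quick consequence of $SC_k$-ness, that $(u_1,\dots,u_{\frac k2+1})$ is an induced path with non-adjacent ends, so a new induced cycle is $P$ plus an induced $u_1$--$u_{\frac k2+1}$ path $Q$ of $G$; it then suffices to prove every such $Q$ has exactly $\frac k2$ edges, so that $P\cup Q$ is a $C_k$. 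This is exactly where the side condition of (vi) is used: if $(u_1,\dots,u_{\frac k2+1})$ lies on no induced cycle of $G$, or on an induced $C_k$ with no differently aligned companion cycle, then Lemma \ref{intersection} and Corollary \ref{intersectionseparate} force $Q$ either to be internally disjoint from $(u_1,\dots,u_{\frac k2+1})$, whence the two form an induced $C_k$ of $G$ and $|Q|=\frac k2$, or to overlap it and split into strictly smaller closed walks whose intersection pattern is impossible in an $SC_k$ graph unless, again, $|Q|=\frac k2$.

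\emph{Completeness.} We induct on $n=|V(G)|$, with base cases $G=K_1$ (rule (i)) and $G=C_k$ (rule (ii)). Let $n\geq 2$ and $G\neq C_k$. By Lemma \ref{specialvertex2k+1} when $k=2m+3$, or Lemma \ref{specialvertex2k} when $k=2m+4$, $G$ has a pendant vertex or an $s$-pendant $C_k$. Since an induced cycle of an induced subgraph of $G$ is itself an induced cycle of $G$, the class $SC_k$ is hereditary; hence if $w$ is a pendant vertex then $G\setminus\{w\}$ is a connected $SC_k$ graph on $n-1$ vertices, constructible by the induction hypothesis, and rule (iii) recovers $G$. If $S$ is an $s$-pendant $C_k$, delete from $G$ the vertices of $S$ that are not shared with the rest of $G$; by the relevant pendant definition together with Corollary \ref{intersectionseparate}, these have degree $2$ and induce a path whose two ends remain joined inside $S$, so the resulting graph $M$ is a connected $SC_k$ graph with fewer vertices. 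By the induction hypothesis $M$ is constructible, and the matching rule rebuilds $G$: rule (iv) when $s\in\{0,1\}$ (re-attaching the $(k-1)$-vertex path at the unique cut/shared vertex), rule (v) when $s=2$ (re-attaching the $(k-2)$-vertex path across the separating edge), and rule (vi) when $s=\frac k2+1$ (re-attaching the $(\frac k2-1)$-vertex path across the shared path $(u_1,\dots,u_{\frac k2+1})$). In the last case one must check that rule (vi)'s side hypothesis holds in $M$: the shared path lies on the companion cycle of $S$, and the definition of a $(\frac k2+1)$-pendant $C_k$ precisely forbids any cycle of $G$, hence of $M$, overlapping that companion cycle in a differently aligned $(\frac k2+1)$-vertex path.

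The step I expect to be the main obstacle is this interface in the even case: showing that excising the private part of an $s$-pendant $C_k$ leaves an $SC_k$ graph for which rule (vi)'s delicate side condition is satisfied, and conversely that rule (vi) under that condition never creates an induced cycle of length $\neq k$. Both reduce to a careful analysis of how two induced cycles, or an induced cycle and an induced path, of an $SC_k$ graph can share vertices and edges, for which Lemma \ref{intersection} and Corollary \ref{intersectionseparate} are the essential tools; the rest --- counting vertices on the attached paths, checking degrees, and verifying connectivity after deletion --- is routine bookkeeping.
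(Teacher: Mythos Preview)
Your proposal is correct and follows essentially the same strategy as the paper: sufficiency by a rule-by-rule check that no induced cycle of length $\neq k$ is created, and necessity by induction on $|V(G)|$ using Lemmas~\ref{specialvertex2k+1} and~\ref{specialvertex2k} to locate a pendant vertex or $s$-pendant $C_k$, excise its private part, and rebuild via the matching rule. Your treatment of rules (iv) and (v) is in fact more explicit than the paper's (you spell out why the closing path in $G$ must be the single vertex or the single edge), and your handling of rule~(vi) via Lemma~\ref{intersection}/Corollary~\ref{intersectionseparate} is the same mechanism the paper uses, only the paper writes out the concrete overlap cases of a companion cycle $S_j$ rather than citing the lemma abstractly.
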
 
\begin{proof}
\begin{description}

\item[{Necessity:}] Given that $G$ is an $SC_k$ graph. By \emph{Lemma \ref{specialvertex2k+1}} and \emph{Lemma \ref{specialvertex2k}}, $SC_k$ graph has at least one pendant vertex or a $s$-pendant $C_k$, $s \in\{0,1,2,\frac{k}{2}+1\}$, and we denote them using the label $x_1$. Consider the graph $G-x_1$ obtained from $G$ by removing the label $x_1$, i.e., remove a pendant vertex or a $s$-pendant $C_k$, $s \in\{0,1,2,\frac{k}{2}+1\}$. Since $SC_k$ graphs respect hereditary property, $G-x_1$ contains a label $x_2$ which is a pendant vertex or a $s$-pendant $C_k$, $s \in\{0,1,2,\frac{k}{2}+1\}$. Repeat the previous step by removing the label $x_2$. Clearly, in at most $n$ iterations we can get an ordering among labels which we call us \emph{vertex cycle ordering(VCO)}. Clearly, the reverse of VCO gives the construction of the underlying $SC_k$ graph. This completes the necessity.
\item[{Sufficiency:}] Let $G'$ be a graph constructed using the rules (i) to (vi). We shall prove the theorem by mathematical induction on the number of iterations needed to construct $G'$.
\begin{description}
\item[Case 1:] $G'$ is obtained by rule (iii).
 
The vertex set and the edge set of the graph $G'$ are $V(G')=V(G) \cup \{v\}$ and $E(G')=E(G) \cup \{x,v\}$, for some $x \in V(G)$, respectively. By the hypothesis, $G$ is an $SC_k$ graph and the newly added edge $\{x, v\}$ does not create any new cycle in $G'$. Thus, $G'$ is also an $SC_k$ graph.

\item[Case 2:] $G'$ is obtained by rule (iv)

For any $u \in V(G)$, $C$ = ($u, u_1, \ldots, u_{k-1}$) be the newly added $C_k$. The vertex set and the edge set of the graph $G'$ are $V(G')=V(G) \cup V(C)$ and $E(G')=E(G) \cup E(C)$, respectively. By the hypothesis, $G$ is an $SC_k$ graph and $C$ does not induce a cycle other than $C_k$ in $G$. Thus, $G'$ is also an $SC_k$ graph.

\item[Case 3:] $G'$ obtained by rule (v)

For any edge $\{u, v\} \in E(G)$, $D$ = ($u, u_1, \ldots, u_{k-2}, v$) be the newly added $C_k$. The vertex set and the edge set of the graph $G'$ are $V(G')=V(G) \cup V(D)$ and $E(G')=E(G) \cup E(D)$, respectively. By the hypothesis, $G$ is an $SC_k$ graph and $D$ does not induce a cycle other than $C_k$ in $G$. Thus, $G'$ is also an $SC_k$ graph.

\item[Case 4:] $G'$ obtained by rule (vi)

For any path of length $\frac{k}{2}+1$, $P=(u_1,u_2,\ldots,u_{\frac{k}{2}+1})$ contained in no induced cycle in $G$. $D=(u_1,v_1,\ldots,v_{\frac{k}{2}-1}, u_{\frac{k}{2}+1},\ldots, u_2)$ be the newly added $C_k$. The vertex set and the edge set of the graph $G'$ are $V(G')=V(G)\cup V(D)$ and $E(G')=E(G)\cup E(D)$, respectively. By the hypothesis, $G$ is an $SC_k$ graph and $D$ does not induce a cycle other than $C_k$ in $G$. Thus, $G'$ is also an $SC_k$ graph. 


For any path of length $\frac{k}{2}+1$, $P=(u_1,u_2,\ldots,u_{\frac{k}{2}+1})$ contained in any one induced cycle $S_i=(u_1,\ldots, u_k)$ of length $k$ in $G$. Let $D=(u_1,v_1,\ldots,v_{\frac{k}{2}-1}, u_{\frac{k}{2}+1},\ldots, u_2)$ be the newly added $C_k$. The vertex set and the edge set of the graph $G'$ are $V(G')=V(G)\cup V(D)$ and $E(G')=E(G)\cup E(D)$, respectively. If there exist an induced $S_j=(w_1,\ldots,w_k)$ in $G$ with $V(S_i) \cap V(S_j) = \{w_1,\ldots,w_{\frac{k}{2}+1}\}$, $w_p = u_p$ for some $p \in\{1,\ldots, \frac{k}{2}+1\}$ and for at least one $q \in\{1,\ldots, \frac{k}{2}+1\}$, $w_q \neq u_q$, then the possible cases are as follows:

\begin{itemize}
\item[$\bullet$] If $(w_1,\ldots,w_{\frac{k}{2}+1}) = (u_2,\ldots,u_{\frac{k}{2}+2})$, then $(u_2 = w_1,u_1,v_1, \ldots, v_{\frac{k}{2}-1}, u_{\frac{k}{2}+1}, u_{\frac{k}{2}+2}, w_{\frac{k}{2}+2}, \ldots, w_{k-1})$ forms an induced cycle of length $k$+$2$.

\item[$\bullet$] If $(w_1,\ldots,w_{\frac{k}{2}+1}) = (u_{\frac{k}{2}},\ldots, u_{k})$, then $(u_1,v_1, \ldots, v_{\frac{k}{2}-1}, w_2=u_{\frac{k}{2}+1}, w_1=u_{\frac{k}{2}}, w_k, w_{k-1}, \ldots,$ $w_{\frac{k}{2}-1}=u_k)$ forms an induced cycle of length $k$+$2$.

\item[$\bullet$] If $(w_1,\ldots,w_{\frac{k}{2}+1}) = (u_s, u_{s+1}, \ldots, u_{s+\frac{k}{2}})$, $s \in \{3,\ldots, \frac{k}{2}-1\}$, then $(u_s, u_{s-1}, \ldots, u_1, v_1, \ldots, v_{\frac{k}{2}-1},$ $u_{\frac{k}{2}+1}, u_{\frac{k}{2}+2}, \ldots, u_{\frac{k}{2}+s}, w_{\frac{k}{2}+2}, \ldots, w_k)$ forms an induced cycle of length $k$-$2$+$s$, which is always greater than $k$.
\end{itemize}

All the above cases contradicts the definition of $SC_k$ graph. Thus, there does not exist an induced $S_j=(w_1,\ldots,w_k)$ in $G$ with $V(S_i) \cap V(S_j) = \{w_1,\ldots,w_{\frac{k}{2}+1}\}$, $w_p = u_p$ for some $p \in\{1,\ldots, \frac{k}{2}+1\}$ and for at least one $q \in\{1,\ldots, \frac{k}{2}+1\}$, $w_q \neq u_q$. By the hypothesis, $G$ is an $SC_k$ graph and $D$ does not induce a cycle other than $C_k$ in $G$. Thus, $G'$ is also an $SC_k$ graph.  $\hfill \qed$
\end{description}
\end{description}
\end{proof}

\begin{lemma}
\label{mindegree}
Let $G$ be an $SC_k$ graph, where $k \geq 5$. Then, the minimum degree of $G$ is at most 2. I.e., $\delta(G) \leq 2$.
\end{lemma}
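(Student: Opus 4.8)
The plan is to derive this at once from the constructive characterization in Theorem~\ref{construction}. Since $G$ is a connected $SC_k$ graph with $k\ge 5$, it can be built from $K_1$ or from $C_k$ by finitely many applications of rules (iii)--(vi). First I would dispose of the two base cases: if $G=K_1$ then $\delta(G)=0$, and if $G=C_k$ then $\delta(G)=2$; in both cases $\delta(G)\le 2$, as required.

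Otherwise $G$ is not a base case, so fix some construction of $G$ and let the \emph{last} rule applied be one of (iii), (iv), (v), (vi). I would then simply read off, from the edge set each rule prescribes, the degree in $G$ of the vertices this last step introduces. Rule (iii) adds a single new vertex $v$ together with the one new edge $\{u,v\}$, so $d_G(v)=1$. Each of the rules (iv), (v), (vi) appends a path of pairwise distinct brand-new vertices $v_1,v_2,\ldots$ whose two ends are joined to vertices already present and whose every internal vertex $v_i$ is incident only with $\{v_{i-1},v_i\}$ and $\{v_i,v_{i+1}\}$; because this is the final construction step, no further edges are ever added, hence every newly introduced $v_i$ has $d_G(v_i)=2$. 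Thus in every case $G$ has a vertex of degree at most $2$, i.e.\ $\delta(G)\le 2$.

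I expect essentially no obstacle: the only point requiring care is that one must examine the vertices created by the \emph{last} step (so their degrees are not inflated by a subsequent application of (iv)--(vi) anchored at one of them), and must note that in (iv)--(vi) it is the \emph{endpoints} of the appended path—already-present vertices—whose degree grows, not the internal new vertices. An alternative route would invoke Lemma~\ref{specialvertex2k+1} and Lemma~\ref{specialvertex2k} to locate a pendant vertex or an $s$-pendant $C_k$ and then exhibit a degree-$2$ vertex on such a pendant cycle (using Corollary~\ref{intersectionseparate} to rule out extra incidences), but that demands more case analysis on the vertices of the pendant $C_k$, so the construction-based argument is the cleaner one.
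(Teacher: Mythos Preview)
Your argument is correct and is essentially the paper's own approach: the paper performs induction on the length of the VCO $(x_1,\ldots,x_{l+1})$, lets $S$ be the piece corresponding to the last label $x_{l+1}$, and bounds $\delta(G)$ by observing that the vertices of $S$ lying outside the previously built graph have degree at most~$2$. Since the VCO is precisely the reverse of a construction sequence from Theorem~\ref{construction}, ``the last label in the VCO'' and ``the last rule applied'' are the same object, so your case split over rules (iii)--(vi) and the paper's inequality $\delta([V(S)\setminus V(M)])\le 2$ encode the same observation.
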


\begin{proof}
Let us prove the theorem by induction on the length $l$ of VCO of $G$.
\\
\noindent Consider an ordering $(x_1, x_2, \ldots, x_{l+1})$, $l \geq 1$. The label $x_{l+1}$ corresponds to a vertex $v$ or a $C_k$. Let $S$ be the graph corresponds to $x_{l+1}$. Let $M$ be the graph induced on $V(S)\cap V(H)$ in $G$. Thus,
\begin{eqnarray}
\nonumber
\delta(G) &=& \min \{\delta([V(H) \backslash V(M)]), \delta _G(M), \delta([V(S) \backslash V(M)])\} \\ \nonumber
&\leq & \min \{ \delta([V(H) \backslash V(M)]), \delta _G(M), 2\} \\ \nonumber
&\leq & 2 ~(\because \delta(H) \leq 2 ~~and ~~\delta _G(M) \geq 2 ,~ by~ hypothesis) 
\end{eqnarray}
$\hfill \qed$
\end{proof}

\section{Algorithmic results on $SC_k$ graphs}
In this section, we present a polynomial-time algorithm for testing whether an arbitrary graph is an $SC_k$ graph or not, for a fixed $k$. Further, we solve the famous combinatorial problems like coloring, hamiltonicity and treewidth for a given $SC_k$ graph.

\subsection{Recognizing $SC_k$ graphs}
We shall use the ordering on $SC_k$ graphs to test whether the given graph is $SC_k, k\geq 5$, graph or not. First, we present a decomposition theorem for $SC_{k}, k = 2m+4, m \geq 1$, graphs followed by the algorithm for testing $SC_{2k}$ graphs for any fixed $k$. Similarly, we shall produce a decomposition theorem for $SC_{k}, k = 2m+3, m \geq 1$, graphs along with its recognition algorithm.

\begin{definition}
A $bi$-$connected$ $graph$ is a connected graph with no cut vertex. A $bi$-$connected$ $component$ of a graph $G$ is a maximal bi-connected subgraph of $G$. 
\end{definition}

\begin{theorem}
\label{decompositionsc2k}
A graph $G$ is an $SC_{k}$ graph, $k = 2m +4, m \geq 1$, if and only if it can be decomposed into a set of connected components, such that each connected component is any one of the following: 
\begin{itemize}
\item a cut edge
\item a $C_{k}$
\item $CAGE(l, \frac{k}{2}+1)$, $l \geq 3$
\end{itemize}
\end{theorem}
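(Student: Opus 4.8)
\textbf{Proof plan for Theorem \ref{decompositionsc2k}.}

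The plan is to prove both directions by exploiting the vertex-cycle ordering (VCO) and the structural facts about minimal vertex separators already established. For the forward direction, suppose $G$ is an $SC_k$ graph with $k = 2m+4$. I would decompose $G$ along its cut vertices into bi-connected components (together with the cut edges joining them). Each bi-connected component $B$ is itself a $2$-connected $SC_k$ graph, so it has no pendant vertex and by Lemma \ref{mindegree} must contain an induced cycle; hence $B$ contains a $C_k$. I then argue that $B$ is either exactly a single $C_k$ or a $CAGE(l, \tfrac{k}{2}+1)$ for some $l \geq 3$. The key tool here is Lemma \ref{intersection}/Corollary \ref{intersectionseparate}: inside a $2$-connected $SC_k$ graph any two induced $C_k$'s that share more than an edge must share exactly $\tfrac{k}{2}$ edges (the even case), which forces the "theta-like'' structure of a cage with two apex vertices $w,z$ joined by internally disjoint paths of length $\tfrac{k}{2}+1$. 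Lemma \ref{mvssc2ksize} (bounding the minimal vertex separator by the maximum cage size) and Theorem \ref{mvssc2k+1}/Lemma \ref{mvssindependent} are used to rule out more complicated $2$-connected configurations, and Lemma \ref{mindegree} guarantees there is always a degree-$\leq 2$ vertex to peel off, so the decomposition terminates.

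For the converse direction, I would induct on the number of pieces in the decomposition. A single cut edge is $K_2$, trivially $SC_k$; a single $C_k$ is $SC_k$ by definition; and $CAGE(l,\tfrac{k}{2}+1)$ is $SC_k$ because its only induced cycles are obtained from two of the $l$ internally-disjoint $w$--$z$ paths, each such cycle having length $2\cdot\tfrac{k}{2} = k$ (one checks that a chord would create a shorter induced cycle, contradicting $SC_k$, exactly as in the chord analysis of Lemma \ref{intersection}). Then, gluing two $SC_k$ pieces along a single shared vertex (the cut vertex) creates no new induced cycle — any induced cycle lies entirely in one piece — so the glued graph is again $SC_k$. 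This is essentially the same argument used in cases (iii) and (iv) of Theorem \ref{construction}, so I would invoke that theorem to shortcut the routine verification: the decomposition rules here are a coarser repackaging of the construction rules (i)--(vi) restricted to $k$ even.

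The main obstacle I anticipate is the forward direction's claim that every $2$-connected $SC_k$ graph (for $k$ even) is either a $C_k$ or a single $CAGE(l,\tfrac{k}{2}+1)$ — i.e., that one cannot have, say, two different cages sharing structure, or a cage with an extra chordless path attached internally while staying $2$-connected. Handling this requires a careful case analysis on how induced $C_k$'s can overlap, leaning on Corollary \ref{intersectionseparate} to limit overlaps to the three listed types, and then showing that the "$\tfrac{k}{2}$ shared edges'' case propagates consistently so that all induced $C_k$'s in the component share the \emph{same} pair of apex vertices $w,z$. Once that is pinned down, the component is forced to be exactly a cage. The cut-edge/cut-vertex bookkeeping and the converse direction are then routine.
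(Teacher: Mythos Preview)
Your forward direction contains a genuine gap. You propose to decompose $G$ only along cut vertices into bi-connected components and then argue that every $2$-connected $SC_k$ block is either a $C_k$ or a single $CAGE(l,\tfrac{k}{2}+1)$. That claim is false. Take two copies of $C_k$ glued along a single common edge (a $2$-$C_k$ pyramid): this graph is $2$-connected, is a valid $SC_k$ graph (Corollary~\ref{intersectionseparate} explicitly allows $|E(S_i)\cap E(S_j)|=1$), yet it is neither a single $C_k$ nor a cage of size $\geq 3$. More generally one can hang further $C_k$'s or even a cage off one edge of a $C_k$ and remain $2$-connected. Your ``main obstacle'' paragraph anticipates difficulty here, but Corollary~\ref{intersectionseparate} cannot rescue you: it permits the one-edge overlap, so there is no mechanism forcing all induced $C_k$'s in a $2$-connected block to share a common apex pair $w,z$. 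The decomposition in the theorem is finer than block decomposition: one must additionally split along clique separators of size two (edges $\{u,v\}$ whose removal disconnects the block), as the paper makes explicit in the recognition algorithm immediately following the theorem. Correspondingly, your converse direction must allow gluing along a shared \emph{edge}, not only along a shared vertex; rules (iii)--(iv) of Theorem~\ref{construction} do not cover this, rule (v) does.

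The paper takes a different route for necessity: rather than a top-down separator decomposition, it inducts on the length of the VCO guaranteed by Lemma~\ref{specialvertex2k}. Peeling the last label $x_{l+1}$ yields either a pendant vertex (contributing a cut edge), an $s$-pendant $C_k$ with $s\in\{0,1,2\}$ (contributing a $C_k$ piece), or a $(\tfrac{k}{2}+1)$-pendant $C_k$ (which, when reattached to the inductively obtained decomposition, either creates a new cage or enlarges an existing one by one path). Your approach is salvageable if you add the size-$2$ clique-separator step and then prove that a $2$-connected $SC_k$ graph with no edge separator is a $C_k$ or a cage; but as written the central structural claim fails.
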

\begin{proof}
\textbf{Necessity:} We shall prove the necessity by mathematical induction on the length $l$ of VCO of $G$.\\
\noindent Consider an ordering $(x_1,\ldots,x_{l+1}), l\geq 1$. The label $x_{l+1}$ corresponds to either a vertex or a $C_k$. 
\begin{itemize}
\item[$\bullet$] 
If $x_{l+1}$ is a vertex $u$, then it is a pendant vertex in $G$ and $\{u\} \cup N_G(u)$ is an edge $e$. Note that $e$ is a cut edge. By the hypothesis, $G\backslash \{u\}$ has a decomposition $D$ where each connected component is a cut edge or a $C_k$ or a $CAGE(l, \frac{k}{2}+1)$, $l \geq 3$. Thus, $G$ can be decomposed into $D$ and a cut edge $e$.

\item[$\bullet$] 
If $x_{l+1}$ is a $s$-pendant $C_k$, $s \in \{0,1,2\}$, say $C$, then by the hypothesis, the graph obtained by the ordering $(x_1,\ldots,x_l)$ has a decomposition $D$ where each connected component is a cut edge or a $C_k$ or a $CAGE(l, \frac{k}{2}+1)$, $l \geq 3$. Thus, $G$ can be decomposed into $D$ and a cycle $C$.

\item[$\bullet$] 
If $x_{l+1}$ is a $(\frac{k}{2}+1)$-pendant $C_k$, say $C$, then by the hypothesis, the graph obtained by the ordering $(x_1,\ldots,x_l)$ has a decomposition $D$ where each connected component is a cut edge or a $C_k$ or a $CAGE(l, \frac{k}{2}+1)$, $l \geq 3$. Now, combine the cycle $C$ to the path $P_l$, which belongs to an induced cycle $C'$ in one of the connected components of $D$ and thus, the corresponding component results in a CAGE, by \emph{Theorem \ref{construction}}. Note that by introducing $x_{l+1}$, either a new CAGE is created or the size of the existing CAGE increased by one. Hence, we obtained a decomposition as per the theorem.

\end{itemize}

\noindent \textbf{Sufficiency:} Given a decomposition of a graph in which every connected component is an $SC_k$ graph. It is clear that, any two connected components are connected either by a vertex or by an edge and this will not induce any new cycle of length, which is not equal to $k$. Hence the claim. $\hfill \qed$
\end{proof}


\noindent From \emph{Theorem \ref{decompositionsc2k}}, we learn that the recognition of $SC_k$ graphs, $k = 2m+4, m \geq 1$, involves two simple steps. Given any arbitrary graph $G$: first, find the decomposition of the graph $G$ such that each connected component is free from the clique separators of size one and two. Now, for each connected component, check whether it is an edge or a 2-regular graph on $k$ vertices or a $CAGE(l, \frac{k}{2}+1)$, $l \geq 3$. If not, $G$ is not an $SC_k$ graph. Note that computing a decomposition where each connected component is free from the clique separators of size one and two for the graph $G$ involves three steps: (1) Find the bi-connected components of $G$, (2) in each component $G_i$ search for an edge $\{u,v\}$ whose removal disconnects $G_i$ (3) if the edge $ \{u,v\}$ exists then decompose $G_i$ as follows: find $G_i\backslash S$, where $S = \{u,v\}$ and add back the edge $\{u,v\}$ to every connected component of $G_i\backslash S$. Do this process recursively in each $G_i$ until there is no component with clique separators of size two. Testing whether a graph is CAGE or not involves the following steps:
\begin{itemize}
\item[1.] Search for two non-adjacent vertices with equal degree and the degree is at least three, say $d$, and all other vertices in the graph should be of degree two. If the above check is unsuccessful, then the given graph is not a CAGE. Otherwise, proceed with the next step. 
\item[2.] Draw BFS tree $T$ rooted at a maximum degree vertex.
\item[3.] To know whether $T$ corresponds to $CAGE(d, \frac{k}{2}+1)$, check whether the number of levels in $T$ is $\frac{k}{2}+1$, the root has degree $l$, and there are $(d-1)$ slanting edges between the last two levels. Further, the last level has exactly one vertex and $(d-1)$ slanting edges are from $v$ to all other vertices at last but one level except its parent. 
\end{itemize}

Clearly, all the above steps can be verified using the standard BFS and hence test can be done in $O(n+m)$ time, where $n$ and $m$ denotes the number of vertices and edges in $G$, respectively.

\begin{theorem}
\label{decompositionsc2k+1}
A graph $G$ is an $SC_{k}, k = 2m+3, m \geq 1$, graph if and only if it can be decomposed into a set of connected components, where every connected component of $G$ is any one of the following:
\begin{itemize}
\item[(i)] a cut edge
\item[(ii)] a $C_k$
\end{itemize}
\end{theorem}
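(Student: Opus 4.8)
The plan is to mirror the structure of the proof of Theorem \ref{decompositionsc2k}, but now working with the VCO supplied by Lemma \ref{specialvertex2k+1} instead of Lemma \ref{specialvertex2k}. The decisive simplification in the odd case $k = 2m+3$ is that, by Theorem \ref{mvssc2k+1}, every minimal vertex separator has size at most $2$, and when it has size $2$ the two vertices are adjacent (a $2$-pendant $C_k$ is attached along an edge); in particular there is no CAGE-type gluing along a path of length $\tfrac{k}{2}+1$, since $\tfrac{k}{2}$ is not an integer and the third clause of Corollary \ref{intersectionseparate} never applies. So the only building blocks that can arise are cut edges and induced $k$-cycles.

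For necessity I would induct on the length $l$ of a VCO $(x_1,\dots,x_{l+1})$ of $G$, exactly as in Theorem \ref{decompositionsc2k}. The base case $l=0$ is $K_1$ or $C_k$, which is already a single component of the allowed type. For the inductive step, the last label $x_{l+1}$ is, by Lemma \ref{specialvertex2k+1}, either a pendant vertex $u$ or an $s$-pendant $C_k$ with $s\in\{0,1,2\}$. If $x_{l+1}$ is a pendant vertex $u$, then $\{u\}\cup N_G(u)$ is a cut edge $e$; by the induction hypothesis $G\setminus\{u\}$ decomposes into cut edges and $C_k$'s, and adjoining $e$ keeps all components of the allowed type. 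If $x_{l+1}$ is an $s$-pendant $C_k$, say $C$, with $s\in\{0,1,2\}$, then removing the vertices of $C$ interior to $G$ leaves the graph obtained from the ordering $(x_1,\dots,x_l)$, which by hypothesis has a decomposition $D$ into cut edges and $C_k$'s; reattaching $C$ (at a single vertex, or along an edge, or disjointly) adds $C$ itself as a fresh $C_k$ component and possibly converts one existing component's cut edge into part of the cycle structure, but — crucially, since there is no $\tfrac{k}{2}$-edge overlap available in the odd case — $C$ never merges with an existing $C_k$ to form a larger biconnected block. Hence $G$ decomposes as $D$ together with the component $C$.

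For sufficiency the argument is the one already used at the end of Theorem \ref{decompositionsc2k}: given a decomposition into connected components each of which is a cut edge or a $C_k$, every pair of components meets in at most a single vertex or a single edge (a clique separator of size $\le 2$), so gluing them together cannot create any new induced cycle other than the $C_k$'s already present; each piece is visibly an $SC_k$ graph, and a connected graph built by identifying $SC_k$ graphs along vertices and edges is again $SC_k$. The main obstacle — and the place where one must invoke the odd-$k$ hypothesis rather than appeal blindly to the even-$k$ template — is ruling out CAGE-like components: one has to check that when $s=2$, the shared edge of the $2$-pendant cycle cannot be the start of an accumulating path overlap, which is precisely what Theorem \ref{mvssc2k+1} (minimal vertex separators of size $\le 2$) and Corollary \ref{intersectionseparate} (no $\tfrac{k}{2}$-edge intersection when $k$ is odd) guarantee. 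Everything else is a routine transcription of the even-$k$ proof with the CAGE clause deleted.
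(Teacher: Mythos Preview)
Your proposal is correct and follows essentially the same approach as the paper: induction on the length of the VCO using Lemma~\ref{specialvertex2k+1}, with the same two cases ($x_{l+1}$ a pendant vertex giving a cut edge, or an $s$-pendant $C_k$ with $s\in\{0,1,2\}$ giving a $C_k$ component), and the same sufficiency argument that gluing along a vertex or an edge creates no new induced cycles. Your extra remarks invoking Theorem~\ref{mvssc2k+1} and Corollary~\ref{intersectionseparate} to explain why no CAGE component arises are not needed---Lemma~\ref{specialvertex2k+1} already restricts $s$ to $\{0,1,2\}$---but they do no harm.
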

\begin{proof}
\textbf{Necessity:}  We shall prove this by mathematical induction on the length $l$ of VCO of $G$.
\noindent Consider an ordering $(x_1,\ldots,x_{l+1}), l\geq 1$. The label $x_{l+1}$ corresponds to either a vertex or a $C_k$. 
\begin{itemize}
\item[$\bullet$] 
If $x_{l+1}$ is a vertex $u$, then it is a pendant vertex in $G$ and $\{u\} \cup N_G(u)$ is a cut edge $e$ in $G$. By the hypothesis, $G\backslash \{u\}$ has a decomposition $D$ where each connected component is an edge or a $C_k$. Thus, $G$ can be decomposed into $D$ and a cut edge $e$.

\item[$\bullet$] 
If $x_{l+1}$ is a $s$-pendant $C_k$, $s \in \{0,1,2\}$, say $C$, then by the hypothesis, the graph obtained by the ordering $(x_1,\ldots,x_l)$ has a decomposition $D$ where each connected component is a cut edge or a $C_k$. Thus, $G$ can be decomposed into $D$ and a cycle $C$.

\end{itemize}

\noindent \textbf{Sufficiency:} Given a decomposition of a graph in which every connected component is an $SC_k$ graph. It is clear that, any two connected components are connected either by a vertex or by an edge and this will not induce any new cycle of length, which is not equal to $k$. Hence the claim. $\hfill \qed$
\end{proof}


\noindent From \emph{Theorem \ref{decompositionsc2k+1}}, we observe that the recognition of $SC_k$ graphs, $k = 2m+3, m \geq 1$, involves two simple steps. Given any arbitrary graph $G$: first, find the decomposition of the graph $G$ such that each connected component is free from the clique separators of size one and two. Now, for each connected component, check whether it is an edge or a 2-regular graph on $k$ vertices. If not, $G$ is not an $SC_k$ graph. Thus, we can recognize $SC_k$ graphs, $k = 2m+3, m \geq 1$, using BFS in $O(n+m)$ time, where $n$ and $m$ denotes the number of vertices and edges in $G$, respectively.

\subsection{Structure of non-tree edges in $SC_k$ graphs}
\begin{definition}
Let $G$ be a connected graph and $T$ be the \emph{Breadth First Search} ($BFS$) tree of $G$. Let $E(G)$ denotes the edges in the graph $G$ and $E(T)$ denotes the edges in the BFS tree $T$. The \emph{non-tree edges} are the edges in $E(G)\backslash E(T)$ i.e., the edges which are in graph $G$ but not in tree $T$.  
\end{definition}

\begin{definition}
Let $G$ be a connected graph and $T$ be the \emph{Breadth First Search} ($BFS$) tree of $G$. The set $E(G)\backslash E(T)$ is called as non-tree edges. A non-tree edge, $\{u,v\} \in E(G)\backslash E(T)$ is said to be a \emph{cross edge} if both $u$ and $v$ are in same levels of the tree $T$. A non-tree edge, $\{u,v\} \in E(G)\backslash E(T)$ is said to be a \emph{slanting edge} if both $u$ and $v$ are in adjacent levels of the tree $T$.
\end{definition}

\begin{definition}
A \emph{matching} in a graph $G$ is a set of independent edges.
\end{definition}

\begin{lemma}
\label{matching}
 Let $T$ be the BFS tree of an $SC_{2k+1}, k \geq 2$, graph $G$, then the set of non-tree edges of $G$ forms a matching.
\end{lemma}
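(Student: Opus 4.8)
The plan is to argue by contradiction: suppose the set of non-tree edges of $G$ does not form a matching, so two non-tree edges share a vertex. Since $T$ is a BFS tree, every non-tree edge is either a cross edge (both endpoints at the same level) or a slanting edge (endpoints at adjacent levels). First I would record the basic BFS facts I will use repeatedly: for any vertex $x$ at level $\ell$, the path in $T$ from the root to $x$ has length $\ell$, and for two vertices $x,y$ the tree path $P^T_{xy}$ through their lowest common ancestor has length $\ell(x)+\ell(y)-2\ell(\mathrm{lca}(x,y))$. Combining a non-tree edge $\{u,v\}$ with the tree paths from $u$ and $v$ up to their lowest common ancestor produces an induced-or-chorded cycle whose length is tightly controlled by the levels; because $G$ is $SC_{2k+1}$ and hence has odd chordality, any induced cycle must have the fixed odd length $2k+1$, which constrains the level differences.

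The core of the argument is a case analysis on the two non-tree edges $e_1=\{u,w\}$ and $e_2=\{u,z\}$ sharing the vertex $u$. I would split according to whether each of $e_1,e_2$ is a cross edge or a slanting edge, and on the relative levels of $w$ and $z$ and the positions of the lowest common ancestors $\mathrm{lca}(u,w)$, $\mathrm{lca}(u,z)$, $\mathrm{lca}(w,z)$ on the root-to-$u$ branch and on each other's branches. In each configuration one builds a closed walk from the two non-tree edges and three tree paths (root-side paths of $u$, $w$, $z$ down to the relevant common ancestors), extracts an induced cycle $C$ from it, and shows that its length cannot equal $2k+1$ — either because parity forces it even, or because the level arithmetic forces a length strictly different from $2k+1$, or because a chord across the configuration would create a second induced cycle. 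In every branch this contradicts the defining property of $SC_{2k+1}$ graphs (equivalently, contradicts \emph{Corollary \ref{intersectionseparate}} when two induced cycles are exhibited sharing a long path). A cleaner route, which I would try first, is to use the structural results already proved: by \emph{Theorem \ref{mvssc2k+1}} every minimal vertex separator of $G$ has size at most two, and by \emph{Theorem \ref{decompositionsc2k+1}} $G$ decomposes into connected pieces each of which is a single cut edge or a single $C_{2k+1}$. In such a decomposition the non-tree edges of a BFS tree are exactly the "back edges" inside the $C_{2k+1}$ blocks; since the blocks overlap only in cut vertices, and within one $C_{2k+1}$ block the BFS tree is a path (or near-path) leaving exactly one non-tree edge, the non-tree edges across all blocks are pairwise vertex-disjoint except possibly at cut vertices — and I would check that even at a cut vertex the two incident blocks cannot each contribute a non-tree edge at that vertex, because of how BFS roots the tree and orders the levels. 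This yields that the non-tree edges form a matching.

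Concretely, the steps in order: (1) invoke \emph{Theorem \ref{decompositionsc2k+1}} to write $G$ as a union of blocks, each a cut edge or a $C_{2k+1}$; (2) observe that a BFS tree restricted to a cut-edge block contributes no non-tree edge, and restricted to a $C_{2k+1}$ block contributes exactly one non-tree edge $e_B$, whose endpoints are the two level-maximal vertices of that cycle under BFS; (3) show two distinct blocks $B, B'$ sharing a cut vertex $c$ cannot both have their non-tree edge incident to $c$: if $B$ contributes a non-tree edge at $c$ then $c$ is a leaf-level vertex of the BFS restricted to $B$, but then the BFS layering forces $c$ to be "entered" through exactly one block, so in $B'$ the vertex $c$ is the entry point and is therefore at minimum level within $B'$, hence not an endpoint of $B'$'s non-tree edge; (4) conclude the non-tree edges are pairwise disjoint.

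The main obstacle I anticipate is step (2)–(3): making precise the claim that BFS inside a single $C_{2k+1}$ block leaves exactly one non-tree edge and pinning down exactly which vertex of the cycle can be shared as a cut vertex. BFS is a global process on $G$, not on one block, so I must argue that the restriction of the global BFS tree to a block $B$ is still a valid BFS-like spanning tree of $B$ (true because blocks are connected and attached only at cut vertices, so once BFS first reaches a block it explores it before leaving — or at least the induced structure inside $B$ is a spanning tree missing exactly the edges that are "long" relative to the block's internal level structure). If making the block-restriction argument rigorous proves delicate, I would fall back on the direct case analysis sketched in the second paragraph: assume $e_1,e_2$ share $u$, build the candidate cycles from tree paths plus the two non-tree edges, and kill each case by the parity/length constraint forced by odd chordality $2k+1$, together with \emph{Corollary \ref{intersectionseparate}} to rule out the cases producing two induced cycles sharing a path of three or more vertices.
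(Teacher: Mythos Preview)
The paper takes your fallback route: it argues by contradiction, assumes two non-tree edges share a vertex, and splits into cases according to whether the two edges are both cross edges, one cross and one slanting, or both slanting; in each subcase it builds explicit cycles out of tree paths and the non-tree edges and shows the resulting induced cycle has even length or length different from $2k+1$. It does not invoke the decomposition theorem at all.

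Your primary route through \emph{Theorem \ref{decompositionsc2k+1}} is a genuinely different and potentially much shorter argument, but as written it has a real gap: the pieces in that decomposition can overlap in an \emph{edge}, not only in a cut vertex (this is exactly the $2$-pendant $C_k$ situation in \emph{Theorem \ref{construction}}), so the clause ``blocks overlap only in cut vertices'' is false and your step~(3) does not cover all joins between blocks. Relatedly, the obstacle you anticipate in step~(2) is genuine in the edge-sharing case: when two $C_{2k+1}$ blocks share an edge $\{u,v\}$ and that edge is itself a non-tree edge of the global BFS tree, the restriction of $T$ to one block can have two components (one hanging from $u$, one from $v$), so that block contributes \emph{two} non-tree edges rather than one. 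This turns out not to cause a collision --- a short level-arithmetic check shows that $\{u,v\}$ and, say, $\{u,w_1\}$ cannot both be non-tree edges, since the first forces $|\ell(u)-\ell(v)|\le 1$ while the second would force $\ell(u)$ near $\ell(v)+2k-1$ --- but you have to make that argument explicitly. With the edge-overlap case patched, the decomposition approach buys a cleaner, more conceptual proof than the paper's page of case analysis; the paper's approach is longer and more tedious but entirely self-contained, needing nothing beyond the definition of $SC_{2k+1}$.
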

\begin{proof}
Construct a BFS tree $T$ for the given graph $G$ by fixing $r$ as a root. Since, $G$ is an $SC_{2k+1}, k \geq 2$ graph, the case where every non-tree edge in $T$ is a slanting edge, is not possible. Now our claim is to prove that the non-tree edges of T forms a matching. On the contrary, assume that the non-tree edges of T do not form a matching. i.e., there exist at least two non-tree edges with a common vertex. We shall partition the $SC_{2k+1}$ graphs into the graphs which has only cross edges in $T$ and the graphs which has both cross edges and slanting edges in $T$.

\begin{description}
\item[\bf{Case 1:}] The only non-tree edges in $T$ are cross edges. By our assumption,  there exist cross edges $e, f \in E(G)\backslash E(T)$ in the least level $l$ such that $e = \{u, v\}$ and $f = \{v, w\}$.

\begin{figure}[h]
\begin{center}
\includegraphics[scale=0.25]{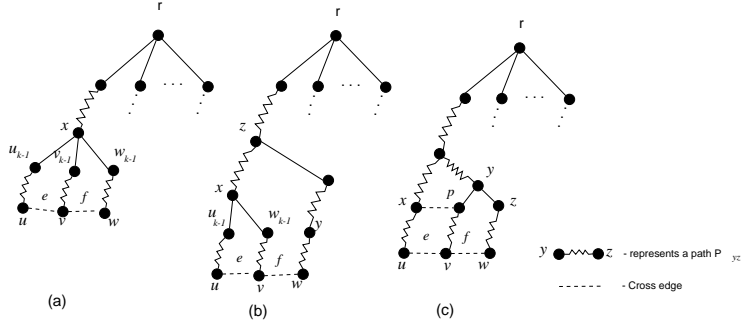}
\caption{BFS Tree $T$ of $G$ with cross edges $e$ and $f$}
\label{bfs1}
\end{center}
 \vspace{-0.7cm}
\end{figure}

\begin{description}
\item[$\bullet$] If for some $x \in V(T)$, $(P_{ux}, P_{vx})$ and $(P_{vx}, P_{wx})$ forms an induced $C_{2k+1}$ in $T$, where $\{u,v\}$ and $\{v,w\}$ are cross edges and all other edges are in $E(T)$, then $(P_{ux}, P_{wx}, v)$ forms an induced $C_{2k+2}$ (\emph{see} \emph{Figure }\ref{bfs1}(a)). 

\item[$\bullet$] If for some $x \in V(T)$, $(P_{ux}, P_{vx})$ forms an induced $C_{2k+1}$ and if there exists $y, z \in V(T)$ and $P_{wy}$ in $T$ such that $z$ is a common parent of $x$ and $y$, then $(P_{vx}, P_{xz}, P_{zy}, P_{yw})$ forms an induced $C_n$, $n > 2k+1$  (\emph{see} \emph{Figure }\ref{bfs1}(b)). 

\item[$\bullet$] If for some $y\in V(T)$, $(P_{yw},P_{vy})$ forms an induced $C_{2k+1}$ in $T$, where $\{v,w\}$ is a cross edge, and for some $x \in V(T)$ and $p \in V(P_{yv})$, $\{x,p\}$ is a cross edge, then $(P_{xu},P_{vp})$ forms an induced cycle of even length (\emph{see} \emph{Figure }\ref{bfs1}(c)).
\end{description}

\item[\bf{Case 2:}] The non-tree edges in $T$ contains both cross edges and slanting edges. By our assumption, there exist an edge $e =\{u,v\} \in E(G)\backslash E(T)$ in level $l$ and an edge $f = \{v,w\} \in E(G)\backslash E(T)$ where $w$ is in level $l+1$ or in level $l-1$ such that $l$ is the least possible level.

\begin{figure}[h]
\begin{center}
\includegraphics[scale=0.25]{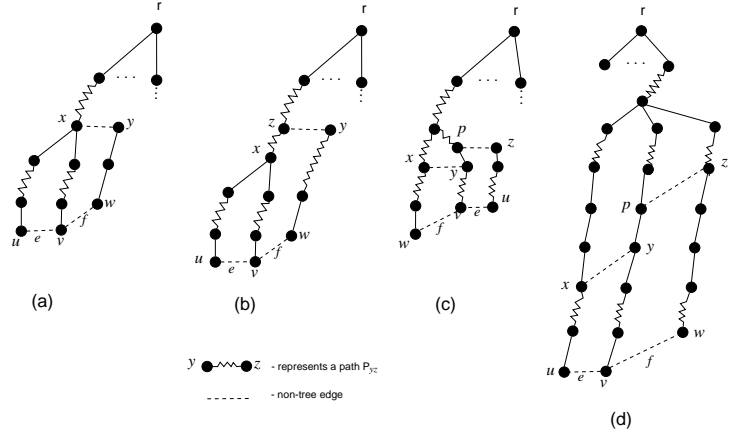}
\caption{BFS Tree $T$ of $G$ with cross edge $e$ and slanting edge $f$}
\label{bfs2}
\end{center}
\end{figure}

\begin{description}

\item[$\bullet$] If $w$ is in level $l-1$ and if for some $x, y \in V(T)$, $(P_{ux}, P_{vx})$ and $(P_{vx}, P_{wy})$ forms an induced $C_{2k+1}$ in $T$, where $\{u,v\}$ and $\{x,y\}$ are cross edges, $\{v,w\}$ is a slanting edge and all other edges are in $E(T)$, then $(P_{ux}, P_{wy}, v)$ forms an induced $C_{2k+2}$ (\emph{see} \emph{Figure }\ref{bfs2}(a)).

\item[$\bullet$] If $w$ is in level $l-1$ and for some $x, y, z \in V(T)$, $(P_{ux}, P_{vx})$ forms an induced $C_{2k+1}$ in $T$, $P_{wy}$ exists in $T$ and $z$ is the common parent of $x$ and $y$ where $\{u,v\}$ and $\{z,y\}$ are cross edges, $\{v,w\}$ is a slanting edge and all other edges are in $E(T)$, then $(P_{vx}, P_{xz}, P_{wy})$forms an induced $C_{2k+2}$ (\emph{see} \emph{Figure }\ref{bfs2}(b)).

\item[$\bullet$] If for some $x, y\in V(T)$, $(P_{xw},P_{vy})$ forms an induced $C_{2k+1}$ in $T$, where $\{v,w\}$ is a slanting edge and $\{x,y\}$ is a cross edge, and for some $p,z \in V(T)$, $\{z,p\}$ forms a cross edge, then $(p, P_{yv},P_{uz})$ forms an induced cycle of even length (\emph{see} \emph{Figure }\ref{bfs2}(c)). 

\item[$\bullet$] If for some $x, y\in V(T)$, $(P_{xu},P_{vy})$ forms an induced $C_{2k+1}$ in $T$, where $\{u,v\}$ is a cross edge and $\{x,y\}$ is a slanting edge, and for some $p,z \in V(T)$, $\{z,p\}$ forms a slanting edge and $\{p,y\} \in E(T)$, then $(p, P_{yv},P_{wz})$ forms an induced cycle of even length (\emph{see} \emph{Figure }\ref{bfs2}(d)). 


\end{description}

\item[\textbf{Case 3:}] The non-tree edges in $T$ contains both cross edges and slanting edges. By our assumption, there exists two slanting edges $e =\{u,v\} \in E(G)\backslash E(T)$  and $f = \{v,w\} \in E(G)\backslash E(T)$ where $u$ is in level $l-1$, $v$ is in level $l$ and $w$ is in level $l+1$ such that $l$ is the least possible level. 

\begin{description}
\item[$\bullet$] If $(P_{wx},P_{vy})$ and $(P_{pv},P_{zu})$ forms an induced $C_{2k+1}$, where $\{x,y\}$ and $\{p,z\}$ are cross edges, then $(z,p,y,x,P_{xw},v,u,P_{zu})$ forms an induced $C_{2k+4}$ 
(\emph{see} \emph{Figure }\ref{fig:bfs3}(a)).    

\item[$\bullet$] If for some $x,y,p,z \in V(T)$, $\{x,y\}$ is a slanting edge and $\{p,z\}$ is a cross edge such that there exist $P_{xw}$, $P_{yv}$, $P_{zu}$ and $\{y,p\}\in E(T)$. Thus, $(P_{pv},P_{zu})$ forms an induced $C_{2k+1}$ and $(P_{xw}, P_{yv})$ forms an induced cycle of even length 
(\emph{see} \emph{Figure }\ref{fig:bfs3}(b)).    
\end{description}

\begin{figure}[h]
\begin{center}
\includegraphics[scale=0.25]{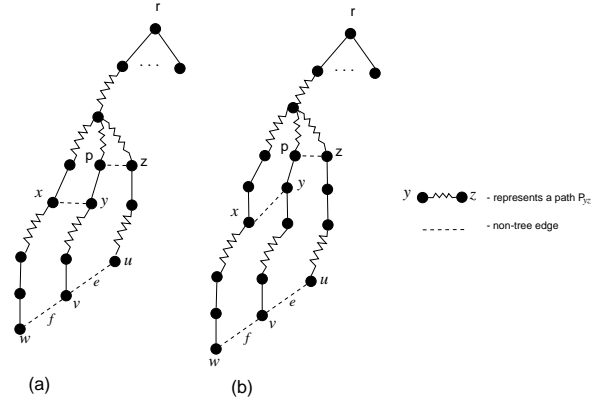}
\caption{BFS Tree $T$ of $G$ with two slanting edges $e$ and $f$}
\label{fig:bfs3}
\end{center}
\end{figure}

\end{description}
All the above cases contradicts the definition of $SC_{2k+1}$ graphs. Hence our assumption, cross edges does not form a matching is wrong. Thus, cross edges in $T$ forms a matching. 
$\hfill \qed$
\end{proof}


\subsection{Hamiltonicity in $SC_k$ graphs}

In this subsection, we provide a necessary and sufficient condition for the existence of hamiltonian cycle in $SC_k$ graphs.

\begin{definition}
 An $SC_k$ graph is said to be \emph{$n$-$C_k$ pyramid} if it has $(k-2)n+2$ vertices, $(k-1)n+1$ edges, exactly two adjacent vertices of degree $n+1$ and every other vertices are of degree two. A 3-$C_5$ pyramid is shown in Figure  \ref{fig: 3c5}.
\end{definition}

\begin{figure}[h]
\begin{center}
\includegraphics[scale=0.30]{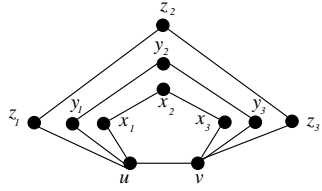}
\vspace{-0.5cm}
\caption{3-$C_5$ Pyramid.}
\label{fig: 3c5}
\end{center}
\end{figure}

\begin{definition}
 The graph $G$ is \emph{Hamiltonian} if it has a spanning cycle (a cycle that contains all vertices in $G$), also called a Hamiltonian cycle.
 \end{definition}
 
\begin{theorem}
\emph{(Chvatal \cite{chvatal})} If a connected graph $G$ has a Hamiltonian cycle, then for each $S \subset V(G)$, the graph $G\backslash S$ has at most $\vert S \vert $ components.
\end{theorem}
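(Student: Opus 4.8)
The plan is to read the structure of $G\backslash S$ directly off a Hamiltonian cycle of $G$: once the vertices of $S$ are deleted from such a cycle, what remains is a disjoint union of at most $\vert S\vert$ paths, and every component of $G\backslash S$ must swallow at least one of these paths whole.

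First I would fix a Hamiltonian cycle $C$ of $G$ and regard it as a cyclic sequence $(w_1, w_2, \ldots, w_n, w_1)$ on $V(G)$, where $n = \vert V(G)\vert$. Take a nonempty $S \subseteq V(G)$ (the only case that needs checking), and list its vertices $s_1, \ldots, s_p$ in the cyclic order in which a traversal of $C$ meets them, where $p = \vert S\vert$. Deleting $s_1, \ldots, s_p$ from $C$ cuts the cyclic sequence into exactly $p$ cyclically consecutive blocks, the $i$-th block consisting of the vertices strictly between $s_i$ and $s_{i+1}$ (indices mod $p$); some of these blocks may be empty. Let $B_1, \ldots, B_q$ be the nonempty blocks, so $q \le p = \vert S\vert$.

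Next I would note two things. Each $B_j$ spans a subpath of $C$, hence is a connected subgraph of $G\backslash S$, so $B_j$ lies inside a single connected component of $G\backslash S$. And since $C$ is Hamiltonian, $\{B_1, \ldots, B_q\}$ is a partition of $V(G)\backslash S = V(G\backslash S)$. Therefore every component of $G\backslash S$, being nonempty and a union of some of the $B_j$, contains at least one $B_j$, and distinct components use disjoint sets of blocks; hence the number of components is at most $q \le \vert S\vert$, as claimed.

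There is no serious obstacle here; the one subtlety worth stating explicitly is that consecutive elements of $S$ along $C$ produce empty blocks, which is precisely why we bound the component count by the number $q$ of nonempty blocks and then by $p = \vert S\vert$. (As usual the degenerate case $S = \emptyset$ is excluded, since the condition is only applied to nonempty $S$.)
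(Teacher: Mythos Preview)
Your argument is correct and is the standard proof of this classical fact. However, note that the paper does not actually prove this theorem: it is quoted as a known result attributed to Chv\'atal (with a citation), and is then invoked as a black box in the proofs of Lemmas~\ref{ncknonhamiltonian} and~\ref{ncknonhamiltoniansubclass}. So there is no ``paper's own proof'' to compare against; your write-up simply supplies the omitted justification, and it does so cleanly.
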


\begin{lemma}
\label{ncknonhamiltonian}
A $n$-$C_k$ pyramid graph is non-hamiltonian for all $n \geq 3$ and $k \geq 5$.
\end{lemma}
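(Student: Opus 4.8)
The plan is to use the Chvatal necessary condition for hamiltonicity (stated just above): if $G$ has a Hamiltonian cycle, then for every $S \subset V(G)$, the graph $G \backslash S$ has at most $|S|$ components. First I would set up notation: let $G$ be an $n$-$C_k$ pyramid, $n \geq 3$, $k \geq 5$, and let $u$ and $v$ be the two adjacent vertices of degree $n+1$ guaranteed by the definition. The structural picture is that $G$ consists of the edge $\{u,v\}$ together with $n$ internally disjoint paths from $u$ to $v$, each of length $k-2$ (so each path has $k-3$ internal vertices, all of degree two); this accounts for the $(k-2)n+2$ vertices and $(k-1)n+1$ edges. I would first verify this structural claim by a short degree-count / induced-cycle argument (any two of those $u$--$v$ paths together form a cycle that must be induced of length $k$, forcing each path to have length $k-2$).

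Next, the core of the argument: take $S = \{u, v\}$, so $|S| = 2$. Removing $u$ and $v$ destroys the edge $\{u,v\}$ and disconnects each of the $n$ internal paths from the rest; since the $n$ paths are internally vertex-disjoint and each has at least $k-3 \geq 2$ internal vertices (as $k \geq 5$), the graph $G \backslash S$ has exactly $n$ connected components, one per path. Since $n \geq 3 > 2 = |S|$, the Chvatal condition fails, and therefore $G$ cannot be Hamiltonian. This completes the proof.

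The main obstacle — really the only non-routine point — is justifying the structural description of an $n$-$C_k$ pyramid from its defining parameters, i.e. that the two high-degree vertices are joined by exactly $n$ internally disjoint paths each of length exactly $k-2$, with everything else of degree two. One must rule out, for instance, chords or shared internal vertices among the paths; here the $SC_k$ hypothesis is essential, since any deviation would create an induced cycle of length different from $k$. Everything after that is a one-line application of Chvatal's theorem with $S = \{u,v\}$, so I would keep that part brief and spend the write-up effort on the structural setup.
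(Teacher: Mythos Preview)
Your approach is exactly the paper's: set $S=\{u,v\}$, observe that $G\backslash S$ has $n\ge 3>|S|$ components, and invoke Chv\'atal's theorem. One small correction to your structural description: each of the $n$ internally disjoint $u$--$v$ paths has length $k-1$ (hence $k-2$ internal vertices), not $k-2$, since an induced $C_k$ is formed by the edge $\{u,v\}$ together with \emph{one} such path (two paths together would have $\{u,v\}$ as a chord); the vertex and edge counts then match, and the component count after deleting $\{u,v\}$ is unaffected.
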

\begin{proof}
Let $G$ be a $n$-$C_k$ pyramid graph, $n \geq 3$ and $k \geq 5$. Let $u$ and $v$ be the two adjacent vertices of degree $n+1$ in $G$. Let $S = \{u, v\}$. By Chvatal's theorem, $G$ is not a Hamiltonian graph, as the graph $G\backslash S$ will disconnect the graph into $n$ connected components. Hence the lemma.
$\hfill \qed$
\end{proof}

\begin{lemma}
\label{ncknonhamiltoniansubclass}
Any $SC_k$-graph $G$ which contains $n$-$C_k$ pyramid, $k \geq 5$, as an induced subgraph is non-hamiltonian.
\end{lemma}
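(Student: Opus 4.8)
The plan is to show that if an $SC_k$ graph $G$ contains an $n$-$C_k$ pyramid $P$ (with $n \ge 3$) as an \emph{induced} subgraph, then the two apex vertices $u,v$ of $P$ — the adjacent vertices of degree $n+1$ inside $P$ — already form a small set whose deletion shatters $G$ into too many components, so that Chvatal's theorem (quoted above) forbids a Hamiltonian cycle. Since Lemma~\ref{ncknonhamiltonian} does exactly this when $G = P$, the only real work is to argue that the $n$ ``ribs'' of the pyramid remain in $n$ distinct components of $G \backslash \{u,v\}$ even after the rest of $G$ is attached.

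First I would set up notation: write the pyramid's $n$ internally disjoint $u$--$v$ paths (each of length $k-1$, by the vertex/edge count in the definition) as $Q_1,\dots,Q_n$, and let $S = \{u,v\}$. In $P \backslash S$ these give $n$ components. The key claim is that in $G \backslash S$, no two distinct interior paths $Q_i \backslash S$ and $Q_j \backslash S$ become connected through vertices outside $V(P)$. Suppose for contradiction there is a path $R$ in $G \backslash S$ joining an interior vertex of $Q_i$ to an interior vertex of $Q_j$; choosing $R$ shortest, its interior avoids $V(P)$, and then $R$ together with suitable subpaths of $Q_i$, $Q_j$ (and possibly passing through $u$ or $v$) produces an induced cycle in $G$. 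I would then show, by the same style of chord analysis used in Lemma~\ref{intersection} and Theorem~\ref{mvssc2k+1}, that any such cycle has length $\ne k$ (it strictly exceeds $k$ because it must traverse at least two of the length-$(k-1)$ ribs minus their shared endpoints plus the nonempty detour $R$), contradicting that $G$ is $SC_k$. Equivalently, one can invoke Corollary~\ref{intersectionseparate} / the minimal-vertex-separator structure: the pyramid forces $\{u,v\}$ to be (contained in) a separator of the claimed type, and any extra connection would create an induced cycle $C_{>k}$.

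With the claim in hand, $G \backslash S$ has at least $n \ge 3$ components while $|S| = 2 < n$, so by Chvatal's theorem $G$ is not Hamiltonian, proving the lemma.

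The main obstacle I expect is the case distinction in the contradiction step: the detour path $R$ could re-enter the pyramid, could attach to $Q_i$ or $Q_j$ at a vertex adjacent to $u$ or $v$, and could have length making the resulting cycle's parity delicate (especially for even $k$, where Corollary~\ref{intersectionseparate} allows edge-intersection $k/2$ as a genuine possibility and a $\frac{k}{2}$-type overlap must be ruled out separately). Handling these uniformly — showing every resulting induced cycle is longer than $k$ regardless of where $R$ enters and leaves — is the delicate part; the rest is bookkeeping plus a direct citation of Chvatal's theorem exactly as in Lemma~\ref{ncknonhamiltonian}.
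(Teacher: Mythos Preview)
Your proposal is correct and follows essentially the same route as the paper: both arguments hinge on the observation that a path in $G\backslash\{u,v\}$ joining interior vertices of two distinct ribs of the pyramid would, together with pieces of those ribs, produce an induced cycle of length different from $k$, contradicting the $SC_k$ property. The paper phrases this as a direct contradiction (assume $G$ is Hamiltonian, so such a path must exist, contradiction), while you phrase it via Chvatal's theorem (no such path exists, hence $\{u,v\}$ separates $G$ into $n\ge 3$ components, hence non-Hamiltonian); these are contrapositives of one another, and the paper's proof is in fact terser than yours about the cycle-length analysis you flag as delicate.
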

\begin{proof}
On the contrary, assume that $G$ is hamiltonian. Let $u$ and $v$ be the adjacent vertices of degree greater than or equal to $n+1$ in $G$ such that $\{u, v\}$ is an edge of $n$-$C_k$ pyramid. Let $S_1, S_2, \ldots, S_n$ be the $n$ cycles containing the edge $\{u, v\}$ in $G$. Let $S_1 = (x_1, x_2, \ldots, x_{k-2}, u, v)$ and $S_2 = (y_1, y_2, \ldots, y_{k-2}, u, v)$ (\emph{for e.g.,} \emph{see} \emph{Figure } \ref{fig: 3c5}, where $n=3$ and $k=5$). Since $G$ is hamiltonian, there exist a path from $S_i$ to $S_j$, $i \neq j$ other than the path through the edge $\{u, v\}$. In particular, there exist a path from $S_1$ to $S_2$ which does not pass through the vertices $\{u, v\}$. i.e., there exist at least one path $P$ from $x_i$ to $y_j$, $1 \leq i, j \leq k-2$ which does not pass through the vertices $\{u, v\}$, which contradicts the construction of $SC_k$ graph. Hence our assumption is wrong, which implies $G$ is non-hamiltonian. $\hfill \qed$ 
\end{proof}

\begin{theorem}
 Let $G$ be an $SC_k$ graph. $G$ is Hamiltonian if and only if it is 2-connected, $CAGE(\frac{k}{2}+1, 3)$ free and 3-$C_k$ pyramid free.
\end{theorem}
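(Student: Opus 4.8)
The plan is to prove the two implications separately. (Graphs here are connected and $k\ge5$; since a Hamiltonian $SC_k$ graph contains an induced $C_k$ it has at least $k$ vertices, so no degenerate cases arise.) For necessity, suppose $G$ is Hamiltonian. Applying Chvatal's theorem with $S$ a single vertex shows $G$ has no cut vertex, so $G$ is $2$-connected. If $G$ contained a $3$-$C_k$ pyramid as an induced subgraph it would be non-hamiltonian by \emph{Lemma \ref{ncknonhamiltoniansubclass}}, so $G$ is $3$-$C_k$ pyramid free. If $G$ contained a $3$-cage $CAGE(3,\frac{k}{2}+1)$ with poles $w,z$ and internal paths $Q_1,Q_2,Q_3$, then (by the $SC_k$ arguments underlying \emph{Lemma \ref{ncknonhamiltoniansubclass}} and \emph{Theorem \ref{mvssc2ksize}}) deleting $w$ and $z$ leaves the three $Q_i$ in distinct components, and a Hamiltonian cycle would have to cover the interior of all three $Q_i$ while switching between them only at $w$ and $z$ --- impossible with only two turning points --- so $G$ is $CAGE(3,\frac{k}{2}+1)$ free. (When $k$ is odd this last condition is vacuous, by \emph{Corollary \ref{intersectionseparate}}.)

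For sufficiency, assume $G$ is $2$-connected, $3$-$C_k$ pyramid free and $CAGE(3,\frac{k}{2}+1)$ free. Since $G$ has no cut vertex, \emph{Theorem \ref{decompositionsc2k}} (for even $k$) or \emph{Theorem \ref{decompositionsc2k+1}} (for odd $k$) decomposes $G$ only at size-two clique separators, each resulting block being a cut edge, a $C_k$, or (when $k$ is even) a $CAGE(l,\frac{k}{2}+1)$ with $l\ge3$; the cut-edge case cannot occur since $G$ is $2$-connected, and cage-freeness excludes the cage case (such a cage contains $CAGE(3,\frac{k}{2}+1)$), so every block is a $C_k$. If three blocks shared a separating edge $\{a,b\}$ they would induce a $3$-$C_k$ pyramid, contrary to hypothesis, so each separating edge lies in exactly two blocks; and by \emph{Corollary \ref{intersectionseparate}} together with cage-freeness, two distinct blocks meet in at most one edge. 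Let $\mathcal{T}$ be the graph whose vertices are the blocks, with two blocks adjacent when they share a separating edge. Then $\mathcal{T}$ is connected, and it is acyclic: if $D_1,\dots,D_r$ formed a cycle in $\mathcal{T}$, their union $U$ would be a $2$-connected subgraph of $G$ carrying no clique separator of size one or two, whence \emph{Theorem \ref{decompositionsc2k}} or \emph{Theorem \ref{decompositionsc2k+1}} would force $U$ to be a single $C_k$ or a cage --- impossible, since $U$ contains two distinct $C_k$'s and is cage-free. Hence $\mathcal{T}$ is a tree of $C_k$'s glued along edges.

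Now I would root $\mathcal{T}$ at an arbitrary block. For a non-root block $D$, write $e_D=\{u,v\}$ for the edge shared with its parent and let $B_D$ be the union of $D$ with all blocks in its subtree, together with $e_D$; by induction up $\mathcal{T}$ I claim $B_D$ has a Hamiltonian cycle through $e_D$. A leaf block is a single $C_k$ whose own cycle uses $e_D$. If $D$ has children $C_1,\dots,C_{d-1}$ with shared edges $e_{C_1},\dots,e_{C_{d-1}}$ --- pairwise distinct edges of the $k$-cycle $D$, also distinct from $e_D$ --- then walk around the cycle $D$, keeping the edge $e_D$, but replace each edge $e_{C_i}=\{x,y\}$ by a Hamiltonian $x$--$y$ path of $B_{C_i}$ (obtained by deleting $e_{C_i}$ from the Hamiltonian cycle of $B_{C_i}$ supplied by the inductive hypothesis); when two marked edges share a vertex of $D$, concatenate the two corresponding Hamiltonian paths at that vertex, which each use it only as an endpoint. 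Running the same splicing at the root, with no distinguished edge, yields a Hamiltonian cycle of $G$, completing the proof.

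The step I expect to be hardest is establishing that $\mathcal{T}$ is a tree: the argument reduces to the decomposition theorems, but one must check with some care that a cyclically arranged union of blocks admits no size-one or size-two clique separator, so that the decomposition theorem indeed forces it to be a single $C_k$ or a cage. The other delicate point is the necessity direction's assertion that the two poles of a cage genuinely separate an $SC_k$ graph into at least three parts; the remaining ingredients --- the two applications of Chvatal's theorem and the final cycle-splicing --- are routine.
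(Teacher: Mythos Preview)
Your proof is correct, but the sufficiency argument takes a different path from the paper's. The paper appeals directly to the construction theorem (\emph{Theorem~\ref{construction}}): since $G$ is $2$-connected, cage-free, and pyramid-free, the VCO can only use rules~(ii) and~(v), so $G$ is obtained from a single $C_k$ by repeatedly attaching $2$-pendant $C_k$'s along edges; the paper then simply deletes the set $S$ of edges lying in two induced cycles and asserts that what remains is a spanning cycle. Your route through the decomposition theorems (\emph{Theorems~\ref{decompositionsc2k}} and~\emph{\ref{decompositionsc2k+1}}), the block graph $\mathcal{T}$, and inductive cycle-splicing reaches the same conclusion more explicitly: your splicing is exactly what ``delete the shared edges'' accomplishes, unpacked along the tree. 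The step you flag as hardest---that $\mathcal{T}$ is acyclic---does go through (a putative cycle in $\mathcal{T}$ combined with pyramid-freeness forces any candidate separating edge of the union $U$ to be shared by at least three blocks, a contradiction), though the paper sidesteps this entirely by reading the tree structure straight off the VCO. Your necessity direction matches the paper's, with the cage argument spelling out what the paper dismisses in a single clause.
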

\begin{proof}
\emph{Necessity:} we know that every hamiltonian graph is $2$-connected, thus, $G$ is $2$-connected. Now our claim is to prove $G$ is 3-$C_k$ pyramid free. On the contrary, assume that $G$ has 3-$C_k$ pyramid as an induced subgraph. Thus, there exist an edge $\{u,v\} \in E(G)$ such that $G$ has at least three $C_k$'s, say $S_1, S_2$, $S_3$, with the property $\bigcap_{i = 1}^{3} E(S_i) = \{\{u, v\}\}$. By \emph{Lemma \ref{ncknonhamiltonian}} and \emph{Lemma \ref{ncknonhamiltoniansubclass}}, $G$ is non-hamiltonian, which is a contradiction. Therefore, $G$ is 3-$C_k$ pyramid free. Also, by the definition of CAGE it is clear that $CAGE( \frac{k}{2}+1, 3)$ is non-hamiltonian, thus, $G$ is $CAGE(\frac{k}{2}+1,3)$ free.
\\
\emph{Sufficiency:} Let $S = \{e \in E(S_i)\cap E(S_j) \mid S_i\mbox{ and }S_j \text{ are induced cycles in }G \}$. Consider a graph $H$, where $V(H) = V(G)$ and $E(H) = E(G)\backslash S$. Since, $G$ is $2$-connected, by \emph{Theorem \ref{construction}}, the graph $G$ is constructed only by \emph{rule (ii)} and \emph{rule (v)}. Therefore, the graph $H$ is an induced cycle, which is a spanning cycle in $G$. Hence, $G$ is hamiltonian. $\hfill \qed$
\end{proof}

\section{Treewidth of $SC_k$ graphs}
 A \emph{tree decomposition}\cite{kloks} of a graph $G$ is a pair ($T,X$) where $T$ is a tree and $X$ assigns a set $X_t \subset V(G)$ to each vertex $t$ of $T$ such that 
\begin{itemize}
\item[(i)] $V(G) = \bigcup_{t \in V(T)} X_t$,
\item[(ii)] for every edge $\{u,v\} \in E(G)$, there is some $t \in V(T)$ such that $u, v \in X_t$ and
\item[(iii)] for every vertex $u \in V(G)$, the set $\{ t \in V(T)\vert u \in X_t \}$ induces a subtree of the tree $T$.
\end{itemize}
The \emph{width} of a tree decomposition ($T,X$) is $\max_{t \in V(T)} \vert X_t \vert - 1$ and the \emph{tree-width}, $tw(G)$, of $G$ is the minimum width of all tree decompositions of $G$.  

\begin{definition}
A graph is a $k$-$tree$ if every minimal vertex separator of $G$ is of size $k$ and every maximal clique is of size $k+1$. A graph $G$ is said to be a \emph{partial-k-tree} if it is an edge subgraph of a $k$-tree.
\end{definition}

In this section, we present an exact bound for the treewidth followed by an algorithm which gives a tree decompositon ($T, X$) with $\max_{t \in V(T)} \vert X_t \vert = 2$ or $3$ for the given $SC_k, k \geq 5$, graph. Let $G$ be an $SC_k, k \geq 5$ graph. We know that $tw(G) \geq \omega (G) - 1$. Since, $K_2$ is the maximum clique in $G$, $tw(G) \geq 1$. We can divide $SC_k$ graphs into $SC_k$ graphs with cycles and $SC_k$ graphs without cycles. It is clear that, $SC_k$ graphs without cycles are same as trees and we know that $tw(tree) = 1$, i.e., $\max_{t \in V(T)} \vert X_t \vert = 2$. Thus, in this section, we consider $SC_k$ graphs with cycles. It is evident that the lower bound of $SC_k$ graphs is two as $tw(C_n)=2$. We observe that the upper bound for $SC_k$ graphs is two by proving that $SC_k$ graphs are partial-2-trees, an edge subgraph of a 2-tree. Alternatively, we augment edges to the given $SC_k$ to produce a 2-tree and the augmentation algorithm is given below.

\begin{definition}
A \emph{minimum fill-in} of a graph $G$ is the minimum number of edges whose addition makes the graph $G$ chordal.
\end{definition}

\begin{algorithm}
\caption{$\mathtt{Fill-in~~ of~~ SC_k ~~graph}$}
\begin{algorithmic}[1]
\STATE{\textbf{INPUT:} An $SC_k$ graph, $G$.}
\STATE{\textbf{OUTPUT:} A chordal graph $G'$}
\STATE{Decompose the graph $G$ into a set of connected components as per \emph{Theorems \ref{decompositionsc2k+1} and \ref{decompositionsc2k}}.}
\STATE{Let $G_1, G_2, \ldots, G_l$ be the connected components of the decomposed graph.}
\FOR{$i=1$ to $l$}
		\IF{$G_i$ is a $C_k$}
			\STATE{Choose any one vertex $u$ in $G_i$ and make it adjacent to all the non-adjacent vertices of $C_k$ in $G_i$.}
		\ELSIF{$G_i$ is a CAGE}
			\STATE{Choose a vertex with maximum degree and make it adjacent to all the non-adjacent vertices in the CAGE.}
		\ENDIF
\ENDFOR
\STATE{Now combine the decomposed graph into a graph $G'$ and Return $G'$.}
\end{algorithmic}
\label{alg:partial2tree}
\end{algorithm}

\begin{theorem}
\label{fill-in}
The algorithm $\mathtt{Fill-in( )}$ outputs a chordal graph, which is a partial-2-tree.
\end{theorem}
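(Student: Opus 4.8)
The plan is to verify the two claims in Theorem~\ref{fill-in} separately: first that the output graph $G'$ is chordal, and second that $G'$ is a partial-$2$-tree (equivalently has treewidth at most two). Throughout I will rely on the decomposition guaranteed by Theorems~\ref{decompositionsc2k+1} and~\ref{decompositionsc2k}: the algorithm works on a collection of connected components $G_1,\dots,G_l$, each of which is a cut edge, a $C_k$, or a $CAGE(r,\tfrac{k}{2}+1)$ with $r\geq 3$, and any two components share at most an edge or a vertex (a clique separator of size at most two), so no new cycle is created when they are glued back together.

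First I would show each processed component $G_i'$ is chordal. If $G_i$ is a cut edge there is nothing to do. If $G_i = C_k$, after picking a vertex $u$ and joining it to all the other $k-3$ vertices, the resulting graph is a fan, which is a maximal outerplanar graph and hence chordal (every cycle of length $\geq 4$ passes through $u$ and is split by a chord at $u$); in fact it is a $2$-tree. If $G_i$ is a $CAGE$, say with high-degree endpoints $w$ and $z$ joined by $r$ internal paths each of length $\tfrac{k}{2}+1$, then choosing the maximum-degree vertex, which is $w$ (or $z$), and making it adjacent to every non-adjacent vertex of the $CAGE$ turns each internal path into a fan rooted at $w$; one then checks that every induced cycle must use $w$ and gets a chord, so $G_i'$ is chordal. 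Next, I would argue that chordality is preserved under the gluing step: if two chordal graphs share a clique (here a single vertex or a single edge $\{u,v\}$ which is a clique of size $\le 2$), their union along that clique is chordal — this is the standard clique-sum fact — so $G'$, obtained by repeatedly clique-summing the $G_i'$ along vertices/edges, is chordal.

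For the partial-$2$-tree claim I would show each $G_i'$ has treewidth at most $2$ and then invoke the fact that treewidth does not increase under clique-sums along cliques of size at most $2$ (or directly build a tree decomposition of $G'$ of width $2$ by combining tree decompositions of the $G_i'$ along the shared vertices/edges). A cut edge has treewidth $1$; a fan has treewidth $2$ (it is a $2$-tree, built from a triangle by repeatedly adding a vertex adjacent to an edge); and the filled-in $CAGE$ also has treewidth $2$, since contracting it to fans through $w$ exhibits the bags $\{w, p_{j}, p_{j+1}\}$ along each internal path, all containing $w$, which form a valid width-$2$ tree decomposition. Since $G'$ is chordal with $\omega(G') \le 3$ (the only triangles created are at the chosen hub vertices), $G'$ is a $2$-tree or an edge subgraph thereof, i.e. a partial-$2$-tree. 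One should also note the trivial case $l=1$ with $G_i$ an edge or $G$ being a tree, where $G'=G$ is already chordal with treewidth $\le 1$.

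The main obstacle I anticipate is the $CAGE$ case: one has to argue carefully that making the hub $w$ adjacent to \emph{all} non-adjacent vertices — including all internal vertices on \emph{all} $r$ paths and the far endpoint $z$ — really produces a chordal graph and does not accidentally raise the treewidth above $2$. The subtlety is that after adding all these edges, there are many triangles sharing $w$, and one must confirm no induced $C_4$ or longer survives; the clean way is to observe that $w$ becomes adjacent to every other vertex of $G_i'$, so $G_i' = K_1 + H_i$ where $H_i$ is a disjoint union of paths (the internal paths with their endpoints identified appropriately) — a join of a single vertex with a linear forest is always chordal and has treewidth $2$ exactly when $H_i$ has at least one edge. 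A secondary, more bookkeeping-style obstacle is making the clique-sum/tree-decomposition gluing argument fully rigorous across the recursive decomposition, but this is routine once the per-component claims are in hand.
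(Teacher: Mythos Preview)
Your argument is correct and somewhat cleaner than the paper's, but it proceeds along a different axis. The paper proves the theorem by induction on the length of the vertex cycle ordering (VCO): at each step it peels off the last label $x_n$ (a pendant vertex or an $s$-pendant $C_k$), applies the induction hypothesis to the shorter ordering, and then checks that the edges added by the algorithm at the new piece keep the graph chordal and a partial-$2$-tree. You instead work directly with the block decomposition of Theorems~\ref{decompositionsc2k+1} and~\ref{decompositionsc2k}, verify the two properties component-by-component, and then invoke the standard closure of chordality and of bounded treewidth under clique-sums along cliques of size at most two. Your route is more modular and leans on well-known structural facts, whereas the paper's route ties the result more tightly to the VCO machinery developed in Section~4; both are valid, and yours arguably generalises more readily. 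One small correction in your CAGE analysis: after deleting the hub $w$, the remaining graph $H_i$ is not a linear forest but a spider (a tree with a single branching vertex $z$ of degree $r\ge 3$); fortunately your conclusion survives intact, since $K_1 + T$ is chordal with treewidth $2$ for any tree $T$ with at least one edge, and the bags $\{w,p_j,p_{j+1}\}$ you describe do glue into a valid width-$2$ tree decomposition once you attach the $r$ path-decompositions at a common bag containing $\{w,z\}$.
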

\begin{proof}
We prove this by induction on the length of the VCO of a given $SC_k$ graph.
 In the ordering $(x_1, \ldots, x_n)$, let $G$ be an $SC_k$ graph obtained after $n^{th}$ ordering, $x_n, n \geq 2$. Our claim is to prove $G$ is a chordal graph and a partial-2-tree.
\begin{description}
\item[\textbf{Case 1:}] $x_n$ is $P_1$, say $u$.\\
By the hypothesis, it is clear that $G$ is chordal and a partial-2-tree.

\item[\textbf{Case 2:}] $x_n$ is a $0$-pendant $C_k$ or a $1$-pendant $C_k$.\\
Let $x_n$ be $C=(v_1,\ldots, v_k)$ and $H$ be the associated graph for the ordering $(x_1,\ldots,x_{n-1})$. W.l.o.g, $v_1\in V(H)$. By the induction hypothesis, when $H$ is passed as an input to the Algorithm \ref{alg:partial2tree}, the output of Algorithm \ref{alg:partial2tree} is a chordal graph and a partial-2-tree. Now \emph{Step 7} of Algorithm \ref{alg:partial2tree} adds edges from $v_1$ to all the non-adjacent vertices of $C$. Clearly, the resulting graph is chordal and a partial-2-tree.

\item[\textbf{Case 3:}] $x_n$ is a $2$-pendant $C_k$.\\
Let $x_n$ be $C=(v_1,\ldots, v_k)$. Let $H$ be the associated graph for the ordering $(x_1,\ldots,x_{n-1})$ and by the induction hypothesis, when $H$ is given as an input to the Algorithm \ref{alg:partial2tree}, the output of Algorithm \ref{alg:partial2tree} is a chordal graph and a partial-2-tree. Since, $x_n$ is $2$-pendant vertex, w.l.o.g, let $\{v_1,v_2\}\in E(H)$. Now, augment edges from $v_1$ to every non-adjacent vertex of $C$. Clearly, the resulting graph is chordal and a partial-2-tree.

\item[\textbf{Case 4:}] $x_n$ is a $\left(\frac{k}{2}+1\right)$-pendant $C_k$.\\
Let $x_n$ be $C=(v_1,\ldots, v_k)$ and $H$ be the associated graph for the ordering $(x_1,\ldots,x_{n-1})$. By the induction hypothesis, when $H$ is passed as an input to the Algorithm \ref{alg:partial2tree}, the output of Algorithm \ref{alg:partial2tree} is a chordal graph and a partial-2-tree. W.l.o.g, assume that $ \{v_1,\ldots, v_{\frac{k}{2}+1}\}\subset V(H)$. Now \emph{Step 9} of Algorithm \ref{alg:partial2tree} adds edges from $v_1$ to all the non-adjacent vertices of $C$. Clearly, the resulting graph is chordal and partial-2-tree. $\hfill \qed$
\end{description}
\end{proof}

\noindent From the above case analysis, it follows that $tw(G) \leq 2$. Since $tw(G) \geq 2$, $tw(G) = 2$.

\begin{corollary}
\label{minfillin}
Minimum fill-in of $SC_k$ graphs is polynomial-time solvable.
\end{corollary}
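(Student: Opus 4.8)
Write $\mathrm{mfi}(H)$ for the minimum fill-in of $H$. The goal is to show that Algorithm~\ref{alg:partial2tree} is \emph{optimal}, i.e.\ that the number of edges it adds to an $SC_k$ graph $G$ equals $\mathrm{mfi}(G)$; since the algorithm runs in polynomial time (the decomposition of Step~3 and the $C_k$/CAGE tests are $O(n+m)$, as noted after Theorems~\ref{decompositionsc2k+1} and \ref{decompositionsc2k}), the polynomial‑time solvability of minimum fill‑in follows immediately. The first point is that Step~3 decomposes $G$ along clique separators of size $1$ and $2$ (cut vertices and cut edges) into atoms, and by Theorems~\ref{decompositionsc2k+1} and \ref{decompositionsc2k} every atom is a cut edge, a $C_k$, or (only when $k$ is even) a $CAGE(l,\tfrac{k}{2}+1)$.

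Next I would invoke the classical fact that minimum fill‑in is additive over clique‑separator decompositions: if $S$ is a clique separator of a graph $H$ and $B_1,\dots,B_p$ are its blocks ($B_j$ being the subgraph induced by $S$ together with the $j$‑th component of $H\setminus S$), then every minimal triangulation of $H$ restricts to a triangulation of each $B_j$, and conversely triangulations of the $B_j$ glue along the clique $S$; hence $\mathrm{mfi}(H)=\sum_j \mathrm{mfi}(B_j)$. Applying this recursively gives $\mathrm{mfi}(G)=\sum_{A}\mathrm{mfi}(A)$, the sum over atoms $A$. As Algorithm~\ref{alg:partial2tree} adds edges only inside atoms and independently, it suffices to prove optimality on each atom type. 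A cut edge is already chordal and receives nothing, so $\mathrm{mfi}(K_2)=0$; every triangulation of a $k$‑cycle uses exactly $k-3$ chords, so $\mathrm{mfi}(C_k)=k-3$, which Step~7 attains by the fan triangulation from one vertex. This settles the corollary for odd $k$ and reduces the even case to the single atom $A=CAGE(l,\tfrac{k}{2}+1)$.

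For $A=CAGE(l,\tfrac{k}{2}+1)$, with poles $w,z$ and $n\ge 3$ internally vertex‑disjoint paths $P^1,\dots,P^n$ joining $w$ to $z$ (the union of any two being an induced $C_k$), Step~9 makes a maximum‑degree vertex — necessarily a pole, say $w$ — adjacent to $z$ and to every internal vertex of every $P^i$ other than the neighbour of $w$; this adds $1+n(\tfrac{k}{2}-2)$ edges and yields a chordal graph, since $w$ becomes universal and $G'-w$ is a spider centred at $z$, hence a tree. The matching lower bound $\mathrm{mfi}(A)\ge 1+n(\tfrac{k}{2}-2)$ is the crux. If a minimal triangulation $G'$ contains the edge $wz$, this is clean: for each $i$ the cycle $\mathcal C_i$ obtained from $P^i$ by adding $wz$ has length $\tfrac{k}{2}+1$, so $G'$ contains at least $\tfrac{k}{2}-2$ chords of $\mathcal C_i$; for $i\ne j$ these chord sets are disjoint, because a chord of $\mathcal C_i$ has both endpoints in $V(P^i)$, $V(P^i)\cap V(P^j)=\{w,z\}$, and the only edge inside $\{w,z\}$ is $wz$, which is an edge of $\mathcal C_i$, not a chord. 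Together with $wz$ this gives $\ge 1+n(\tfrac{k}{2}-2)$ fill edges.

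\textbf{Main obstacle.} The remaining work is to rule out $wz\notin E(G')$, i.e.\ to show that \emph{every} minimal triangulation of $A$ contains $wz$. The plan: if $wz\notin E(G')$, pick a minimal $w$--$z$ separator $S$ of $G'$; it is a clique of $G'$, and since the $P^i$ are internally vertex‑disjoint $w$--$z$ paths it contains an internal vertex of each, so $|S|\ge n$; but internal vertices of distinct paths are non‑adjacent in $A$, so turning $S$ into a clique already forces at least $\binom{|S|}{2}-|E(A[S])|$ fill edges, and the two sides of $S$ still require triangulation. A bookkeeping argument (using $n\ge 3$ and $l\ge 4$) then shows the total strictly exceeds $1+n(\tfrac{k}{2}-2)$, a contradiction; this case analysis is the delicate part. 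Combining the per‑atom bounds with the additivity of $\mathrm{mfi}$ shows that the number of edges Algorithm~\ref{alg:partial2tree} adds equals $\mathrm{mfi}(G)$, so Algorithm~\ref{alg:partial2tree} solves minimum fill‑in on $SC_k$ graphs in polynomial time.
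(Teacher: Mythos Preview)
Your overall strategy---show that Algorithm~\ref{alg:partial2tree} is optimal, hence polynomial---is exactly the paper's, but you supply the argument the paper omits. The paper's own proof of Corollary~\ref{minfillin} is essentially a one-line assertion: it states that the algorithm ``yields a chordal graph by augmenting a minimum number of edges'' and then records the edge count $a(k-3)+\sum_i\bigl(n_i(\tfrac{k}{2}-2)+1\bigr)$, with no justification of minimality. Your proposal, by contrast, invokes the standard additivity of minimum fill-in over clique separators to reduce to atoms, dispatches $K_2$ and $C_k$ trivially, and then isolates the $CAGE(n,\tfrac{k}{2}+1)$ atom as the only nontrivial case---arriving at the same formula but with an actual lower-bound argument.

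The step you flag as the ``main obstacle'' (ruling out minimal triangulations of a cage that avoid the pole edge $wz$) is a genuine gap in the paper, not in your plan: the paper simply does not address it. Your separator-clique sketch is the right idea; the bookkeeping does go through, but note that for small parameters (e.g.\ $n=3$, $k=6$) the $\binom{n}{2}$ fill edges inside the separator alone do not exceed $1+n(\tfrac{k}{2}-2)$, so you really do need to add in the cost of triangulating the two sides of $S$ as you indicate. In short: same approach as the paper, but your version is the one that actually contains a proof.
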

\begin{proof}
The output of Algorithm \ref{alg:partial2tree} yields a chordal graph by augmenting a minimum number of edges. Therefore, the output is precisely the minimum fill-in of $SC_k$ graphs. Further, minimum fill-in is polynomial-time solvable for $SC_k$ graphs. Note that the number of edges augmented in a given graph $G$ by Algorithm \ref{alg:partial2tree} is $a(k-3)+(n_1+n_2+\ldots +n_s)(\frac{k}{2}-2)+1)$, where $a$ denotes the number of $C_k$'s and $n_i$ denotes the $CAGE(n_i,\frac{k}{2}+1)$ in the decomposition of $G$.
$\hfill \qed$
\end{proof}

\noindent Having given the bounds for treewidth, we now present an algorithm which gives a tree decomposition for $SC_k, k =2m+3, m \geq 1$ graphs, where $\max_{t \in V(T)} \vert X_t \vert = 3$.\\

\noindent \textbf{Outline of the algorithm:}
The algorithm first constructs a graph $G'$ from $G$ as follows: to start with, every induced cycle in $G$ is converted into a collection of $P_3$'s appropriately, where the weights of the edges are assigned to be one. Next, the algorithm collects all the edges in $S$ which are not a part of any cycle in $G$. Now, for every element in $S$, the algorithm creates a new vertex. Finally, the algorithm adds weighted edges among the newly formed vertices and the newly constructed $P_3$'s, and the weights of the edges depends on its end vertices.  Thus, the graph $G'$ has been constructed from $G$. Now, find the minimum spanning tree $T$ for the weighted graph $G'$ and the algorithm outputs $T$ as a tree decomposition for $G$.

\begin{algorithm}
\caption{Tree Decomposition for $SC_{k}, k= 2m+3, m \geq 1$, graphs}
\begin{algorithmic}[1]
\STATE{\textbf{Input:} $SC_k$ graph $G$ with cycles, $k= 2m+3, m \geq 1$.}
\STATE{\textbf{Output:} Tree decomposition of $G$}
\STATE{Let $\{v_1, \ldots, v_n\}$ be the vertex set of $G$ and $S_1, S_2, \ldots, S_l$, $l \geq 0$ be the cycles in $G$.}
\FOR{$i=1$ to $l$}
\STATE{for every cycle $S_i = (x_{1}^{i}, x_{2}^{i}, x_{3}^{i}, \ldots, x_{k-1}^{i}, x_{k}^{i})$ in $G$ define $P^{i1}, P^{i2}, P^{i3}, \ldots, P^{i(k-2)}$ as follows.
\begin{itemize}
\item[$\bullet$] $P^{i1} = \{ x_{1}^{i}, x_{2}^{i}, x_{3}^{i} \mid$ $(x_{1}^{i}, x_{2}^{i}, x_{3}^{i})$ is the induced $P_3$ in $G$\}, 
\item[$\bullet$] $P^{i2} = \{ x_{1}^{i}, x_{3}^{i}, x_{4}^{i}\}$,
\item[$\bullet$] $P^{i3} = \{x_{1}^{i}, x_{4}^{i}, x_{5}^{i}\}$, $\ldots$, $P^{i(k-2)} = \{x_{1}^{i}, x_{k-1}^{i}, x_{k}^{i}\}$.
\end{itemize}}
\ENDFOR
\STATE{Let $S=\{e=\{x,y\} \mid$ $x \notin V(S_i), 1 \leq i \leq l$ or $y \notin V(S_i), 1 \leq i \leq l \}$}
\STATE{Let $e_1, e_2, \ldots , e_s$ ($s \geq 0$) be the edges in $S$.}
\FOR{$i=1$ to $s$}
\STATE{$X_i = \{x, y \mid e_i = \{x, y\}\} $}
\ENDFOR
\STATE{$p=s+1$}
\FOR{$i=1$ to $l$}
\FOR{$j=1$ to $k-2$}
\STATE{$X_p = P^{ij}$ and $p=p+1$}
\ENDFOR
\ENDFOR
\STATE{For each $X_i$, $s+1\leq i \leq s+(k-2)l$, relabel the vertices in $X_i$ by the labels given for the vertices during the input.}
\STATE{Construct a graph $G'$ with vertex set, $V(G') = \{X_i \mid 1 \leq i \leq s+(k-2)l\}$ and two vertices $X_i, X_j \in V(G')$, $i \neq j$ are adjacent if any one of the following types is true:
\begin{itemize}
\item[Type 1:] $\vert X_i \vert = \vert X_j \vert = 3$ and $X_i, X_j \in S_m$, for some $1 \leq m \leq l$, then, $\vert V(X_i) \cap V(X_j) \vert = 2$. 
\item[Type 2:] $\vert X_i \vert = \vert X_j \vert = 3$ and $X_i \in S_r, X_j \in S_m$, for some $1 \leq m, r \leq l$ and $m \neq r$ and $\vert E(S_m) \cap E(S_r) \vert = 1$, then, $\vert V(X_i) \cap V(X_j) \vert = 2$.
\item[Type 3:] $\vert X_i \vert = \vert X_j \vert = 3$ and $X_i \in S_r, X_j \in S_m$, for some $1 \leq m, r \leq l$ and $m \neq r$ and $\vert V(S_m) \cap V(S_r) \vert = 1$, then, $\vert V(X_i) \cap V(X_j) \vert = 1$.
\item[Type 4:] $\vert X_i \vert = 3$ and $\vert X_j \vert = 2$ and $\vert V(X_i) \cap V(X_j) \vert = 1$.
\item[Type 5:] $\vert X_i \vert = 2$ and $\vert X_j \vert = 2$ and $\vert V(X_i) \cap V(X_j) \vert = 1$.
\end{itemize}}
\STATE{Convert the unweighted graph $G'$ to a weighted graph $G''$ by assigning the weight $i$ for the edges of type $i$, $1 \leq i \leq 5$. }
\STATE{Find a minimum spanning tree $T$ for the weighted graph $G''$}
\STATE{\emph{Return $T$}}
\end{algorithmic}
\label{treedecompositionsc2k+1}
\end{algorithm}

\vspace{0.4cm}

\noindent \textbf{{Trace of the algorithm}}
\vspace{0.3cm}

\noindent We trace the steps of the Algorithm \ref{treedecompositionsc2k+1} in \emph{Figure \ref{fig:tracetreedecompositionsc2k+1}}.

\begin{figure}[H]
\centering
\includegraphics[scale=0.26]{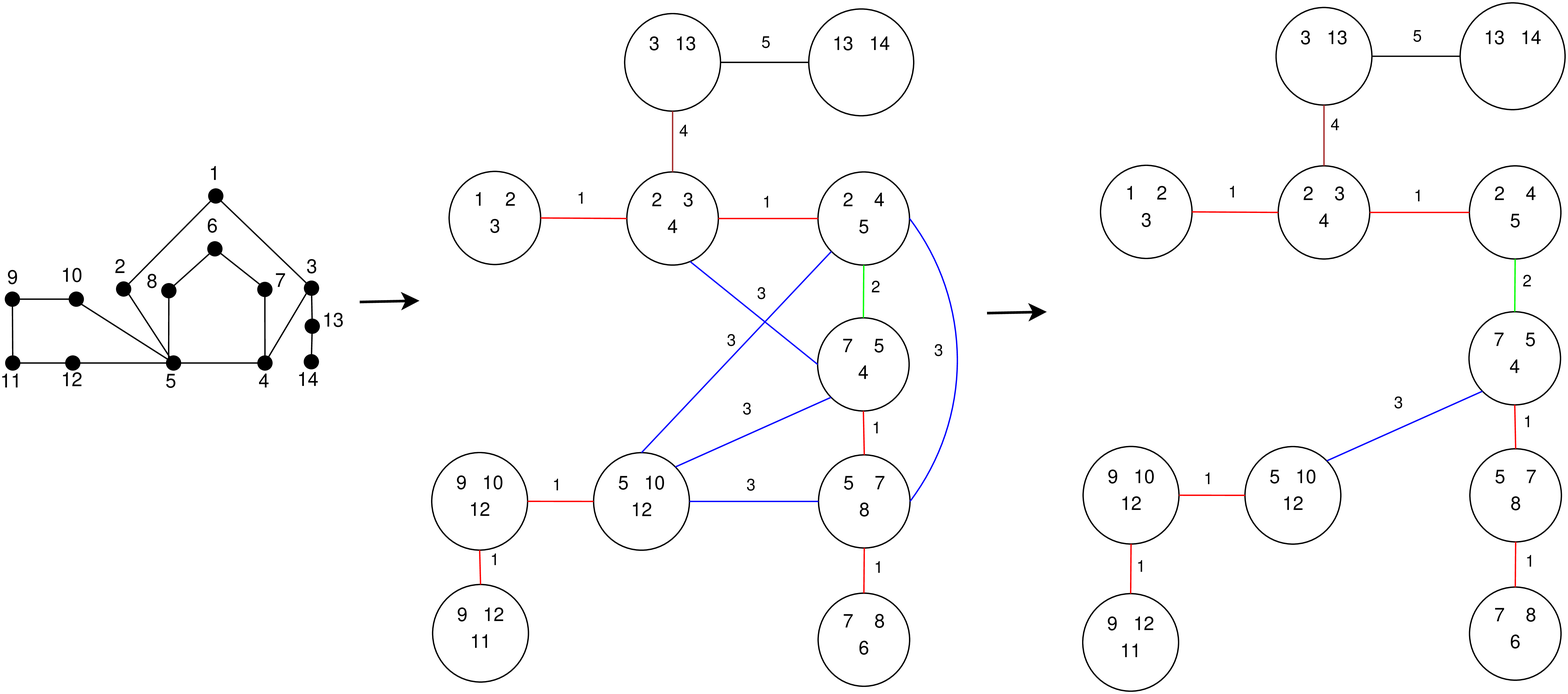} 
\vspace{-0.1cm}
\caption{Tree decomposition of an $SC_5$ graph.}
\label{fig:tracetreedecompositionsc2k+1}
\end{figure}

\begin{itemize}
\item[1.] Input is an $SC_5$ graph $G$. For cycles $S_1=(4,5,8,6,7)$, $S_2=(4,5,2,1,3)$ and $S_3=(9,10,5,12,11)$ in the graph $G$, create $P^{11} = \{7,5,4\}$, $P^{12} = \{5,7,8\}$, $P^{13} = \{6,7,8\}$, $P^{21} = \{1,2,3\}$, $P^{22} = \{2,3,4\}$, $P^{23} = \{2,4,5\}$, $P^{31} = \{9,11,12\}$, $P^{32} = \{9,10,12\}$ and $P^{33} = \{5,10,12\}$.
\item[2.] Now assign, $X_1 = \{3,13\}$, $X_2 = \{13,14\}$, $X_3 = P^{11}$, $X_4 = P^{12}$, $X_5 = P^{13}$, $X_6 = P^{21}$, $X_7 = P^{22}$, $X_8 = P^{23}$, $X_9 = P^{31}$, $X_{10} = P^{32}$ and $X_{11} = P^{33}$.
\item[3.] Draw edges between $X_i$ and $X_j$, $i\neq j$ if it obeys the \emph{line 19} and assign weights for edges as in \emph{lines 19-20} (\emph{see} \emph{Figure 1}).
\item[4.] Construct a minimum weight spanning tree, $T'$, for the graph $T$. Thus, the algorithm is complete and results a tree decomposition with minimum tree width for the given graph $G$.
\end{itemize}

\begin{theorem}\label{prooftreedecomp2k+1}
The graph $T$ obtained from the \emph{Algorithm \ref{treedecompositionsc2k+1}} is a tree decomposition of $G$ such that $tw(G) = 2$.
\end{theorem}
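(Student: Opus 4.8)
The plan is to verify that the pair $(T,X)$, where the tree node $X_i$ is assigned the bag $X_i$ itself, satisfies the three defining conditions of a tree decomposition, and then to read off that its width is $2$. I would first record the organising observation: the fan bags $P^{i1},\ldots,P^{i(k-2)}$ of a cycle $S_i$ are exactly the maximal cliques of the ``fan'' obtained from $S_i$ by Algorithm~\ref{alg:partial2tree} (adding all chords from the apex $x_1^i$), and the two-element bags $X_1,\ldots,X_s$ are exactly the remaining maximal cliques, namely the cut edges. Hence $\{X_i\}$ is precisely the set of maximal cliques of the chordal graph $G'$ produced by Algorithm~\ref{alg:partial2tree}, which is a partial-$2$-tree by Theorem~\ref{fill-in}. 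Since $V(G)=V(G')$ and $E(G)\subseteq E(G')$, it suffices to prove that $T$ is a \emph{clique tree} of $G'$, i.e.\ a tree on the maximal cliques in which, for every vertex, the cliques containing it induce a subtree: such a $T$ is automatically a tree decomposition of $G'$, hence of $G$, of width $\max_i|X_i|-1=2$.

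To prove $T$ is a clique tree I would proceed in three steps. Step~1 (connectivity of the auxiliary graph $G''$ of lines~19--20): using Theorem~\ref{decompositionsc2k+1}, $G$ is assembled from copies of $C_k$ and cut edges glued along cut vertices and shared edges; inside one cycle the consecutive fan bags $P^{ij},P^{i(j+1)}$ share the two vertices $\{x_1^i,x_{j+2}^i\}$ and are thus $G''$-adjacent by Type~1, giving a path, and wherever two pieces of $G$ meet they contribute a Type~2--5 edge between the relevant bags; connectedness of $G$ therefore forces connectedness of $G''$, so the minimum spanning tree $T$ spans $\{X_1,\ldots,X_{s+(k-2)l}\}$. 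Step~2 (coverage): every cycle vertex lies in some $P^{ij}$ and every endpoint of a cut edge lies in its $X_i$, so $\bigcup_i X_i=V(G)$; each cycle edge $x_l^i x_{l+1}^i$ lies in the unique fan bag containing both its ends (with $x_1^i x_2^i$ and $x_2^i x_3^i$ in $P^{i1}$, and $x_{k-1}^i x_k^i$ and $x_k^i x_1^i$ in $P^{i(k-2)}$), and each cut edge lies in its own two-element bag.

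Step~3 (running intersection) is where the \emph{minimum}-weight choice of $T$ is essential. Type~1 and Type~2 edges are exactly the pairs of maximal cliques of $G'$ that meet in two vertices (this uses the earlier structural results on how two induced $C_k$'s can intersect for odd $k$) and carry weights $1$ and $2$; Types~3--5 join cliques meeting in a single vertex and carry weights $3,4,5$. Running Kruskal's algorithm, it first takes a maximal forest among the Type~1 edges --- which within each cycle is the whole fan path $P^{i1}-\cdots-P^{i(k-2)}$, since those edges never close a cycle --- so $T$ contains every fan path intact; this already settles running intersection for each apex $x_1^i$ (contained in the entire fan path of $S_i$) and for each non-apex cycle vertex (contained in one or two consecutive fan bags). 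Then Type~2 edges are added where cycles share an edge, and Types~3--5 where pieces share a single vertex. Here one invokes that the block-and-separator structure of $G$ from Theorem~\ref{decompositionsc2k+1} is itself acyclic: distinct pieces are never glued in a cycle, so each gluing edge $\{X_i,X_j\}$ that $T$ retains joins two bags whose intersection is exactly the shared vertex (or, for Type~2, the shared edge), and attaching a new piece to the current tree keeps every vertex's bag set connected. Hence $(T,X)$ satisfies all three axioms and $\max_i|X_i|=3$, so $tw(G)\le 2$; combined with $tw(G)\ge tw(C_k)=2$ from an induced $C_k$ in $G$, this gives $tw(G)=2$.

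The main obstacle I expect is precisely Step~3: one must argue carefully that a minimum-weight spanning tree --- rather than an arbitrary spanning tree of $G''$ --- yields connected bag sets, which rests on the two facts that the weight-$1$ edges form a forest (so Kruskal preserves each fan path) and that the global gluing pattern of cycles and cut edges is a tree (so no vertex gets ``wrapped'' around a cycle of bags). A secondary point to handle with care is the boundary behaviour of $P^{i1}$ and $P^{i(k-2)}$, where two cycle edges occupy the same bag, and checking that a Type~2 gluing at a shared edge lands on the two fan bags that contain that edge.
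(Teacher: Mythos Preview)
Your outline is correct and reaches the same conclusion, but the route genuinely differs from the paper's. The paper verifies the three tree-decomposition axioms directly and handles the running-intersection property (your Step~3) by contradiction: it assumes some vertex $u$ has a disconnected bag set, picks $X_i,X_j$ in distinct components of that set, and then performs a case analysis on $|X_i|,|X_j|$ and on the weight of the edge $\{X_i,X_j\}$ in $G''$, showing in each case either that $\{X_i,X_j\}\in E(T)$ or that its absence forces an impossible configuration (e.g.\ $|X_s|=4$, or a violation of Theorem~\ref{construction}). There is no mention of clique trees, of the chordal completion $G'$, or of Kruskal's order of edge selection.

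Your argument instead recognises the bags as the maximal cliques of the fan-triangulated chordal supergraph, notes that the weight scheme makes a \emph{minimum} spanning tree prefer size-$2$ clique intersections over size-$1$ ones, and then uses the block-like structure from Theorem~\ref{decompositionsc2k+1} to conclude that Kruskal first absorbs every fan path (all Type~1 edges form a forest) and then glues pieces along bags that already contain the shared vertex or edge. This buys you a cleaner conceptual picture and an explanation of \emph{why} the weights are chosen as they are, at the cost of having to justify one extra point the paper sidesteps: $G''$ is \emph{not} the full clique-intersection graph (e.g.\ non-consecutive fan bags in the same cycle share the apex $x_1^i$ but are not joined in $G''$), so the standard ``maximum-intersection MST $=$ clique tree'' theorem does not apply verbatim; you recover it by first showing the fan paths are entirely in $T$, which you already do. Both approaches are valid; yours is more structural, the paper's more computational.
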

\begin{proof}
 Our claim is to prove that the graph $T$ is a tree and all the three conditions of tree decomposition are satisfied by $T$.
\begin{description}
\item[claim 1:] $T$ is a tree \\
It is clear from the construction of the graph $G'$, that the graph $G'$ is connected and hence $T$ is connected. Further, the graph $T$ is the minimum spanning tree of the graph $G''$, which proves $T$ is acyclic. Hence, $T$ is a tree.
\item[claim 2:] $V(G) = \bigcup_{t \in V(T)} X_t$. \\
Let us partition the vertex set of $G$ into $V_1$ and $V_2$, where $V_1$ denotes the set of vertices which takes part in some cycle of $G$ and $V_2$ denotes the set of vertices which does not take part in any cycle of $G$. It is evident from \emph{Steps 13-17} and from \emph{Steps 9-11}, that every element in $V_1$ and $V_2$ is added to $X_t$, for some $t$, respectively. Thus, $V(G) = \bigcup_{t \in V(T)} X_t$. 

\item[claim 3:] For every edge $\{u,v\} \in E(G)$, there is some $t \in V(T)$ such that $u, v \in X_t$. \\
Every edge, which takes part in some cycle of $G$, is added to $X_t$, for some $t$, by means of $P^{ij}$ in \emph{Steps 4-6} and every non-cycle edge is added to $X_t$, for some $t$, by means of $S$ in \emph{Step 7}. Hence, the claim.

\item[claim 4:] For every vertex $u \in V(G)$, the set $\{ t \in V(T)\vert u \in X_t \}$ induces a subtree of the tree $T$.\\
On the contrary, assume that there exist a vertex $u \in V(G)$ such that the set $\{ t \in V(T)\vert u \in X_t \}$ does not induce a subtree of the tree $T$. i.e., the graph induced by the vertex set $\{ t \in V(T)\vert u \in X_t \}$, say $H$, is not connected. Let $H_1,\ldots, H_l$, $l \geq 2$ be the connected components of $H$. Choose a vertex $X_i$ from $H_1$ and $X_j$ from $H_2$. 
\begin{itemize}
\item[$\bullet$] $\vert X_i \vert = \vert X_j \vert = 2$. The weight of the edge $\{X_i, X_j\} \in E(G'')$ is 5 and hence, this edge will not create a cycle. Thus, $\{X_i, X_j\} \in E(T)$, which is a contradiction as $H_1$ and $H_2$ are disjoint connected components in $H$.
\item[$\bullet$] $\vert X_i \vert = 2$ and $\vert X_j \vert = 3$. The weight of the edge $\{X_i, X_j\} \in E(G'')$ is 4 and hence, this edge will not create a cycle. Thus, $\{X_i, X_j\} \in E(T)$, which is a contradiction.
\item[$\bullet$] $\vert X_i \vert = \vert X_j \vert = 3$ and if the weight of the edge $\{X_i, X_j\} \in E(G'')$ is 1. Then, $\{X_i, X_j\} \in E(T)$ since $T$ is a minimum spanning tree of $G''$ and there can not be a cycle in $G''$ where the weights of all edges are 1.
 
\item[$\bullet$] $\vert X_i \vert = \vert X_j \vert = 3$ and if the weight of the edge $e = \{X_i, X_j\} \in E(G'')$ is 2. The edge $\{X_i, X_j\} \notin E(T)$, implies that, the edge $\{X_i, X_j\}$ is part of a cycle and every other edge in the cycle is of weight one or two. Let $P$ be the second shortest path from $X_i$ to $X_j$ in $G'$ and $V(X_i) \cap V(X_j) = \{u,v\}$. 
\begin{itemize}
\item[-] $(P,e)$ is $C_3$, say $(X_i,X_j,X_s)$. \\ By our assumption, $u \notin V(X_s)$. Since, the weight of $\{X_i,X_s\}$ is either $one$ or $two$, $\vert V(X_i) \cap V(X_s)\vert = 2$. Similarly,  $\vert V(X_j) \cap V(X_s)\vert = 2$. Thus, $\vert  V(X_s)\vert = 4$, which is a contradiction to the construction of $G'$.
\item[-] $(P,e)$ is $C_n, n \geq 4$, say $(X_i,X_{1},\ldots, X_{n-2},X_j)$. Since, $P$ is a shortest path and the weight of the edge $e$ is 2, $V(X_i) \cap V(X_{n-2}) = \emptyset$ and $V(X_j) \cap V(X_1) = \emptyset$. Thus, $u,v \notin V(X_1)$ and $u,v \notin V(X_{n-2})$. The weight of the edge $\{X_i,X_1\}$ is either 1 or 2, implies that, $\vert V(X_i) \cap V(X_1)\vert = 2$, which is a contradiction.
\end{itemize}

\item[$\bullet$] $\vert X_i \vert = \vert X_j \vert = 3$ and if the weight of the edge $\{X_i, X_j\} \in E(G'')$ is 3. Thus, the vertices $X_i$ belongs to some $C_k$, say $S_i$ and the vertices in $X_j$ belongs to some $C_k$, say $S_j$, $i \neq j$, and both $S_i$ and $S_j$ has a vertex intersection. The edge $\{X_i, X_j\} \notin E(T)$, implies that, the edge $\{X_i, X_j\}$ is part of a cycle and every other edge in the cycle is of weight one, two or three. Let $P$ be the second shortest path from $X_i$ to $X_j$ in $G'$ and $V(X_i) \cap V(X_j) = \{u\}$.

\begin{itemize}
\item[-] $(P,e)$ is $C_3$, say $(X_i,X_j,X_s)$. \\ By our assumption, $u \notin V(X_s)$. If $\vert V(X_i) \cap V(X_s)\vert = 2$ and $\vert V(X_j) \cap V(X_s)\vert = 2$ or $\vert V(X_i) \cap V(X_s)\vert = 1$ and $\vert V(X_j) \cap V(X_s)\vert = 2$, then $\vert  V(X_s)\vert = 4$, which is a contradiction to the construction of $G'$. If $\vert V(X_i) \cap V(X_s)\vert = 1$ and $\vert V(X_j) \cap V(X_s)\vert = 3$ or $\vert V(X_i) \cap V(X_s)\vert = 3$ and $\vert V(X_j) \cap V(X_s)\vert = 3$, then the cycle belongs to $X_s$, say $S_s$, $i \neq j \neq s$, contradicts the \emph{Theorem \ref{construction}}. The case where $\vert V(X_i) \cap V(X_s)\vert = 1$ and $\vert V(X_j) \cap V(X_s)\vert = 1$ is not possible by the construction of $G'$.

\item[-] $(P,e)$ is $C_n, n \geq 4$, say $(X_i,X_{1},\ldots, X_{n-2},X_j)$. Since, $P$ is a shortest path and the weight of the edge $e$ is 3, $u$ does not belongs to any internal vertices of $P$. If the weight of the edges $\{X_i,X_1\}$ and $\{X_1,X_2\}$ are 1 and 2 or 2 and 1 or 2 and 2, respectively, then there exists an edge $\{X_i,X_2\}$, which is a contradiction to the minimality of $P$. If the weight of the edges $\{X_i,X_1\}$ and $\{X_1,X_2\}$ are 1 and 1 or 1 and 3 or 3 and 1 or 3 and 3, then the cycle $S_j$ contradicts the \emph{Theorem \ref{construction}}. 
\end{itemize}
\end{itemize}

All the above cases gives the contradiction, hence the claim. $\hfill \qed$

\end{description}
\end{proof}

\noindent Now, we present an algorithm which gives a tree decomposition for $SC_k, k =2m+4, m \geq 1$ graphs, where $\max_{t \in V(T)} \vert X_t \vert = 3$.\\

\noindent \textbf{Outline of the algorithm:}
The algorithm first decomposes the graph $G$ into connected components where each component is a cut edge or a $C_k$ or a CAGE. Next, the algorithm finds the tree decomposition for each connected component. Now, the algorithm combine the components based on its intersection and results in a graph $G'$. Finally, the algorithm finds a minimum spanning tree $T$ of $G'$.

\begin{algorithm}
\caption{Tree Decomposition for $SC_{k}, k= 2m+4, m \geq 1$, graphs}
\begin{algorithmic}[1]
\STATE{\textbf{Input:} $SC_k$ graph $G$ with cycles, $k= 2m+4, m \geq 1$.}
\STATE{\textbf{Output:} Tree decomposition of $G$}
\STATE{Let $\{v_1, \ldots, v_n\}$ be the vertex set of $G$ and let $p=1$.}
\STATE{Decompose the graph $G$ into connected components as per \emph{Theorem \ref{decompositionsc2k}} and let $G_1, \ldots, G_s$ be the connected components in the decomposition.}
\FOR{$i=1$ to $s$}
	\IF{$G_i$ is an edge}
		\STATE{$X_p = V(G_i)$ and $p=p+1$}
	\ELSIF{$G_i$ is a $C_k$}
		\STATE{Let $G_i = (x_{1}, \ldots, x_{k})$ be an induced $C_k$ in $G$ define $X_p$ as follows.
\begin{itemize}
\item[$\bullet$] $X_p = \{x_1,x_2,x_3\}$ and $p=p+1$
\item[$\bullet$] $X_p = \{x_1,x_3,x_4\}$ and $p=p+1$
\item[$\bullet$] $X_p = \{x_1,x_4,x_5\}$ and $p=p+1$, $\ldots$, $X_p = \{x_1,x_{k-1},x_k\}$ and $p=p+1$.
\end{itemize}}
	\ELSIF{$G_i$ is a CAGE}
		\STATE{Collect the vertices in $G_i$ whose degree is equal to $\Delta (G_i)$. CAGE has exactly two such vertices, say $w,z$.}
		\STATE{Let $\Delta (G_i) = s$. Then, $w$ and $z$ have $s$ distinct paths of length $\frac{k}{2}+1$ in $G_i$. Let $(u_{1}^{j}, \ldots, u_{\frac{k}{2}-1}^{j})$ be the $j^{th}$ path between $w$ and $z$, $1 \leq j \leq s$.}
			\FOR{$j=1$ to $s$}
				\STATE{Define $X_p$ as follows:
\begin{itemize}
\item[$\bullet$] $X_p = \{w,z,u_{1}^{j}\}$ and $p=p+1$
\item[$\bullet$] $X_p = \{z,u_{1}^{j}, u_{2}^{j}\}$ and $p=p+1$
\item[$\bullet$] $X_p = \{z,u_{2}^{j}, u_{3}^{j}\}$ and $p=p+1$, $\ldots$, $X_p = \{z,u_{\frac{k}{2}-1}^{j}, u_{\frac{k}{2}}^{j}\}$ and $p=p+1$
\end{itemize}		
		}
			\ENDFOR
\ENDIF
\ENDFOR
\STATE{For each $X_i$, $s+1\leq i \leq s+(k-2)l$, relabel the vertices in $X_i$ by the labels given for the vertices during the input.}

\STATE{Construct a graph $G'$ with vertex set, $V(G') = \{X_i \mid 1 \leq i \leq p-1\}$ and two vertices $X_i, X_j \in V(G')$, $i \neq j$ are adjacent if any one of the following types is true:
\begin{itemize}
\item[Type 1:] $\vert X_i \vert = \vert X_j \vert = 3$ and $X_i, X_j \in V(G_i)$, for some $1 \leq m \leq s$, then, $\vert V(X_i) \cap V(X_j) \vert = 2$. Let $F_1,\ldots, F_s$ be the connected components of the graph after augmenting Type 1 edges. 
\item[Type 2:] $\vert X_i \vert = \vert X_j \vert = 3$ and $X_i \in V(F_r), X_j \in V(F_m)$, for some $1 \leq m, r \leq s$ and $m \neq r$ and $\vert E(G_m) \cap E(G_r) \vert = 1$, and if there are no edges between the vertices of $F_m$ and $F_r$ then, $\vert V(X_i) \cap V(X_j) \vert = 2$.
\item[Type 3:] $\vert X_i \vert = \vert X_j \vert = 3$ and $X_i \in V(F_r), X_j \in V(F_m)$, for some $1 \leq m, r \leq s$ and $m \neq r$ and $\vert V(G_m) \cap V(G_r) \vert = 1$, and if there are no edges between the vertices of $F_m$ and $F_r$ then, $\vert V(X_i) \cap V(X_j) \vert = 1$.
\item[Type 4:] $\vert X_i \vert = 3$ and $\vert X_j \vert = 2$ and $\vert V(X_i) \cap V(X_j) \vert = 1$.
\item[Type 5:] $\vert X_i \vert = 2$ and $\vert X_j \vert = 2$ and $\vert V(X_i) \cap V(X_j) \vert = 1$.
\end{itemize}}
\STATE{Convert the unweighted graph $G'$ to a weighted graph $G''$ by assigning the weight $i$ for the edges of type $i$, $1 \leq i \leq 5$.}
\STATE{Find a minimum spanning tree $T$ for the weighted graph $G''$}
\STATE{\emph{Return $T$}}
\end{algorithmic}
\label{treedecompositionsc2k}
\end{algorithm}

\newpage

\noindent \textbf{\large{Trace of the algorithm}}\\
\noindent We trace the steps of the Algorithm \ref{treedecompositionsc2k} in \emph{Figure \ref{fig:tracetreedecompositionsc2k}}.

\begin{figure}[H]
\centering
\includegraphics[scale=0.26]{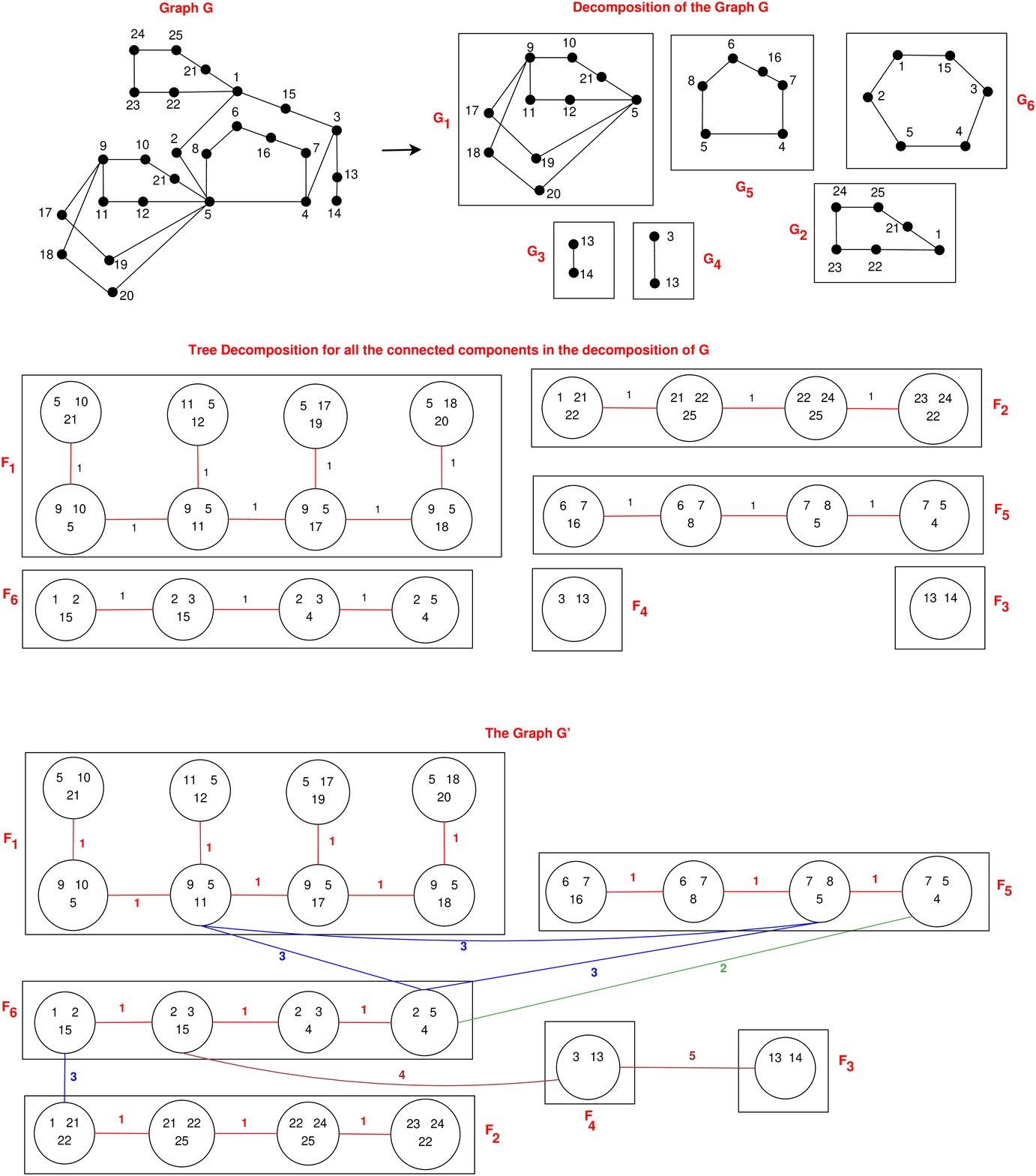} 
\vspace{-0.2cm}
\end{figure}

\begin{figure}[h]
\centering
\includegraphics[scale=0.26]{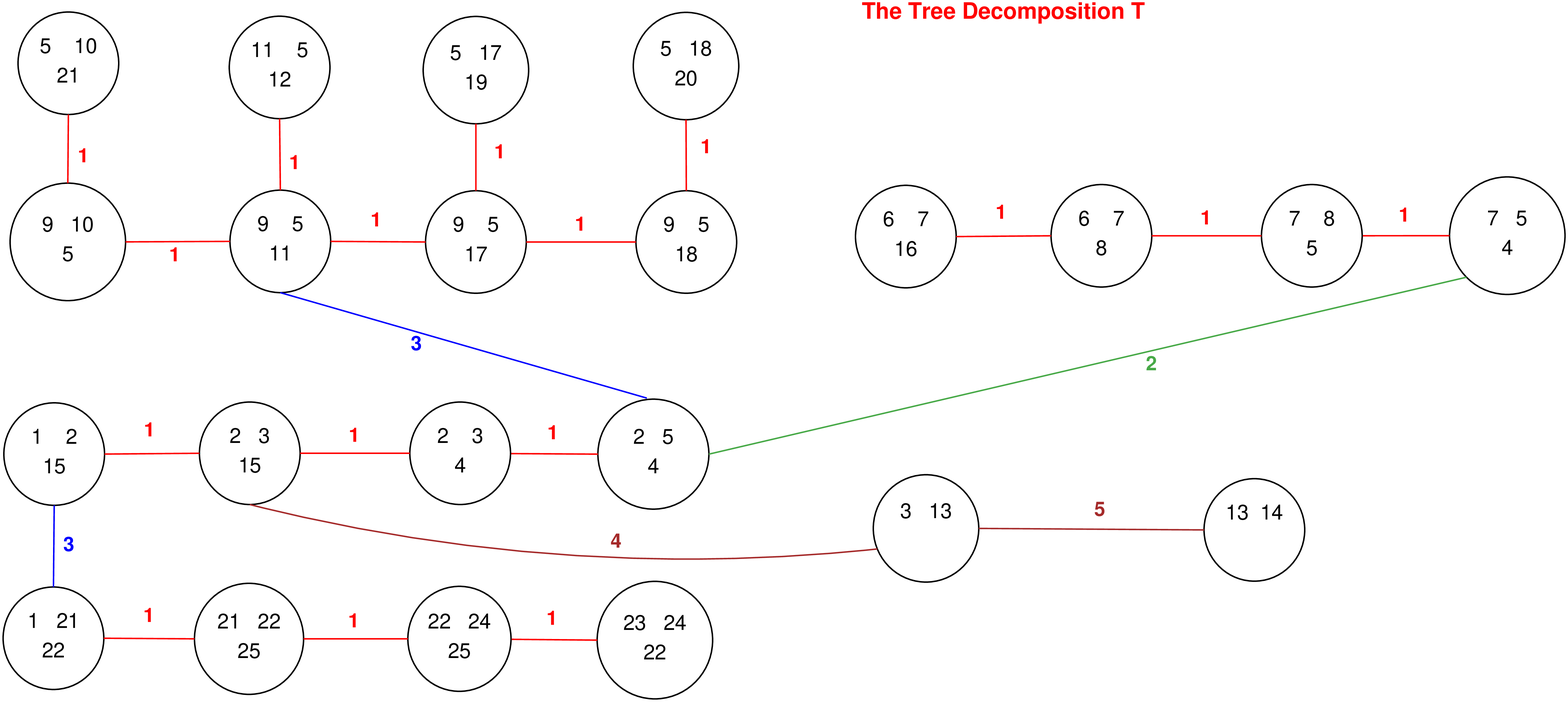} 
\vspace{-0.1cm}
\caption{Tree decomposition of an $SC_6$ graph.}
\label{fig:tracetreedecompositionsc2k}
\end{figure}

\newpage

\begin{theorem}
The graph, $T$, obtained from the \emph{Algorithm \ref{treedecompositionsc2k}} is a tree decomposition of $G$ such that $tw(G) = 2$.
\label{prooftreedecomp2k}
\end{theorem}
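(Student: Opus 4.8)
The plan is to mirror the proof of Theorem \ref{prooftreedecomp2k+1} almost verbatim, since Algorithm \ref{treedecompositionsc2k} is the even-$k$ analogue of Algorithm \ref{treedecompositionsc2k+1}: it produces the same kind of object (a collection of bags $X_i$, each of size $2$ or $3$, joined by edges of five weighted types) and the only genuinely new feature is that a $C_k$ may be shared along a path of length $\tfrac{k}{2}+1$ inside a CAGE, which is exactly what the edges of Type~$1$ within a single $G_i$ and the bags built in \emph{Steps 11--18} account for. So I would structure the proof as four claims: (1) $T$ is a tree; (2) $V(G)=\bigcup_{t\in V(T)}X_t$; (3) every edge of $G$ lies in some bag; (4) for every vertex $u$, the bags containing $u$ induce a subtree of $T$.

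For Claim~1, I would note that $G''$ is connected by construction (the decomposition components of $G$ overlap in a vertex or an edge by Theorem \ref{decompositionsc2k}, the bags within each component are chained by Type~$1$ edges, and the components are linked by Type~$2$/$3$/$4$/$5$ edges), hence its minimum spanning tree $T$ is a spanning tree. Claims~2 and~3 are the same bookkeeping as before: partition $V(G)$ into vertices that lie on some cycle and those that do not; the former are placed into size-$3$ bags in \emph{Steps 9--10} (for a plain $C_k$) and \emph{Steps 14--18} (for a CAGE), the latter together with all cut edges are placed into size-$2$ bags in \emph{Steps 6--7}; and every edge of $G$ is either an edge of a $P_3$ appearing in one of those size-$3$ bags or a cut edge appearing as a size-$2$ bag.

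Claim~4 is the crux and the main obstacle, exactly as in Theorem \ref{prooftreedecomp2k+1}. I would argue by contradiction: if the bags containing some vertex $u$ do not induce a subtree of $T$, pick bags $X_i,X_j$ with $u\in X_i\cap X_j$ lying in different components of that induced subgraph, so the edge $\{X_i,X_j\}\notin E(T)$ while $u$ witnesses $V(X_i)\cap V(X_j)\neq\emptyset$. One then does the weight case analysis on $\{X_i,X_j\}$: weight $5$ (both bags size $2$) and weight $4$ (sizes $2$ and $3$) force $\{X_i,X_j\}\in E(T)$ since an edge of maximum weight on a cycle can be dropped, contradiction; weight $1$ likewise, since a cycle all of whose edges have weight $1$ would have to sit inside a single CAGE-component and contradict the $P_3$-chain structure; for weights $2$ and $3$ one looks at the second-shortest $X_i$–$X_j$ path $P$ in $G'$, and in the $C_3$ case shows a common neighbour bag would have size $4$ (impossible by the construction of $G'$) and in the $C_n$, $n\ge 4$, case shows either the minimality of $P$ is violated or Theorem \ref{construction} is contradicted. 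The one place needing a little care beyond the odd case is that a weight-$3$ or weight-$2$ cycle could now a priori pass through a CAGE that genuinely contains two $C_k$'s overlapping in a $(\tfrac{k}{2}+1)$-path; but such an overlap contributes only Type~$1$ (weight $1$) edges inside the component $F_m$, and the inter-component edges of Type~$2$/$3$ are present \emph{only when there is no edge between $F_m$ and $F_r$}, so no extra cycle of weight $\le 3$ is created and the same contradictions go through. Having established all four claims, $\max_t|X_t|=3$ gives $tw(G)\le 2$, and since $G$ contains $C_k$ we have $tw(G)\ge tw(C_k)=2$, hence $tw(G)=2$. $\hfill\qed$
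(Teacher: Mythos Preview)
Your proposal is correct and follows essentially the same approach as the paper: the paper's own proof simply observes that each decomposition piece (cut edge, $C_k$, or CAGE) yields a tree decomposition in Steps~5--20, then defers the verification that the resulting $T$ is a tree decomposition to Theorem~\ref{prooftreedecomp2k+1}, noting $\max_t |X_t|=3$. You do exactly this, in fact supplying more of the detail (the four claims and the weight case analysis for Claim~4, including the remark about CAGE overlaps only contributing Type~1 edges) than the paper itself spells out.
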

\begin{proof}
In the algorithm, we decompose the graph $G$ into connected components, where each connected component is a cut edge or a $C_k$ or a CAGE. It is clear that, for each connected component, the graph constructed in \emph{Steps 5-20} is a tree decomposition of the respective component. Now, we add edges between components based on the conditions in \emph{Step 20} and we make the unweighted graph into a weighted graph $G''$ by giving weights to the edges. Finally, minimum spanning tree $T$ is computed for the graph $G''$. The proof for $T$ is a tree decomposition is similar to the proof in \emph{Theorem \ref{prooftreedecomp2k+1}}. Note that, $\max_{t \in V(T)} \vert X_t \vert = 3$ by \emph{Steps 5-15}.$\hfill \qed$
\end{proof}

\begin{corollary}
\label{partial}
Let $G$ be a connected $SC_k$, $k\geq 5$, graph. Then, $G$ is a partial-2-tree.
\end{corollary}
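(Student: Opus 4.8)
The plan is to obtain the corollary directly from the decomposition and tree-decomposition results already proved, together with the standard equivalence that a graph is a partial-2-tree if and only if its treewidth is at most $2$ (see \cite{kloks}). First I would handle the degenerate case: if the $SC_k$ graph $G$ is cycle-free, then $G$ is a tree, so $tw(G) \leq 1$ and $G$ is trivially an edge subgraph of a 2-tree; hence $G$ is a partial-2-tree.

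For an $SC_k$ graph $G$ that contains a cycle, I would split on the parity of $k$. If $k = 2m+3$, Theorem \ref{prooftreedecomp2k+1} produces a tree decomposition $(T,X)$ of $G$ with $\max_{t \in V(T)} \vert X_t \vert = 3$, i.e.\ of width $2$; if $k = 2m+4$, Theorem \ref{prooftreedecomp2k} does the same. In both cases $tw(G) \leq 2$, and together with $tw(C_n) = 2$ this yields $tw(G) = 2$. Applying the equivalence above converts this into the assertion that $G$ is a partial-2-tree.

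Alternatively, and this is the self-contained route I would actually write out, I would invoke Theorem \ref{fill-in}: Algorithm \ref{alg:partial2tree} decomposes $G$ into components that are cut edges, $C_k$'s, and copies of $CAGE(l,\frac{k}{2}+1)$, and then augments edges so that in each $C_k$ a single vertex is joined to all the others (turning it into a fan, which is a 2-tree) and in each cage the maximum-degree vertex is joined to all the others (again producing a 2-tree); since the components are glued along a single vertex or a single edge, the union $G'$ of these augmented components is again a 2-tree, and $G$ is a spanning subgraph of $G'$. By the definition of partial-2-tree, $G$ is therefore a partial-2-tree.

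I do not expect any real obstacle here, since all the structural work is contained in Theorems \ref{fill-in}, \ref{prooftreedecomp2k+1}, and \ref{prooftreedecomp2k}. The only points needing a line of justification are the passage between the treewidth statement and the ``partial-2-tree'' statement, and the observation that a clique-sum of two 2-trees over a clique of size at most $2$ (which is exactly how the augmented components are attached, by Theorems \ref{decompositionsc2k+1} and \ref{decompositionsc2k}) is again a 2-tree; for the former I would cite the standard fact, and the latter is immediate from the definition of a $k$-tree.
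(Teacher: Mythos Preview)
Your proposal is correct and matches the paper's own proof, which simply states that the corollary ``trivially follows from Theorem \ref{prooftreedecomp2k+1} and Theorem \ref{prooftreedecomp2k}.'' Your additional self-contained route via Theorem \ref{fill-in} is also valid (indeed that theorem already asserts the output is a partial-2-tree), so either argument suffices.
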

\begin{proof}
Trivially follows from \emph{Theorem \ref{prooftreedecomp2k+1}} and \emph{Theorem \ref{prooftreedecomp2k}}.$\hfill \qed$
\end{proof}

\begin{theorem}
\label{coloring}
Let $G$ be a connected $SC_k, k \geq 5$, graph. The chromatic number of $G$ is at most \emph{three}. i.e., $\chi(G) \leq 3$. Further, if $k$ is odd then $\chi(G) = 3$ and  if $k$ is even then $\chi(G) = 2$.
\end{theorem}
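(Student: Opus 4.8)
The plan is to leverage the construction theorem (\emph{Theorem \ref{construction}}) together with the earlier observations on treewidth, namely that $G$ is a partial-2-tree. The first step is to handle the trivial case: if $G$ is cycle-free, then $G$ is a tree, so $G$ is bipartite and $\chi(G) \leq 2$; if moreover $G$ has at least one edge, $\chi(G) = 2$, and $G = K_1$ gives $\chi(G) = 1$. So assume $G$ has at least one induced cycle, hence at least one $C_k$. Since every induced cycle of $G$ is a $C_k$ and $C_k$ needs $3$ colors when $k$ is odd and $2$ colors when $k$ is even, we immediately get the lower bounds $\chi(G) \geq 3$ (for $k$ odd) and $\chi(G) \geq 2$ (for $k$ even). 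The remaining work is the matching upper bound $\chi(G) \leq 3$, and $\chi(G) \leq 2$ in the even case.

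For the upper bound I would induct on the length of the VCO (equivalently, on the number of construction steps in \emph{Theorem \ref{construction}}), coloring the pendant structure that is added at each step. In the base case $G = K_1$ or $G = C_k$, which is $2$-colorable if $k$ is even and $3$-colorable if $k$ is odd. For the inductive step, suppose $G'$ is obtained from an $SC_k$ graph $G$ (already properly colored with the claimed number of colors) by one of the rules (iii)--(vi). Rule (iii) attaches a pendant vertex $v$ adjacent to a single $u$; give $v$ any color different from the color of $u$ — since we have at least $2$ colors this is always possible. Rules (iv) and (v) attach a path of new internal vertices joining $u$ to $u$ (a new $C_k$ through $u$) or joining an edge $\{u,v\}$ (a new $C_k$ through $\{u,v\}$); in both cases only the endpoint colors are fixed, and a path of the appropriate length between two prescribed endpoint colors can always be properly colored with $2$ colors when $k$ is even (the two endpoints at even distance get the same color, and the parities work out because the cycle has even length) and with $3$ colors when $k$ is odd (use the third color once to absorb the parity defect). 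Rule (vi) (only for even $k$) attaches a new path of $\frac{k}{2}-1$ internal vertices between $u_1$ and $u_{k/2+1}$, which are at distance $\frac{k}{2}$ along an existing path; since $k$ is even, $\frac{k}{2}$ and $\frac{k}{2}-1$ have opposite parities, so in the existing $2$-coloring $u_1$ and $u_{k/2+1}$ receive opposite colors, and a path with $\frac{k}{2}-1$ internal edges — i.e. an odd number of steps shorter, $\frac{k}{2}$ edges total — correctly connects opposite colors; hence the $2$-coloring extends. In every rule the newly added vertices have all their neighbors' colors already determined and lie on a path, so extending the coloring is an elementary path-coloring argument.

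Alternatively, and perhaps more cleanly, I would invoke \emph{Corollary \ref{partial}}: every $SC_k$ graph is a partial-2-tree, hence $2$-degenerate, so $\chi(G) \leq 3$ is immediate, settling the odd case entirely. For the even case one still needs $\chi(G) \leq 2$, i.e. that $G$ is bipartite, and this follows from the fact that $G$ has no odd cycle at all: every induced cycle is a $C_k$ with $k$ even, and in a graph where every \emph{induced} cycle is even, every cycle is even (a shortest odd closed walk would be an induced odd cycle), so $G$ is bipartite. Combining, $\chi(G) = 2$ when $k$ is even and $\chi(G) = 3$ when $k$ is odd (with the degenerate exceptions $K_1$ and trees noted above).

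The main obstacle is the even-$k$ upper bound $\chi(G)\le 2$: one must be careful that attaching new cycles never forces an odd cycle. The construction-theorem viewpoint makes this transparent because every added path has exactly the right length parity relative to the distance between its attachment points (this is precisely the parity bookkeeping built into rules (v) and (vi)), but one should state explicitly why the pre-existing coloring assigns the attachment endpoints colors of the correct relative parity — this is where the "every induced cycle has length $k$" hypothesis is doing real work, and it is cleanest to phrase it via the global statement "every cycle in $G$ is even," which reduces the whole even case to the standard characterization of bipartite graphs.
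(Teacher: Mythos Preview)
Your proposal is correct. Your alternative argument---invoking \emph{Corollary \ref{partial}} (partial-2-tree, hence $2$-degenerate, hence $\chi(G)\le 3$) for the odd case, and the observation that ``every induced cycle even $\Rightarrow$ every cycle even $\Rightarrow$ bipartite'' for the even case---is exactly what the paper does: the even case is dispatched in one line (``if $k$ is even, then $G$ is bipartite''), and the partial-2-tree route is explicitly mentioned as an alternative proof.

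Where you genuinely diverge is in your \emph{primary} argument for the odd upper bound. You induct on the VCO (\emph{Theorem \ref{construction}}), extending a proper coloring across each newly attached pendant path. The paper instead uses \emph{Lemma \ref{matching}}: in the BFS tree $T$ of an $SC_{2k+1}$ graph the non-tree edges form a matching, so picking one endpoint from each non-tree edge yields an independent set $S$; then $G\setminus S$ is a subforest of $T$ and is $2$-colorable, and $S$ takes the third color. The paper's approach buys an explicit, global $3$-coloring in one shot and showcases the matching lemma; your inductive approach is self-contained (it does not need \emph{Lemma \ref{matching}}) and makes the parity bookkeeping for rules (iv)--(vi) transparent, though as you yourself note the rule-(vi) step is fiddly and is more cleanly subsumed by the global bipartiteness argument. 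One minor slip: in your rule-(vi) discussion you assert $u_1$ and $u_{k/2+1}$ always receive \emph{opposite} colors, but in fact they agree when $k/2$ is even; what actually matters (and what makes the extension work) is that both the old and the new $u_1$--$u_{k/2+1}$ paths have length $k/2$, so the parity constraint is identical. Your closing paragraph already identifies the clean fix.
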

\begin{proof}
If $k$ is even, then $G$ is bipartite and hence, $\chi(G) \leq 2$. If $k$ is odd: 
let $S$ be the maximum independent set in the graph induced on the non-tree edges of T. From \emph{Lemma \ref{matching}} (\emph{Section 5.2}), it follows that the set of non-tree edges in T forms a matching. Thus, $\chi (G\backslash S) \leq 2$ and $S$ can be colored using the third color. Hence, $G$ requires at most three colors. Therefore, we can conclude $\chi(G) \leq 3$ if $k$ is odd. We can also prove the theorem from the fact that $SC_k$, $k\leq 5$, graphs are partial-2-trees.  $\hfill \qed$
\end{proof}

\section{Conclusions and Further Research}
In this paper, we have investigated strictly chordality $k$ graphs, graphs in which every induced cycle is of length $k$ or cycle-free, from both structural and algorithmic perspectives. We have obtained nice structural results based on the structure of the minimal vertex separators. Further, we have shown that testing $SC_k$ graphs are polynomial-time solvable using a special ordering, namely Vertex Cycle Ordering (VCO). Other results include Coloring, Hamiltonicity and Treewidth. Classical problems such as Vertex Cover, Odd Cycle Transversal, Feedback Vertex Set etc., are yet to be explored restricted to $SC_k$ graphs. 
\nocite{*}

\end{document}